\def\BibTeX{{\rm B\kern-.05em{\sc i\kern-.025em b}\kern-.08em
    T\kern-.1667em\lower.7ex\hbox{E}\kern-.125emX}}
\theoremstyle{plain}
\newcolumntype{P}[1]{>{\raggedright\arraybackslash}p{\dimexpr#1\linewidth-2\tabcolsep}}
\theoremstyle{plain} 
\newtheorem{theorem}{Theorem}
\newtheorem{proposition}[theorem]{Proposition}
\newtheorem{corollary}[theorem]{Corollary}
\theoremstyle{definition}
\newtheorem{definition}[theorem]{Definition}
\newtheorem{example}[theorem]{Example}
\newtheorem{remark}[theorem]{Remark}
\def\IQC{\mathsf{IQC}}
\def\QFS{\mathsf{QFS}}
\def\LJ{\mathbf{LJ}}
\def\LK{\mathbf{LK}}
\newcommand{\calF}{\mathcal{F}}
\newcommand{\calM}{\mathcal{M}}
\newcommand{\calD}{\mathcal{D}}
\def\phi{\varphi}
\newcommand{\limp}{\rightarrow}
\def\IQC{\mathsf{IQC}}
\def\SW{\mathrm{SW}}
\def\CD{\mathrm{CD}}
\def\ED{\mathrm{ED}}
\def\WF{\mathrm{WF}}
\def\cWF{\mathrm{cWF}}
\def\FDS{\mathrm{FDS}}
\def\QFS{\mathsf{QFS}}
\newcommand{\ipc}{{\sf IPC}}
\newcommand{\iqc}{{\sf IQC}}
\newcommand{\cqc}{{\sf CQC}}
\newcommand{\forc}{\Vdash}
\newcommand{\calc}{{\cal C}}
\newcommand{\De}{\Delta}
\newcommand{\Ga}{\Gamma}
\newcommand{\imp}{\rightarrow} 
\newcommand{\ifff}{\leftrightarrow}
\newcommand{\en}{\wedge} 
\newcommand{\of}{\vee} 
\newcommand{\E}{\exists}
\newcommand{\A}{\forall}
\newcommand{\LKpp}{{\bf LK}^{++}}
\newcommand{\LJpp}{{\bf LJ}^{++}}
\newcommand{\SA}{{{S}_A}} 
\newcommand{\SP}{{{S}_P}} 
\begin{document}

\title{Skolemization In Intermediate Logics
}


\author{Matthias Baaz$^*$, Mariami Gamsakhurdia\footnote{TU Wien, Austria, baaz@logic.at, mariami.gamsakhurdia.it@gmail.com}, \\Rosalie Iemhoff\footnote{Utrecht University, The Netherlands, r.iemhoff@uu.nl}, Raheleh Jalali\footnote{University of Bath, The UK, rjk53@bath.ac.uk}}

\maketitle

\begin{abstract}
Skolemization, with Herbrand’s theorem, underpins automated theorem proving and various transformations in computer science and mathematics. Skolemization removes strong quantifiers by introducing new function symbols, enabling efficient proof search algorithms. We characterize intermediate first-order logics that admit standard (and Andrews) Skolemization. These are the logics that allow classical quantifier shift principles. For some logics not in this category, innovative forms of Skolem functions are developed that allow Skolemization. Moreover, we analyze predicate intuitionistic logic with quantifier shift axioms and demonstrate its Kripke frame-incompleteness. These findings may foster resolution-based theorem provers for non-classical logics.  This article is part of a larger project investigating Skolemization in non-classical logics.
\end{abstract}
\noindent \textbf{Keywords:}
Skolemization, intermediate logics, quantifier shifts, Kripke frames, sequent calculus.

\section{Introduction}
Skolemization, introduced by Skolem in his seminal 1920 paper \cite{Skolem}, is a fundamental method in mathematics and computer science, with deep connections to both predicate and propositional logic. By replacing strong quantifiers (positive universal or negative existential quantifiers) with fresh function symbols, Skolemization simplifies first-order logic statements, making it particularly valuable in automated theorem proving and resolution methods \cite{automated,automated2,automated3}. The resolution method, one of the most prominent techniques for classical automated theorem proving in first-order logic, relies on the elimination of strong quantifiers through the introduction of Skolem functions, the deletion of weak quantifiers (negative universal and positive existential quantifiers), and the transformation of formulas into clause form—a conjunction of disjunctions of literals \cite{baaz2011methods}. Together with Herbrand’s theorem, Skolemization bridges predicate and propositional logic, serving as a foundation for establishing the decidability of theories and for practical applications in computer science.

The most widely used form of Skolemization is structural Skolemization (Definition \ref{Def: St Skolem}), in which strong quantifiers are replaced by Skolem functions that depend on all weak quantifiers within whose scope the replaced quantifier occurs. Skolemization, together with Herbrand's theorem, forms the basis of many transformations in computer science, especially in automated theorem proving \cite[Chapters 5,6]{automated}. In mathematics, Skolemization is viewed as a transformation of axioms to guarantee the existence of term models, while in proof theory, it is treated as a transformation of formulas that preserves validity. From this proof-theoretic perspective, Skolemization replaces strong quantifiers with function symbols, yielding a validity-equivalent formula. This transformation can also be applied to proofs: given a proof $\pi$ of a formula $A$, one can construct a proof $\pi'$ of the Skolemized version of $A$ by instantiating free variables.

Although Skolemization is sound and complete for classical logic, its applicability diminishes in non-classical settings, such as intuitionistic predicate logic ($\mathsf{IQC}$). While Skolemization is sound for many intermediate logics, it is often incomplete, as in the case of $\mathsf{IQC}$, where many classical quantifier shift principles fail. Even for prenex sequents (formulas in prenex normal form), Skolemization is sound and complete in intuitionistic logic \cite{Mints66,Mints72}, but this result does not extend to all sequents, since not every sequent in intuitionistic logic has a prenex form. This raises the important question of which intermediate first-order logics admit Skolemization and under what conditions.

In this article, we characterize the intermediate first-order logics that admit standard Skolemization, showing that these are precisely the logics that allow all classical quantifier shift principles (Theorem \ref{mainth}). Specifically, we introduce a logic $\QFS$ that contains these principles and show that Skolemization is sound and complete for $L$ if and only if $L$ contains $\QFS$. To the best of our knowledge, this is the first such characterization of Skolemizable logics, despite the technique's long history dating back to 1920. We extend our results to Andrews Skolemization (introduced in \cite{Andrews1,Andrews2} and given in Definition \ref{Def: And Skolem}), a variant of standard Skolemization in which Skolem functions depend only on weak quantifiers of the scope which bind in the subsequent formula. We show, additionally, that $\QFS$ is not the end of the story for Skolemization. We provide an example of a first-order intermediate logic that admits Skolemization with a different type of Skolem function, even though it does not satisfy all classical quantifier shifts (Example \ref{SL}).

We rely on proof-theoretic methods to achieve this result because, as established in Theorem \ref{thm: incompleteness}, Kripke-frame completeness fails for $\QFS$, ruling out the usual semantic approaches.
These limitations in semantic frameworks necessitate a deeper exploration of Skolemization from a proof-theoretic standpoint.

This article is part of a larger research project on Skolemization in non-classical logics. The first aim of this project is to characterize first-order logics that admit standard and Andrews Skolemization. The second aim is to develop innovative forms of Skolem functions for logics where standard Skolemization is not applicable (Section \ref{sec: SL}). The outcomes of this research have the potential to advance resolution-based theorem provers for non-classical logics, contributing to the broader intersection of logic, computer science, and proof theory.

\section{Preliminaries}
We work with the first-order language $\mathcal{L}$, which consists of infinitely many variables, denoted by $x, y, \dots$, possibly with indices, the connectives $\wedge, \vee, \to$, constants $\bot, \top$, quantifiers $\forall, \exists$, infinitely many $n$-ary predicate and $n$-ary function symbols,  for any natural number $n \geq 0$. Predicate symbols with arity $0$ are also called \emph{propositional variables}. \emph{Terms} and \emph{formulas} are defined inductively as usual. We define $\neg A$ as $A \to \bot$. The occurrence of $x$ in $A$ is \emph{bound} if{f} it is in the scope of a quantifier; otherwise, it is called \emph{free}. A formula is \emph{closed} if it contains no free variables. 
Denote the Intuitionistic Propositional Logic by $\mathsf{IPC}$ and Intuitionistic Predicate Logic by $\mathsf{IQC}$; the corresponding classical systems are denoted $\mathsf{CPC}$ and $\mathsf{CQC}$.

\begin{definition}
Each (predicate) \emph{logic} $L$ is identified with the set of its consequences, 
which is a set of formulas closed under the axioms and rules of $\iqc$.  
We use $\phi \in L$ and $L \vdash \phi$ interchangeably. For a logic $L$ and a set of formulas $\{A_i\}_{i \in I}$, by $L+\{A_i\}_{i \in I}$ we mean the smallest logic containing $L$ and all the formulas in $\{A_i\}_{i \in I}$.  
 A predicate logic $L$ is called \emph{intermediate} when $\mathsf{IQC} \subseteq L \subseteq \mathsf{CQC}$.
 \end{definition}

\subsection{Sequent Calculi}

\begin{definition}
A \emph{Sequent} $S$ is an expression $\Gamma \Rightarrow \Delta$, where $\Gamma$, the \emph{antecedent}, and $\Delta$,  the \emph{succedent} of the sequent, are finite multisets of formulas. The intended meaning of $S$ is $\bigwedge \Gamma \to \bigvee \Delta$. By $G \vdash^{\pi} S$ we mean the sequent $S$ is provable in the calculus $G$ by the proof $\pi$ and call $S$ the \emph{end-sequent}. 
\end{definition}


\begin{table}[t]
 \centering
\footnotesize \begin{tabular}{c c}
\vspace{5pt} 

$p \Rightarrow p$ \; \scriptsize{(Ax)} & $\bot \Rightarrow $\; \scriptsize{($\bot$)}\\

\vspace{5pt}

\AxiomC{$\Gamma \Rightarrow \Delta$}
 \RightLabel{\scriptsize{$(L w)$} }
\UnaryInfC{$A, \Gamma \Rightarrow \Delta$}
\DisplayProof
&  
\AxiomC{$\Gamma \Rightarrow \Delta$}
 \RightLabel{\scriptsize{$(R w)$} }
\UnaryInfC{$\Gamma \Rightarrow \Delta, A$}
 \DisplayProof\\

\vspace{5pt}

\AxiomC{$A, A, \Gamma \Rightarrow \Delta$}
 \RightLabel{\scriptsize{$(L c)$} }
\UnaryInfC{$A, \Gamma \Rightarrow \Delta$}
\DisplayProof
&  
\AxiomC{$\Gamma \Rightarrow \Delta, A, A$}
 \RightLabel{\scriptsize{$(R c)$} }
\UnaryInfC{$\Gamma \Rightarrow \Delta, A$}
 \DisplayProof\\
 
\vspace{5pt}

\AxiomC{$A,\Gamma \Rightarrow \Delta$}
 \RightLabel{\scriptsize{$(L \wedge_1)$} }
\UnaryInfC{$A \wedge B, \Gamma \Rightarrow \Delta$}
\DisplayProof
&
\AxiomC{$B,\Gamma \Rightarrow \Delta$}
 \RightLabel{\scriptsize{$(L \wedge_2)$} }
\UnaryInfC{$A \wedge B, \Gamma \Rightarrow \Delta$}
\DisplayProof\\
 
\vspace{5pt}
\AxiomC{$\Gamma \Rightarrow \Delta, A$}
 \AxiomC{$\Gamma \Rightarrow \Delta, B$}
 \RightLabel{\scriptsize{$(R \wedge)$} }
\BinaryInfC{$\Gamma \Rightarrow \Delta, A \wedge B$}
 \DisplayProof
 &
\AxiomC{$\Gamma \Rightarrow \Delta, A$}
 \RightLabel{\scriptsize{$(R \vee_1)$} }
 \UnaryInfC{$\Gamma \Rightarrow \Delta, A \vee B$}
 \DisplayProof\\

\vspace{5pt}

\AxiomC{$\Gamma \Rightarrow \Delta,  B$}
 \RightLabel{\scriptsize{$(R \vee_2)$} }
 \UnaryInfC{$\Gamma \Rightarrow \Delta, A \vee B$}
 \DisplayProof
 &
 \hspace{-30pt}
 \AxiomC{$A, \Gamma \Rightarrow \Delta$}
 \AxiomC{$B, \Gamma \Rightarrow \Delta$}
 \RightLabel{\scriptsize{$(L \vee)$} }
\BinaryInfC{$A \vee B, \Gamma \Rightarrow \Delta$}
 \DisplayProof \\

 \vspace{5pt}

\AxiomC{$\Gamma \Rightarrow A, \Delta$}
 \AxiomC{$\Gamma, B \Rightarrow \Delta$}
 \RightLabel{\scriptsize{$(L \to)$} }
 \BinaryInfC{$\Gamma,  A \to B \Rightarrow \Delta$}
\DisplayProof &
\AxiomC{$\Gamma, A \Rightarrow B, \Delta$}
 \RightLabel{\scriptsize{$(R \to)$} }
 \UnaryInfC{$\Gamma \Rightarrow A \to B, \Delta$}
\DisplayProof \\

\vspace{5pt}

\AxiomC{$\Gamma, A(t) \Rightarrow \Delta$}
\RightLabel{\scriptsize$(\forall L)$} 
 \UnaryInfC{$\Gamma, \forall x A(x) \Rightarrow \Delta$} \DisplayProof & 
 \AxiomC{$\Gamma \Rightarrow A(y), \Delta$}
\RightLabel{\scriptsize$(\forall R)$} \UnaryInfC{$\Gamma \Rightarrow \forall x A(x), \Delta$} \DisplayProof\\

\vspace{5pt}

\AxiomC{$\Gamma, A(y) \Rightarrow \Delta$}
\RightLabel{\scriptsize$(\exists L)$} 
 \UnaryInfC{$\Gamma, \exists x A(x) \Rightarrow \Delta$} \DisplayProof & 
 \AxiomC{$\Gamma \Rightarrow A(t), \Delta$}
\RightLabel{\scriptsize$(\exists R)$}  \UnaryInfC{$\Gamma \Rightarrow \exists x A(x), \Delta$} \DisplayProof\\
 
\AxiomC{$\Gamma \Rightarrow \Delta, A$}
 \AxiomC{$A, \Gamma \Rightarrow \Delta$}
 \RightLabel{\scriptsize{$(cut)$} }
\BinaryInfC{$\Gamma \Rightarrow \Delta $} 
\DisplayProof 
\end{tabular}
    \caption{First-order $\LK$.} In (Ax), $p$ is an atom. In $(\exists L)$ and $(\forall R)$ the variable $y$ is not free in the conclusion, known as the \emph{usual eigenvariable conditions}
    \label{table: LK}
\end{table}

\normalsize The usual sequent calculus for $\cqc$, called first-order $\mathbf{LK}$, is defined in Table \ref{table: LK}, where the  weak quantifiers only apply to terms without bound variables. First-order $\mathbf{LJ}$, which is the usual sequent calculus for $\iqc$ is defined as the \emph{single-conclusion version} of first-order $\mathbf{LK}$, i.e., everywhere in the succedents, there is at most one formula. 

\subsection{Kripke Frames and Models} 
We define Kripke frames and models for predicate intuitionistic logic, see \cite[Chapter 14]{Mints}, \cite[Section 5.11]{TroVanDalen}, and \cite{RoseRobert}. The \emph{graph} of an $n$-ary function $f: D^n \to D$ is defined as graph$(f):=\{(\bar{e}, d) \in D^{n+1} \mid f(\bar{e})=d\}$.

\begin{definition} \label{Def: Kripke frame}
A \emph{Kripke frame} for $\mathsf{IQC}$ is a triple $(W, \preccurlyeq, D)$, where $W \neq \emptyset$ is a set of worlds, $\preccurlyeq$ is a partial order (i.e., a binary reflexive, transitive, and antisymmetric relation) on $W$, and $D$ is a function assigning to each $w \in W$ a non-empty $D_w$, called the \emph{domain} of $w$, such that if $w \preccurlyeq w'$ then  $D_w \subseteq D_{w'}.$ To every $w \in W$ we assign a language $\mathcal{L}(D_w)$ which is an extension of the language of $\mathsf{IQC}$, $\mathcal{L}$, by adding new constants for all the elements of $D_w$. 
\end{definition}

\begin{definition} \label{Def: Kripke model}
A \emph{Kripke model} for $\mathsf{IQC}$ is a tuple $\mathcal{M}=(W, \preccurlyeq, D, V)$, where $(W, \preccurlyeq, D)$ is a Kripke frame and $V$ is a valuation function such that:
 
\noindent $\bullet$ for any propositional atom $p$ we have $V(p) \subseteq W$ and $V$ is \emph{persistent}, i.e., if $v \in V(p)$ and $v \preccurlyeq w$, then $w \in V(p)$;
 
\noindent $\bullet$ for any $n$-ary relation symbol $R$ we have $V(R)=$ $\{R_v \mid v \in W\}$ such that $R_v \subseteq D_v^n$ and if $v \preccurlyeq w$ then $R_v \subseteq R_w$;
 
\noindent $\bullet$ for any $n$-ary function symbol $f$ we have $V(f)=$ $\{f_v \mid v \in W\}$ such that $f_v: D_v^n \to D_v$ is a function and if $v \preccurlyeq w$, then $\text{graph}(f_v) \subseteq$ $\text{graph}(f_w)$.

We define the \emph{forcing} relation $\Vdash$ in a model $\mathcal{M}$ at node $v$ by induction:
\vspace{-10pt}
\begin{center}
\begin{tabular}{l}
 $\calM, v \Vdash \top$ \qquad \;$\calM, v \not \Vdash \bot$ \\
  $\calM, v \Vdash p$ \; if{f} \; $v \in V(p)$, for a propositional atom $p$\\
 $\calM, v \Vdash A \wedge B$\; if{f} \; $\calM, v \Vdash A$ and $\calM, v \Vdash B$ \\
 $\calM, v \Vdash A \vee B $ \; if{f} \; $\calM, v \Vdash A$ or $\calM, v \Vdash B$\\
 $\calM, v \Vdash A \to B $  if{f}  $ \forall w \succcurlyeq v $ if $\calM, w \Vdash A$ then $\calM, w \Vdash B$\\
  $\calM, v \Vdash R(x_0, \ldots, x_{n-1})$  \; if{f} \; $(x_0, \ldots, x_{n-1}) \in R_v$,\\
  $\calM, v \Vdash f(x_0, \ldots, x_{n-1})=y$ \; if{f} \; $f_v(x_0, \ldots, x_{n-1})=y$\\
  $\calM, v \Vdash \exists x A(x)$  \; if{f} \; $\exists d \in D_v$ such that $\mathcal{M}, v \Vdash A(d)$\\
  $\calM, v \Vdash \forall x A(x)$  \; if{f} \;  $ \forall w \succcurlyeq v$, $\forall d \in D_w$ \; $\mathcal{M}, w \Vdash A(d)$
\end{tabular} 
\end{center}

A formula $A$ is defined to be \emph{valid} in a frame $F$, denoted by $F \vDash A$, and valid in a model $\calM$, denoted by $\calM \vDash A$, as usual. For any $w \in W$ define $\preccurlyeq [w]:=\{v \in W \mid v \succcurlyeq w\}$. 
\end{definition}

\begin{definition} \label{Def: linear}
A Kripke frame is \emph{linear} when 
\[
\forall u,v,w \in W \ \text{if} \ u \succcurlyeq w \ \text{and} \ v \succcurlyeq w \ \text{then either} \ u \succcurlyeq v \ \text{or} \ v \succcurlyeq u.
\]
A Kripke frame is \emph{constant domain} when
\[
\forall u,v \in W  \ \text{if} \ u \succcurlyeq v \ \text{then} \ D_u=D_v.
\]
\end{definition}
Observe the subtle distinction between our definition and the standard definitions of linear and constant domain frames. In our framework, disjoint unions of linear frames (as defined traditionally) are also considered linear$-$a feature that will play a key role in Theorem \ref{thm: incompleteness} and Section \ref{semantic-proof}.

\begin{remark}
A Kripke frame (model) for any logic stronger than $\IQC$ is a Kripke frame (model) for $\IQC$, as well. Therefore, as we are only considering intermediate logics in this paper, in any frame or model, we assume $\IQC$ is validated.    
\end{remark}

\begin{definition}
The frame $F$ is said to be \emph{finite} (resp. \emph{infinite}) when the set of worlds $W$ is finite (resp. infinite).
We say the worlds $u,v,w$ form a \emph{fork} in a frame when $w \preccurlyeq u$ and $w \preccurlyeq v$. 
Recall that a binary relation $\preccurlyeq$ on a set $X$ is called \emph{well-founded} if every non-empty subset $S \subseteq X$ has a minimal element w.r.t. $\preccurlyeq$, i.e., there is no infinite descending chain $\dots \preccurlyeq x_2 \preccurlyeq x_1 \in X$. Similarly, $\preccurlyeq$ on $X$ is called \emph{conversely well-founded} if every non-empty subset $S \subseteq X$ has a maximal element, i.e., $X$ does not contain an infinite ascending chain $x_1 \preccurlyeq x_2 \preccurlyeq \dots \in X$.
\end{definition}

\subsection{Skolemization}
\begin{definition}
An occurrence of a quantifier is \emph{positive} in a formula $A$ if it is in the scope of the antecedent of an even number of implication symbols. It is called \emph{negative} otherwise. 
\end{definition}
Note that the above definition covers negation as well, as it is defined via implication.
\begin{definition}
A quantifier is \emph{strong in a formula}
if it is a positive occurrence of a universal quantifier or a negative occurrence of an existential quantifier. It is called \emph{weak}, otherwise. 
A quantifier is \emph{strong in a sequent} $\Gamma \Rightarrow \Delta$ if it is a strong quantifier in a formula in $\Delta$ or a weak quantifier in a formula in $\Gamma$. Conversely, we can define a weak quantifier in a sequent. The \emph{first strong
quantifier} in a formula $A$ is the first strong quantifier in $A$ when reading $A$ from left to
right.
\end{definition}

There are different forms of Skolemization. Here we recall the standard one, which is also called structural Skolemization in the literature, introduced by Skolem in his seminal paper \cite{Skolem} and Andrews Skolemization \cite{Andrews1,Andrews2}. In Section \ref{sec: SL}, we will see another innovative form of Skolemization.

\begin{definition}\label{Def: St Skolem} 
 Let $A$ be a closed first-order formula. Whenever $A$ does not contain strong quantifiers, we define its \emph{structural Skolem form} as $A^S := A$. Suppose now that $A$ contains strong quantifiers. Let $Q \in \{\forall, \exists\}$ and $(Q y)$ be the first strong quantifier occurring in $A$. If $(Q y)$ is not in the scope of weak quantifiers, then its structural Skolem form is
$$ A^S := (A_{-(Q y)}\{y \leftarrow c\})^S,$$
where $A_{-(Q y)}$ is the formula $A$ after omission of $(Q y)$ and $c$ is a new constant symbol not occurring in $A$.
If $(Q y)$ is in the scope of the weak quantifiers $(Q_1 x_1) \ldots (Q_n x_n)$, then its structural Skolemization is
$$ A^S:= (A_{-(Q y)}\{y \leftarrow f(x_1, \ldots , x_n)\})^S,$$
where $f$ is a fresh (Skolem) function symbol and does not occur in $A$. 
Given $F = \bigwedge \Gamma \to \bigvee \Delta$ and $F^S= \bigwedge \Pi \to \bigvee \Lambda$, define the structural Skolemization of the sequent $\Gamma \Rightarrow \Delta$ as $$(\Gamma \Rightarrow \Delta)^S := \Pi \Rightarrow \Lambda.$$
\end{definition}

\begin{remark}
The Skolemization used in computer science, e.g., in resolution, is structural Skolemization: it is a Skolemization of a sequent $A \Rightarrow $. However, a proof in computer science is different from our proofs as the axiom of choice is used. The axiom of choice is somewhat stronger than necessary, but this proof verifies in addition the identity axioms of Skolem functions. The possibility of this argument is based on the existence of classical semantics.
\end{remark}

\begin{definition}\label{Def: And Skolem} 
For a closed first-order formula $A$, \emph{Andrews Skolemization} $A^\SA$ is defined similar to Definition \ref{Def: St Skolem}.
The only difference is that the introduced Skolem functions in Andrews method do not depend on the weak quantifiers $(Q_1 x_1) \ldots (Q_n x_n)$ that dominate the strong quantifier $(Q x)$, but depend on the subset of $\{x_1, \ldots, x_n\}$ appearing (free) in the subformula dominated by $(Q x)$.
\end{definition}
\begin{remark} 
It is worth mentioning that in \cite{BaazLolic} it is shown that under the assumption of any elementary clause form transformation, Andrews Skolemization might lead to a non-elementary speed-up compared to structural Skolemization, due to its property of leading to smaller Skolem terms. 
\end{remark}

\begin{definition}
Let $L$ be a logic.
We say Skolemization is \emph{sound and complete} for $L$, or $L$ \emph{admits} Skolemization, if for any formula $A$
\[
L \vdash A \quad \text{if and only if} \quad L \vdash A^S.
\]
The $(\Rightarrow)$ direction is sometimes called the \emph{Skolemization} and the $(\Leftarrow)$ direction is called the \emph{deSkolemization}. A similar definition holds for Andrews Skolemization.
\end{definition}
It is well-known that  Skolemization is sound and complete for $\cqc$.

\begin{example}\label{Sle} 
An example of Skolemization failing in $\IQC$ is the axiom $\SW: (\forall x A(x) \to B) \to \exists x (A(x) \to B)$, which is not provable in $\IQC$, as it can be refuted by a suitable model $\calM$. Let $\calM$ be constant domain, denote the domain by $\calD$, and $|\calD|>1$. Let $a,b \in \calD$ be distinct elements, $Q$ be a nullary and $P$ a unary predicate. Construct $\calM$ as follows, where $Q$ is not forced anywhere:

\small \[
  \xymatrix{
  v_1 \vDash P(a) \ v_1 \nVdash P(b) & &  
  v_2 \vDash P(b) \ v_1 \nVdash P(a) \\
  & w  \ar[lu] \ar[ru] & 
  } 
\]
\normalsize It is easy to see that
\[
\calM, w \Vdash \forall x P(x) \to Q \qquad \calM, w \nVdash \exists x (P(x) \to Q).
\]
However, the Skolemization of $\SW$ is valid in $\IQC$:
\[
\IQC \vdash (A(c) \to B) \to \exists x (A(x) \to B).
\]
\end{example}

\section{Logic of quantifier shifts}\label{semantic}
We introduce an interesting intermediate logic, which we call the \emph{logic of quantifier shifts}, denoted by $\QFS$. This logic plays a pivotal role in determining whether an intermediate logic $L$ can be Skolemized: Skolemization is sound and complete for $L$ if and only if $\QFS \subseteq L$. This result provides a compelling motivation for a deeper investigation of $\QFS$. 

\begin{definition}
Let $A(x)$ and $B$ be formulas in the language $\mathcal{L}$ and the variable $x$ is not free in $B$. The following formulas are called the \emph{quantifier shift} formulas:
\item 
 Constant Domain (CD):
    \[\forall x (A(x) \vee B) \to \forall x A(x) \vee B\]
    \item
   Existential Distribution (ED) 
   \[(B \to \exists x A(x)) \to \exists x (B \to A(x))\]
\item
    Quantifier Switch (SW):
    \[(\forall x A(x) \to B) \to \exists x (A(x) \to B)\]
    If we consider the commuting of quantifiers $\forall, \exists$ with the connectives $\wedge, \vee, \to$, then the quantifier shift formulas are the only ones not valid in $\IQC$.
However, in $\mathsf{CQC}$, both these formulas and their converses are provable  
Denote the logic $\mathsf{IQC} +\{\mathrm{CD}, \ED, \mathrm{SW}\}$ by $\mathsf{QFS}$. \\
\end{definition}

The following theorem shows the incompleteness of $\QFS$ and some fragments with respect to Kripke frames. 


\begin{restatable}{theorem}{ThmIncompleteness}
\label{thm: incompleteness}
The logics $\mathsf{QFS}$, $\mathsf{IQC}+ \{\mathrm{CD} , \mathrm{SW}\},$ and $\mathsf{IQC}+ \{\mathrm{CD} , \mathrm{ED}\}$ are all frame-incomplete.
\end{restatable}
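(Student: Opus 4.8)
The plan is to prove frame-incompleteness in the usual way: for each of the three logics $L$, produce a single formula $\varphi_0$ that is valid in \emph{every} Kripke frame validating $L$, yet is not a theorem of $L$. Since soundness gives $L\subseteq\mathrm{Log}(\mathrm{Fr}(L))$, where $\mathrm{Fr}(L)$ is the class of frames validating all theorems of $L$ and $\mathrm{Log}$ takes the logic of a frame class, such a $\varphi_0$ exhibits $L\subsetneq\mathrm{Log}(\mathrm{Fr}(L))$, i.e.\ frame-incompleteness. My candidate is the predicate-linearity formula
\[
\varphi_0\;:=\;\forall x\,\forall y\,\bigl((P(x)\to P(y))\vee(P(y)\to P(x))\bigr).
\]
The two halves of the argument pull in opposite directions, and their tension is exactly the phenomenon anticipated by the remark that disjoint unions of linear frames count as linear.

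First I would establish the semantic half: $F\models\varphi_0$ for every $F$ validating $L$. Here $\CD$ forces constant domains, so along any $\preccurlyeq$-component the domain has a fixed cardinality. Generalising Example~\ref{Sle}, I would verify that any constant-domain frame containing a genuine fork (incomparable $u,v$ above a common $w$) together with at least two distinct domain elements refutes $\SW$ under a suitable valuation --- placing the two elements into $P$ along the two branches and choosing $B$ false exactly off the up-set where $\forall x\,P(x)$ is forced --- and, by a parallel valuation, also refutes $\ED$. Consequently every component carrying two or more elements must be fork-free, hence linear in the sense of Definition~\ref{Def: linear}, while components with a one-element domain may have any shape but then validate all quantifier shifts trivially by collapse of the quantifiers. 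On a linear component the up-sets $\{z:z\Vdash P(d)\}$ are nested, so $\varphi_0$ holds; on a singleton component $\varphi_0$ is forced outright. Because intuitionistic validity is determined on point-generated subframes, and since (by the cited remark) disjoint unions of such components are again admissible $L$-frames, $\varphi_0$ is valid throughout $\mathrm{Fr}(L)$.

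The hard part is the second half, $L\not\vdash\varphi_0$, and it is here that ordinary Kripke semantics gives out. Any attempt to build a Kripke countermodel must refute $\varphi_0$, which requires a fork separating two elements of $P$; but that very configuration is what drives the refutation of $\SW$ and $\ED$ above, so a frame carrying it is not an $L$-frame and the instances of the shift axioms threaten to fail in the model as well. This near-coincidence is precisely the incompleteness, and it forces a non-frame argument for underivability. Concretely, I would analyse a cut-free sequent system for $L$ (the intuitionistic calculus augmented with analytic rules for $\CD$, $\SW$, $\ED$): a derivation of ${\Rightarrow}\,\varphi_0$ must, after $(\forall R)$-steps introducing fresh parameters $a,b$, reduce to $\Rightarrow(P(a)\to P(b))\vee(P(b)\to P(a))$ and then, by $(R\vee_i)$ in a single-succedent calculus, commit to one disjunct, i.e.\ to an underivable sequent $P(a)\Rightarrow P(b)$ between distinct atoms. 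Alternatively one refutes $\varphi_0$ in a semantics for which $L$ \emph{is} complete (an algebraic or Kripke-sheaf model).

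The same $\varphi_0$ then settles all three logics: $\SW$ handles $\QFS$ and $\IQC+\{\CD,\SW\}$, the $\ED$-refutation handles $\IQC+\{\CD,\ED\}$, and in each case the underivability argument is identical. The main obstacle is squarely this underivability step. The degenerate members of $\mathrm{Fr}(L)$ --- singleton-domain forks and disjoint unions of chains --- are exactly what make $\varphi_0$ frame-valid while blocking the naive candidate (Gödel--Dummett linearity) from even being frame-valid, and they are also what prevent a routine Kripke-model refutation; so the proof cannot be closed by the usual semantic method and must instead control the proof system itself or pass to a richer, complete semantics.
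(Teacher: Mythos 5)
Your frame-validity half is sound and even elegant: $\varphi_0$ covers both cases of the frame characterization in a single formula (on a linear frame the up-sets $\{z : z \Vdash P(d)\}$ restricted to the cone of any world are nested; on singleton domains the formula is trivial), so it is a legitimate substitute on the semantic side for the formula the paper actually uses, namely $\mathrm{Lin} \vee \mathrm{OEP}$ with $\mathrm{OEP} := \exists x A(x) \to \forall x A(x)$. But there is a genuine gap at the step $L \nvdash \varphi_0$, and your diagnosis of why the ``usual semantic method'' must fail is backwards. The paper's underivability argument \emph{is} the usual semantic method: it exhibits an explicit three-world Kripke \emph{model} of $\QFS$ --- a fork $w_1 \preccurlyeq w_2, w_3$ with constant domain $\{a,b\}$, $R(a)$ forced at the root, $R$ homogeneous (both $R(a)$ and $R(b)$) at the two tops, and nullary atoms $P,Q$ separated by the fork --- that refutes an instance of $\mathrm{Lin}\vee\mathrm{OEP}$; validity of all of $\QFS$ in this model is secured by a homogeneity claim (at each final world, any $\psi(x)$ holds of both elements or of neither). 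That formula is engineered so its refutation needs only a propositional fork plus a root-level separation of $R(a)$ and $R(b)$ that is \emph{healed at the tops}, neither of which disturbs the shift axioms. Your $\varphi_0$ lacks this property: refuting it at $w$ requires incomparable $u,v \succcurlyeq w$ with $u \Vdash P(a)$, $u \nVdash P(b)$, $v \Vdash P(b)$, $v \nVdash P(a)$ --- persistent elementwise separation at incomparable worlds, which is exactly the configuration the shift axioms resist. For instance, the $\ED$ instance with $B := P(a) \vee P(b)$ and $A := P$ then fails at $w$ (a witness $d$ with $w \Vdash B \to P(d)$ would need $P(d)$ forced at both $u$ and $v$) unless extra domain elements and a delicately tuned valuation are introduced. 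So it is not even clear that $\varphi_0 \notin \QFS$, and you give no proof of it.

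Both of your fallback routes for underivability fail as stated. The sequent-calculus route presupposes an analytic calculus for $L$ with rules for $\CD$, $\ED$, $\SW$; the paper explicitly notes that no such well-structured calculus is currently known, and in $\LJ$ with the shift axioms available (via cut or as antecedent formulas) a derivation of $\Rightarrow \varphi_0$ need not commit to one disjunct after $(R\vee_i)$ --- the same naive commitment argument would ``refute'' $\exists x (P(x) \to \forall x P(x))$, which \emph{is} a theorem of $\QFS$ (take $B := \forall x P(x)$ in $\SW$; cf.\ Example \ref{unsound-ex}). The appeal to ``an algebraic or Kripke-sheaf semantics for which $L$ is complete'' is a pointer, not an argument: no such complete semantics is produced, and producing one is precisely the kind of open problem the paper circumvents. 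Since, once frame-validity is in hand, underivability is the entire content of frame-incompleteness, the proposal does not prove the theorem; to repair it you would either switch to the paper's formula $\mathrm{Lin}\vee\mathrm{OEP}$ together with its explicit $\QFS$-model, or supply a genuinely new underivability proof for your $\varphi_0$.
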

We leave out the proof in this section. We do this in order to retain the focus on the proof-theoretic approach to proving our paper's core result.
We provide a precise proof and thorough semantic argument for the incompleteness of $\QFS$ and some fragments with respect to Kripke frames in Section \ref{semantic-proof}.


It is worth highlighting some historical studies in this area. In \cite[Section 2]{OnoProblems}, several logics related to $\QFS$ are discussed. Casari \cite{casari1987} provides results on frames for $\SW$ and $\ED$, while Takano \cite{takano1987} offers notable frame characterizations for certain logics by adding intuitionistically invalid formulas to $\IQC$. The frame incompleteness of $\IQC + \SW$ and $\IQC + \ED$ was demonstrated by Komori \cite{komori1983} and Nakamura \cite{nakamura1983}, respectively. A wide range of extensions of $\IQC + \{\mathrm{Lin}, \CD, \ED\}$ are explored in \cite{OnoCasari}, where $\mathrm{Lin}$ is defined as $(C \to D) \vee (D \to C)$ for any formulas $C$ and $D$. See also the excellent paper by Umezawa \cite{umezawa1959}, where he considers a rich class of logics resulted from $\IQC$ by addition of classically valid axiom schemes.

\section{Turning to Proof Theory}
In this section, we provide an elegant characterization of all intermediate logics that satisfy the soundness and completeness of Skolemization.
\begin{theorem}\label{mainth}
An intermediate logic admits structural or Andrews  Skolemization if and only if it contains all quantifier shift principles.
\end{theorem}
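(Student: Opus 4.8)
The plan is to prove both implications, treating structural Skolemization in full and indicating why the argument transfers to Andrews Skolemization. Recall that ``admits Skolemization'' is the provability-level biconditional $L \vdash A \Leftrightarrow L \vdash A^S$, not an internal implication. The organising observation is that $\QFS$ is exactly the extension of $\IQC$ in which \emph{every} classical quantifier shift becomes a provable equivalence: the converses of $\CD$, $\ED$, $\SW$ already hold in $\IQC$ (for instance $\exists x(A(x)\to B)\to(\forall x A(x)\to B)$ is intuitionistic), so adjoining $\CD$, $\ED$, $\SW$ turns all shift-past-$\wedge$, -$\vee$, -$\to$ moves into two-way rules. I would prove the easy half (necessity) first.

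\emph{Necessity} ($L$ admits Skolemization $\Rightarrow \QFS\subseteq L$), by contraposition. If $\QFS\not\subseteq L$ then, since $L\supseteq\IQC$, at least one of $\CD,\ED,\SW$ is missing from $L$; call it $\Phi$. A direct inspection of the first strong quantifier shows that in each of these three principles the strong quantifier lies in the scope of no weak quantifier, so structural and Andrews Skolemization coincide and replace it by a fresh constant $c$, giving $\CD^S=\forall x(A(x)\vee B)\to(A(c)\vee B)$, $\ED^S=(B\to A(c))\to\exists x(B\to A(x))$ and $\SW^S=(A(c)\to B)\to\exists x(A(x)\to B)$. Each is provable already in $\IQC$ by instantiating, resp. witnessing, with $c$ (as the excerpt notes for $\SW$). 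Hence $L\vdash\Phi^S$ while $L\nvdash\Phi$, so deSkolemization fails for $A=\Phi$ and $L$ does not admit Skolemization. Because the three witnesses have no dominating weak quantifiers, the very same computation refutes Andrews Skolemization, so this half covers both notions at once.

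\emph{Sufficiency} ($\QFS\subseteq L\Rightarrow L$ admits Skolemization). I would route through prenex normal form in three steps. (i) \textbf{Prenexification}: since all shifts are $L$-provable equivalences, every $A$ is $L$-equivalent to a prenex formula $A^{\mathrm p}$, so $L\vdash A\Leftrightarrow L\vdash A^{\mathrm p}$. (ii) \textbf{The prenex engine}: for prenex formulas Skolemization is sound and complete, i.e. $L\vdash A^{\mathrm p}\Leftrightarrow L\vdash(A^{\mathrm p})^S$; this is the Mints-style result cited in the introduction, which for the present purpose must be secured at the level of $L$ rather than merely $\IQC$. (iii) \textbf{Reconciliation}: $L\vdash A^S\Leftrightarrow L\vdash(A^{\mathrm p})^S$, i.e. applying $(\cdot)^S$ to $A$ and to its prenex form yields equiprovable results (after renaming fresh symbols). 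Composing (i)–(iii) gives $L\vdash A\Leftrightarrow L\vdash A^S$, delivering soundness and completeness simultaneously.

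I expect steps (ii) and (iii) to be the real obstacle, and the reason the argument must be proof-theoretic is that the semantic shortcut is closed off: $\QFS$ is Kripke-frame incomplete by Theorem~\ref{thm: incompleteness}. The delicacy of (iii) is that prenexification reorders quantifiers and thereby changes which weak quantifiers dominate a given strong quantifier, so the Skolem functions produced for $A$ and for $A^{\mathrm p}$ can differ in arity and arguments; already for $\SW$ one sees $A^S$ and $(A^{\mathrm p})^S$ agreeing only after matching a constant obtained in two different ways (via $\ED$). The lemma would be proved by induction on the individual shift steps carrying $A$ to $A^{\mathrm p}$, verifying that each shift alters the dominating weak block exactly so that the induced change in Skolem dependencies is neutralised by one of the provable shift equivalences of $\QFS$, while checking that eigenvariable and freshness conditions never clash across the permutations. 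Securing (ii) over $L$ would be handled by developing a cut-free sequent calculus for $\QFS$ (the system $\QFL$) in which a Herbrand-type extraction turns a proof of the Skolem form back into a proof of the prenex original. For Andrews Skolemization the same chain applies, using in addition that $A^{\SA}$ and $A^S$ differ only by suppression of vacuous Skolem-argument dependencies, an $\IQC$-level interprovability.
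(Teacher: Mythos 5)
Your necessity half is correct and is essentially the paper's own argument: Skolemize each of $\CD$, $\ED$, $\SW$, observe that the unique strong quantifier is dominated by no weak quantifier (so structural and Andrews Skolemization coincide, cf.\ Corollary~\ref{cor: shifts}), and note that the resulting formulas are already $\IQC$-provable. The sufficiency half, however, takes a different route from the paper and contains two genuine gaps. First, your step (ii) rests on a tool that does not exist: you propose to secure prenex-level Skolemization completeness over $L$ ``by developing a cut-free sequent calculus for $\QFS$ \ldots\ in which a Herbrand-type extraction'' works, but the paper explicitly states that no cut-free or analytic calculus for $\QFS$ is currently available, and its conclusion lists the existence of Herbrand expansions for $\QFS$ as an open desideratum rather than a known fact. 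The Mints results you cite are specific to $\IQC$ and do not lift to extensions without new work. Worse, you must handle an \emph{arbitrary} intermediate $L \supseteq \QFS$, for which no analytic calculus can be presupposed at all; the paper circumvents this by reducing to finitely many $L$-formulas $B_1,\dots,B_n$ in the antecedent (Proposition~\ref{Prop: QFL}) and then working entirely inside $\LJ$ and $\LJpp$: Skolemize the whole sequent, eliminate cuts in $\LJ$, deSkolemize the cut-free proof in $\LJpp$ (Theorem~\ref{thm: cut free LJpp}, which tolerates violated eigenvariable conditions), and finally repair the unsound strong-quantifier inferences into $\ED$/$\SW$ instances (Theorem~\ref{thm: QFS, LJpp}). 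Your proposal contains no mechanism for arbitrary $L$ beyond the phrase ``secured at the level of $L$.''

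Second, your reconciliation step (iii), $L \vdash A^S \Leftrightarrow L \vdash (A^{\mathrm p})^S$, is not a routine lemma and your sketch does not establish it. Skolemization is not invariant under provable equivalence (if it were, deSkolemization for $\QFS$ would be nearly immediate), so $L \vdash A \leftrightarrow A^{\mathrm p}$ buys you nothing directly. Tracking a single shift step, a strong quantifier in the sibling subformula can gain or lose arguments of its Skolem function as weak quantifiers come to dominate it, and converting provability of one Skolem form into the other in the dependency-\emph{decreasing} direction is precisely an Andrews-style deSkolemization problem---a statement of essentially the same strength as the theorem you are proving, so the induction risks circularity. In the paper this difficulty is absorbed once and for all by the deSkolemization algorithm for cut-free $\LJpp$-proofs, which reintroduces quantifiers ``fitting to the target end-sequent'' and verifies the suitability conditions (substitutability, very weak regularity, acyclicity of side variables) directly; nothing like your shift-by-shift neutralization argument is attempted, and (i) (the paper's Lemma~\ref{lem: prenex}) plays no role in the main proof. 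Your instinct that the semantic route is closed by Theorem~\ref{thm: incompleteness} is right, but to repair the proposal you would either have to prove the $\QFS$-level Herbrand/prenex engine and the reconciliation lemma from scratch, or adopt the paper's $\LJpp$-based detour, which replaces both.
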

We now proceed to prove the above theorem, starting with one direction.
\subsection{From deSkolemization to quantifier shifts}

\begin{theorem}
Let $L$ be an intermediate logic. If Skolemization is sound and complete for $L$, then $\QFS \subseteq L$. 
\end{theorem}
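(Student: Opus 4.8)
The plan is to exploit the deSkolemization (the $(\Leftarrow)$) direction of the hypothesis: since $L$ admits Skolemization, $L \vdash A^S$ implies $L \vdash A$ for every closed formula $A$. It therefore suffices, for each of the three quantifier shift principles $\CD$, $\ED$, $\SW$, to compute its structural Skolem form and verify that this Skolem form already lies in $\IQC$ (hence in $L$, as $\IQC \subseteq L$); deSkolemization then returns the principle itself, and $\QFS = \IQC + \{\CD, \ED, \SW\} \subseteq L$ follows.

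First I would carry out the routine polarity analysis of each principle to locate its strong quantifiers. In $\CD = \forall x(A(x)\vee B) \to \forall x A(x)\vee B$ the only strong quantifier is the $\forall x$ of the succedent $\forall x A(x)$ (a positive occurrence of $\forall$), whereas the antecedent $\forall x$ is negative, hence weak. In $\ED = (B \to \exists x A(x)) \to \exists x (B \to A(x))$ the strong quantifier is the $\exists x$ inside the antecedent $B \to \exists x A(x)$ (a negative occurrence of $\exists$). In $\SW = (\forall x A(x) \to B) \to \exists x (A(x) \to B)$ the strong quantifier is the $\forall x$ of the antecedent (a positive occurrence of $\forall$, being embedded in two antecedents). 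Crucially, in all three cases the strong quantifier is not in the scope of any weak quantifier, so structural Skolemization replaces the bound variable by a fresh constant $c$; for this same reason Andrews and structural Skolemization coincide here, which lets me treat both variants at once and so cover the full statement of Theorem \ref{mainth}.

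This produces $\CD^S = \forall x(A(x)\vee B) \to A(c)\vee B$, together with $\ED^S = (B \to A(c)) \to \exists x (B \to A(x))$ and $\SW^S = (A(c) \to B) \to \exists x (A(x) \to B)$. Each is immediately provable in $\IQC$: $\CD^S$ by instantiating the universal premise at $c$, and $\ED^S$, $\SW^S$ by witnessing the existential conclusion with $c$ (the latter being exactly the validity already recorded in Example \ref{Sle}). Since $\IQC \subseteq L$, we obtain $L \vdash \CD^S$, $L \vdash \ED^S$, and $L \vdash \SW^S$, and deSkolemization yields $L \vdash \CD$, $L \vdash \ED$, and $L \vdash \SW$.

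The one point requiring care, and the only genuine obstacle, is the passage from these schemas to the closed formulas on which Skolemization is defined: the principles contain schematic $A(x)$ and $B$, and their instances may carry additional free parameters. I would handle this by first replacing any free parameters with fresh constants to obtain a closed instance, running the argument above on it, and then recovering the full schema using that $L$, being closed under the rules of $\IQC$, is invariant under substitution of predicate symbols and under generalizing fresh constants back to variables. Once this reduction is in place, the remainder is a direct calculation.
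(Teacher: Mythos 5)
Your proposal is correct and follows essentially the same route as the paper: compute $\CD^S$, $\ED^S$, and $\SW^S$ (noting that the unique strong quantifier in each principle lies outside the scope of any weak quantifier, so it is replaced by a fresh constant), observe that each Skolem form is $\IQC$-provable, and apply deSkolemization to conclude $\{\CD,\ED,\SW\} \subseteq L$. Your closing paragraph on reducing schematic instances with free parameters to closed formulas via fresh constants and substitution-closure is a point the paper leaves implicit, but it is a refinement of the same argument rather than a different one.
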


\begin{proof}
Assume Skolemization is sound and complete for $L$, i.e., for any formula $\phi$
\[
L \vdash \phi \qquad \text{if{f}} \qquad L \vdash \phi^S.
\]
As $\iqc \subseteq L$, we only have to prove that $\{\CD, \ED, \SW\} \subseteq L$.

$\bullet$ The case $\CD \in L$. It is easy to see that the only strong quantifier in $\CD= \forall x (A(x) \vee B) \to \forall x A(x) \vee B$ is the universal quantifier in the succedent of the implication, which is not in the scope of any weak quantifiers. Therefore,
\[
\CD^S = \forall x (A(x) \vee B) \to A(c) \vee B,
\]
where $c$ is a fresh constant symbol not occurring in $\CD$. It is easy to see that $\iqc \vdash \CD^S$. Thus, $L \vdash \CD^S$, and by the assumption we have $L \vdash \CD$.

$\bullet$ For the other two cases, with similar reasoning, we have 
\begin{center}
$\ED^S= (B \to A(c)) \to \exists x (B \to A(x))$, \\ 
$\SW^S= (A(c) \to B) \to \exists x (A(x) \to B)$,
\end{center}
where $c$ is a fresh constant not appearing in $A(x)$ and $B$. As $\ED^S$ and $\SW^S$ are valid in $\iqc$, thus $L \vdash \ED$ and $L \vdash \SW$.


\end{proof}
\begin{corollary}\label{cor: shifts}
    The deSkolemization of quantifier shift axioms is the same for structural and Andrews Skolemization.
\end{corollary}

\subsection{Sequent calculi $\LKpp$ and $\LJpp$}
Currently, we lack a well-structured proof system, such as a cut-free or analytic sequent calculus, for $\QFS$, and Theorem \ref{thm: incompleteness} establishes its frame incompleteness (with respect to Kripke frames). This absence of well-behaved semantics highlights the importance of employing proof-theoretic methods to study Skolemization.
A natural question arises: can we prove that $\QFS$ admits Skolemization through alternative approaches? Aguilera and Baaz in \cite{ABaaz} proposed an innovative sequent calculus for first-order (classical or intuitionistic) logic by relaxing the usual quantifier-inference restrictions. 
This calculus, while allowing unsound inferences (thereby lacking local correctness), maintains global correctness. 
We will adopt their sequent calculus as a tool for our investigation. 
To proceed, we first require some definitions. 


\begin{definition}
If a formula in the premise is changed by an inference, we call it an \textit{auxiliary} formula of the inference and the resulting formula in the conclusion is called the \textit{principal} formula.
An inference is a \emph{quantifier inference} if the principal
formula has a quantifier as its outermost logical symbol. 
If an inference yields a strongly quantified formula $Q x A(x)$ from $A(a)$, where $a$ is a free variable, we say that $a$ is the \emph{characteristic variable} of the inference.
For a proof $\pi$, we say $b$ is a \emph{side variable} of $a$ in $\pi$ (written $a<_\pi b$ ) if $\pi$ contains a strong-quantifier inference of one of the forms:
\begin{center}
\begin{tabular}{c c}
\AxiomC{$\Gamma \Rightarrow A(a, b, \vec{c})$}
\UnaryInfC{$\Gamma \Rightarrow  \forall x A(x, b, \vec{c})$}
\DisplayProof
&
\AxiomC{$A(a, b, \vec{c}), \Gamma \Rightarrow \Delta$}
\UnaryInfC{$\exists x A(x, b, \vec{c}), \Gamma \Rightarrow \Delta$}
\DisplayProof
\end{tabular}
\end{center}
\end{definition}

\begin{definition}\cite{ABaaz} \label{Def: suitable} 
\label{Def: suitable quantifier}
A quantifier inference is \textit{suitable} for a proof $\pi$ if either it is a weak-quantifier inference, or the following are satisfied:
\item[1)] (substitutability) the characteristic variable does not appear in the conclusion of $\pi$.
\item[2)] (side-variable condition) the relation $<_\pi$ is acyclic.
\item[3)]
(very weak regularity) if two strong-quantifier inferences in $\pi$ have the same characteristic variable, they must have the same principal formula. 
\end{definition}
For instance, each quantifier inference of $\mathbf{LJ}$ is suitable for every $\mathbf{LJ}$-proof. To see the justification of the choice of these conditions, please see Appendix.

\begin{definition}\cite{ABaaz}
The calculus $\mathbf{LK^{++}}$ (resp. $\mathbf{LJ^{++}}$) is obtained from $\mathbf{LK}$ (resp. $\mathbf{LJ}$), by dropping the restriction preventing the characteristic variable of an inference from
appearing in its conclusion (i.e., the usual eigenvariable condition), and adding the restriction that a proof may only
contain quantifier inferences that are suitable for it.
\end{definition}



It is worth mentioning some facts about these two calculi.
\begin{theorem} \cite{ABaaz} \label{unsound}
The following hold for $\mathbf{LK}^{++}$ and $\mathbf{LJ^{++}}$:
\begin{enumerate}
\item (Correctness) If a sequent is $\LKpp$-derivable, then it is already $\LK$-derivable.
\item There is no elementary function bounding the length of the shortest cut-free $\mathbf{LK}$-proof of a formula in terms of its shortest cut-free $\mathbf{LK}^{++}$-proof.
\item $\mathbf{LJ^{++}}$ is \textit{not} sound for the intuitionistic logic $\mathsf{IQC}$ and does not admit cut elimination.

\end{enumerate}
   \end{theorem}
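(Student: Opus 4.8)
The plan is to handle the three parts by separate techniques, since they rest on different phenomena. For Part 1 (Correctness), the goal is to convert a $\LKpp$-proof $\pi$ into an $\LK$-proof of the same end-sequent, i.e., to trade the \emph{global} suitability conditions of Definition \ref{Def: suitable} for the \emph{local} eigenvariable condition of $\LK$. I would argue by transforming $\pi$ directly. Since the side-variable relation $<_\pi$ is acyclic (condition 2), it extends to a strict linear order on the characteristic variables of $\pi$. Processing the strong-quantifier inferences in an order compatible with $<_\pi$, I would rename each characteristic variable to a genuinely fresh variable, substitute this renaming consistently through the subproof above the inference, and, where needed, permute strong inferences toward the root so that each characteristic variable no longer occurs below its own inference. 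Very weak regularity (condition 3) guarantees that all inferences sharing a characteristic variable share a principal formula, so a single coherent renaming per variable suffices; substitutability (condition 1) guarantees the variable is absent from the end-sequent, so the conclusion of $\pi$ is untouched. The bulk of the work is checking that these permutations are always available and never reintroduce a violation elsewhere, which is exactly what acyclicity secures, since it rules out the circular dependencies that would otherwise block a consistent ordering.

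For Part 2 (non-elementary speed-up), I would exhibit a family of formulas $(A_n)_n$ separating the two cut-free systems. The relaxed eigenvariable discipline of $\LKpp$ lets a single variable serve as the characteristic variable of many inferences (sharing), whereas $\LK$ forces pairwise-distinct eigenvariables and hence repeated duplication of subderivations. Building on the known non-elementary lower bounds for cut-free classical proofs (Statman/Orevkov-style families), I would (i) give cut-free $\LKpp$-proofs of $A_n$ of elementary, indeed polynomial, size, exploiting variable reuse, and (ii) invoke the established lower bound showing that every cut-free $\LK$-proof of $A_n$ has non-elementary length. Together these rule out any elementary bound. The upper bound is a direct construction; the importing and adaptation of the matching lower bound is the labor-intensive step.

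For Part 3 (unsoundness and failure of cut elimination), unsoundness requires only deriving in $\LJpp$ a sequent that is not $\IQC$-valid. I would take the implicational form of the quantifier switch, $\Rightarrow (\forall x A(x) \to B) \to \exists x (A(x) \to B)$, refuted intuitionistically in Example \ref{Sle}. Reading upward via $(R\to)$, $(\exists R)$ with witness $a$, $(R\to)$ again, and $(L\to)$, the left premise becomes $A(a) \Rightarrow \forall x A(x)$; the forbidden step in $\LJ$ is the $(\forall R)$ introducing $\forall x A(x)$ while reusing the variable $a$, which collapses the left premise to the axiom $A(a) \Rightarrow A(a)$. This is illegal in $\LJ$ but legal in $\LJpp$, and it is suitable, since there is a single strong inference whose characteristic variable $a$ does not occur in the closed end-sequent (substitutability), so conditions 2 and 3 hold vacuously. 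For the failure of cut elimination I would produce a sequent provable with cut but not cut-free and argue that the standard cut-reductions, which permute quantifier inferences past one another and may duplicate subproofs, necessarily create a cyclic $<_\pi$-dependency or force two strong inferences with a common characteristic variable but distinct principal formulas, violating conditions 2--3 in a way that the side-variable constraints prevent repairing by renaming. Pinning down such a witnessing sequent, and showing that it admits no cut-free proof at all, is the delicate point; across the whole theorem I expect this cut-elimination witness and the non-elementary lower bound of Part 2 to be the main obstacles, with the correctness transformation of Part 1 being the conceptual core.
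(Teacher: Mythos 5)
Your proposal has two genuine gaps, one of which is fatal to Part 1 as described. First, note that the paper does not prove items 1 and 2 at all — it cites \cite{ABaaz} — and establishes item 3 via Example \ref{unsound-ex}, so the only part with an in-paper proof to match is the third. For Part 1, your renaming-and-permutation transformation cannot succeed: the strong-quantifier inferences of $\LKpp$ are \emph{locally unsound} (e.g.\ the step from $A(a) \Rightarrow A(a)$ to $A(a) \Rightarrow \forall x\, A(x)$, where $a$ occurs in the conclusion), so no renaming of the characteristic variable, however carefully ordered by $<_\pi$, and no permutation of inferences toward the root can repair them — the variable occurrence blocking the eigenvariable condition sits in the side formulas of that very inference, not above it. Worse, your transformation is size-preserving up to a polynomial, introduces no cuts, and maps cut-free proofs to cut-free proofs; if it worked, it would flatly contradict item 2 of the very theorem you are proving (the non-elementary separation between cut-free $\LKpp$ and cut-free $\LK$). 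The actual correctness argument of \cite{ABaaz} is global, not local: one substitutes Skolem terms for the characteristic variables (acyclicity of $<_\pi$ makes the substitution well-founded, very weak regularity yields one Skolem function per variable, substitutability keeps the end-sequent clean), obtaining an $\LK$-proof of the Skolemized sequent, and then deSkolemizes classically — deSkolemization being precisely the step where non-elementary blowup can and must occur, which is why items 1 and 2 are consistent.

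For Part 3, your unsoundness witness is fine and essentially the paper's own: the $\LJpp$-derivation of $\SW$ via the illegal $(\forall R)$ step is exactly the one displayed in the proof of Theorem \ref{thm: QFS, LJpp}, and $\SW$ is refuted in Example \ref{Sle}. But your plan for the failure of cut elimination misses the key idea. Arguing that the standard cut-reduction steps create cyclic $<_\pi$-dependencies only shows that a particular rewriting strategy fails; it does not show that \emph{no} cut-free proof exists, which you yourself flag as the delicate point without resolving it. The paper's Example \ref{unsound-ex} resolves it with a subformula-property argument: the sequent $\Rightarrow \exists x\,(A(x) \to A(f(x)))$ has an $\LJpp$-proof using a cut on $\forall x\, A(x)$, yet every subformula of the end-sequent contains only a weak (positive existential) quantifier; hence any cut-free $\LJpp$-proof, which enjoys the subformula property, would contain no strong-quantifier inferences, would therefore already be an $\LJ$-proof, and would contradict the intuitionistic invalidity of the sequent. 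Note also that your $\SW$ witness could not play this role, since $\SW$ has a \emph{cut-free} $\LJpp$-proof (the one in Theorem \ref{thm: QFS, LJpp}): the witness must be chosen so that no strongly quantified formula occurs among its subformulas, and that choice is the missing idea.
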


\begin{example}\label{unsound-ex}
Consider the following $\LJpp$-proof: 
  \begin{align} \label{example2}
\infer[]{ \Rightarrow \exists x\, \big( A(x) \rightarrow A(f(x))\big)}{
\infer[]{ \Rightarrow A(a) \to A(f(a))}{
\infer[]{A(a)  \Rightarrow A(f(a))}{
	\infer[]{A(a)  \Rightarrow \forall x\, A(x)}{A(a) \Rightarrow A(a)}
	&
	\infer[]{\forall x\, A(x)  \Rightarrow A(f(a))}{A(f(a))  \Rightarrow A(f(a))}
}
}
}
\end{align}
If there were a cut-free proof of
\begin{equation} \label{eqcutelimination}
 \Rightarrow \exists x\, \big( A(x) \rightarrow A(f(x))\big),
\end{equation} 
by the subformula property, this derivation would consist entirely of subformulas of \eqref{eqcutelimination} and would contain no strong quantifier inferences. Therefore, it would already be an $\LJ$-proof. This is impossible, as \eqref{eqcutelimination} is not intuitionistically valid. 
\end{example}


As we have seen, $\LJpp$ is not sound for the intuitionistic logic $\mathsf{IQC}$, can we correct the calculi? 
\begin{definition}

We define an underlying sequent calculus $\mathbf{QFS}$ for the logic $\QFS$ 
such that a proof of a sequent $\Pi \Rightarrow \Gamma$ in $\mathbf{QFS}$ corresponds to a proof of a sequent $\Delta, \Pi \Rightarrow \Gamma$ in $\LJ$ where $\Delta$ consists of instances of $\CD$, $\ED$, and $\SW$. 
\end{definition}

Now, we obtain a crucial result: all the quantifier shift principles are provable in $\mathbf{LJ}^{++}$
. In fact:

\begin{restatable}{theorem}{ThmQFSLJpp}
\label{thm: QFS, LJpp}
Let $S$ be a sequent. We have 
\[
\mathbf{LJ^{++}} \vdash S \qquad \text{if{f}} \qquad \mathbf{QFS} \vdash S.
\]
\end{restatable}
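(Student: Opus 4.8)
The plan is to read the biconditional as two inclusions and to exploit the fact that, by its defining clause, a $\mathbf{QFS}$-proof of $\Pi \Rightarrow \Gamma$ is nothing but an $\LJ$-proof of $\Delta, \Pi \Rightarrow \Gamma$ in which $\Delta$ is a finite multiset of instances of $\CD$, $\ED$, and $\SW$. Thus the theorem says precisely that $\mathbf{LJ^{++}}$ and the system ``$\LJ$ augmented by quantifier-shift assumptions'' derive the same sequents, and the work splits into showing that each calculus simulates the other.

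For the direction $\mathbf{QFS} \vdash S \Rightarrow \mathbf{LJ^{++}} \vdash S$, first I would exhibit explicit $\mathbf{LJ^{++}}$-derivations of (arbitrary instances of) $\CD$, $\ED$, and $\SW$. Each such derivation uses a single unsound strong-quantifier inference --- exactly of the kind illustrated in Example \ref{unsound-ex} --- whose characteristic variable is afterwards captured by a weak-quantifier inference, so that it is absent from the end-sequent; consequently substitutability holds and, there being just one strong inference, the side-variable and very-weak-regularity conditions are trivial, making each derivation suitable. Since every $\LJ$-inference is suitable for any $\LJ$-proof, an $\LJ$-proof of $\Delta, \Pi \Rightarrow \Gamma$ is already an $\mathbf{LJ^{++}}$-proof; cutting against the $\mathbf{LJ^{++}}$-proofs of the members of $\Delta$ removes all shift assumptions and yields $\Pi \Rightarrow \Gamma$. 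The only point demanding care is that the combined proof remain suitable: I would rename the characteristic variables of the inserted shift-derivations to be globally fresh, which prevents new cycles in $<_\pi$, preserves very weak regularity, and keeps substitutability, since each such variable is bound inside its own shift sub-proof.

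The converse direction, $\mathbf{LJ^{++}} \vdash S \Rightarrow \mathbf{QFS} \vdash S$, is the substantive one. Given an $\mathbf{LJ^{++}}$-proof $\pi$ of $\Pi \Rightarrow \Gamma$, I would proceed by induction on the number of unsound strong-quantifier inferences in $\pi$ (those in which the characteristic variable violates the usual eigenvariable condition), producing a finite multiset $\Delta$ of shift instances together with an $\LJ$-proof of $\Delta, \Pi \Rightarrow \Gamma$. Since $<_\pi$ is acyclic, its transitive closure is a strict partial order, so I may pick an unsound inference $I$ whose characteristic variable $a$ is extremal for $<_\pi$, so that eliminating $I$ leaves the side-variable structure of the remaining inferences intact; substitutability guarantees that $a$ is eventually bound below $I$ by a weak-quantifier inference. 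The core of the argument is to recognize the configuration consisting of $I$ and the weak inference that binds $a$ as an instance of one of the three shift schemes --- $\CD$ for the $\forall$/$\vee$ pattern, $\ED$ for an existential under the antecedent of an implication, and $\SW$ for a universal under the antecedent of an implication --- and to add that closed instance to the antecedent, after which $I$ can be replayed as a legal $\LJ$ inference, strictly lowering the count of unsound inferences. Very weak regularity is what makes this rewriting coherent: whenever a characteristic variable recurs it carries the same principal formula, so the extracted shift instance is unambiguous.

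The hard part will be exactly this matching step in the converse direction: showing that the context surrounding each unsound inference, up to the point where its eigenvariable is bound, can always be rewritten into a genuine (closed) shift instance, and that when several unsound inferences are nested or share variables the extraction can be carried out in an order compatible with $<_\pi$ without reintroducing unsoundness elsewhere. Once this local-to-global bookkeeping is under control, the induction closes and both inclusions combine to give the stated equivalence.
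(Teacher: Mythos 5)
Your $(\Leftarrow)$ direction matches the paper's: exhibit $\LJpp$-derivations of $\CD$, $\ED$, $\SW$ (one unsound inference each), observe that an $\LJ$-proof of $\Delta,\Pi \Rightarrow \Gamma$ is already an $\LJpp$-proof, and cut the shift instances away; the global renaming of characteristic variables you add to preserve suitability is a legitimate refinement. The gap is in your $(\Rightarrow)$ direction, which rests on two claims that fail in general. First, substitutability only says the characteristic variable $a$ is absent from the \emph{end-sequent}; it does not guarantee that $a$ is ``eventually bound below $I$ by a weak-quantifier inference.'' Since $\LJpp$-proofs may contain cuts (and Theorem \ref{thm: QFS, LJpp} is not restricted to cut-free proofs), $a$ can vanish inside a cut formula, leaving no weak inference to pair with $I$ at all. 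Second, even when a binding weak inference exists, the slice of the proof between $I$ and that inference is an arbitrary derivation --- it can contain contractions, propositional steps, and other (possibly unsound) quantifier inferences whose characteristic variables interleave with $a$ via $<_\pi$ --- so there is no reason the resulting ``configuration'' is, or can locally be rewritten into, a closed instance of $\CD$, $\ED$ or $\SW$. You flag exactly this matching step as the hard part, but it is the part the proposal leaves unproven, and it is where the argument breaks.

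The paper avoids pattern-matching entirely by a uniform move you are missing: replace each unsound $\forall R$ inference by the corresponding sound inference with the extra antecedent hypothesis $A(a) \to \forall x A(x)$, and each unsound $\exists L$ by the hypothesis $\exists x B(x) \to B(b)$, obtaining an honest $\LJ$-proof of $\{\alpha_i\},\{\beta_j\},\Sigma \Rightarrow \Lambda$. The suitability conditions are then spent \emph{globally} rather than locally: substitutability makes the added variables absent from $\Sigma,\Lambda$, acyclicity of $<_\pi$ yields an order in which repeated $(\exists L)$ applications closing each hypothesis existentially satisfy their eigenvariable conditions, and very weak regularity ensures one hypothesis per variable. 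Finally the existential closures $\exists y\,(A(y) \to \forall x A(x))$ and $\exists z\,(\exists x B(x) \to B(z))$ are provable in $\mathbf{QFS}$ from $\SW$ and $\ED$ respectively (instantiating $B := \forall x A(x)$, resp.\ $B := \exists x B(x)$, with trivially provable antecedents), so cuts remove them. Note also that $\CD$ plays no role in this direction --- indeed $\IQC + \SW \vdash \CD$ by Theorem \ref{Thm: distinct fragments} --- so your three-way taxonomy ($\CD$ for the $\forall/\vee$ pattern, etc.) misreads how unsound inferences map onto the shift axioms: the unsound inferences are exactly unsound $\forall R$ and unsound $\exists L$, answered by $\SW$ and $\ED$ alone.
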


\begin{proof}
For a more detailed proof, see Appendix.

$(\Leftarrow)$ Let $\mathbf{QFS} \vdash S$. If $\mathbf{LJ} \vdash S$ then $\mathbf{LJ}^{++} \vdash S$. Moreover, the principles $\CD$, $\ED$, and $\SW$ are provable in $\mathbf{LJ}^{++}$, e.g.
\begin{center}
\begin{tabular}{c c}
\hspace{-20pt}
$\mathbf{LJ}^{++} \vdash \SW$
&
\hspace{20pt}
\AxiomC{$A(a) \Rightarrow A(a)$}
\RightLabel{$*$}
\UnaryInfC{$A(a) \Rightarrow \forall x A(x)$}
\AxiomC{$B \Rightarrow B$}
\BinaryInfC{$A(a), \forall x A(x) \to B \Rightarrow B$}
\UnaryInfC{$\forall x A(x) \to B \Rightarrow A(a) \to B$}
\UnaryInfC{$\forall x A(x) \to B \Rightarrow \exists x (A(x) \to B)$}
\DisplayProof
\end{tabular}
\end{center}
Note the rule marked by $(*)$ makes the proof unsound for $\mathbf{LJ}$.

$(\Rightarrow)$ Let $\mathbf{LJ}^{++} \vdash^{\pi} S$, where $S=(\Sigma \Rightarrow \Lambda)$. Change $\pi$ to $\pi'$ by replacing each unsound universal and existential quantifier inference as follows:
\small \begin{center}
\begin{tabular}{c}
\AxiomC{$\Gamma \Rightarrow A(a)$}
\UnaryInfC{$\Gamma \Rightarrow \forall x A(x)$}
\DisplayProof
$\rightsquigarrow$
\AxiomC{$\Gamma \Rightarrow A(a)$}
\AxiomC{$\forall x A(x) \Rightarrow \forall x A(x)$}
\BinaryInfC{$\Gamma, A(a) \to \forall x A(x) \Rightarrow \forall x A(x)$}
\DisplayProof
\end{tabular}
\end{center}

\small \begin{center}
\begin{tabular}{c}
\AxiomC{$\Gamma, B(b) \Rightarrow \Delta$}
\UnaryInfC{$\Gamma, \exists x B(x) \Rightarrow \Delta$}
\DisplayProof
$\rightsquigarrow$
\AxiomC{$\exists x B(x) \Rightarrow \exists x B(x)$}
\AxiomC{$B(b), \Gamma \Rightarrow \Delta$}
\BinaryInfC{$\Gamma, \exists x B(x), \exists x B(x) \to B(b) \Rightarrow \Delta$}
\DisplayProof
\end{tabular}
\end{center}
\normalsize Denote $A_i(a_i) \rightarrow \forall x A_i(x)$ by $\alpha_i(a_i)$ and $\exists x B_j(x) \rightarrow B_j(b_j)$ by $\beta_j(b_j)$.
Therefore, $\pi'$ is a proof in $\mathbf{LJ}$ such that
\[
\mathbf{LJ} \vdash^{\pi'} \{\alpha_i\}_{i \in I}, \{\beta_j\}_{j \in J}, \Sigma \Rightarrow \Lambda, \tag{1}
\]
for (possibly empty) sets $I$ and $J$. Now, we claim that we can use the rule $(\exists L)$ as many times as needed to get 
\[
\mathbf{LJ} \vdash \{\exists y_i \alpha_i (y_i)\}_{i \in I}, \{\exists z_j \beta_j(z_j))\}_{j \in J}, \Sigma \Rightarrow \Lambda. \tag{2}
\]
Suppose the claim holds. To get a proof of $S$ in $\mathbf{QFS}$, we remove the additional formulas in $(2)$. 
For each $i \in I$ and $j \in J$, we have 
\small\[
\mathbf{QFS} \vdash \ \Rightarrow (\forall y_i A_i(y_i) \to \forall x A_i (x)) \to \exists y_i (A_i(y_i)\to \forall x A_i (x)),
\]
\normalsize since the succedent of the sequent is an instance of the principle $\SW$ and hence provable in $\mathbf{QFS}$. Similarly,
\small \[
\mathbf{QFS} \vdash \ \Rightarrow (\exists x B_j(x) \to \exists z_j B_j(z_j)) \to \exists z_j(\exists x B_j(x) \rightarrow B_j(z_j)),
\]
\normalsize as the succedent is an instance of the principle $\ED$. Therefore, 
\[
\mathbf{QFS} \vdash \ \Rightarrow \exists y_i (A_i(y_i)\to \forall x A_i (x)),
\]
\[\mathbf{QFS} \vdash \ \Rightarrow \exists z_j(\exists x B_j(x) \rightarrow B_j(z_j)).
\]
By the cut rule on the closure 
of the above sequents (under the universal quantifier), and sequent $(2)$, we get $\mathbf{QFS} \vdash S$.
\end{proof}


\begin{definition}
A logic $L$ has the \emph{existence property}, EP, if whenever $L \vdash \exists x A(x)$, where $A(x)$ has no other free variables, then there is some term $t$ such that $L \vdash A(t)$.
\end{definition}

\begin{remark}
    Example \ref{unsound-ex} shows  $\QFS$ does not have EP.  
   \end{remark}
\subsection{From quantifier shifts to deSkolemization}

The following theorem is proved in \cite[Proposition 4.8]{ABaaz} for $\mathbf{LK}^{++}$. We manipulate the proof to get the result for $\LJpp$.

\begin{theorem}\label{thm: cut free LJpp}
Let $\Ga \Rightarrow \De$ be a given sequent. If the Skolemization of the end-sequent $\Gamma \Rightarrow \Delta$ is cut-free $\mathbf{LJ}^{++}$-derivable with atomic axioms, then $\Gamma \Rightarrow \Delta$ is cut-free $\mathbf{LJ}^{++}$-derivable. 
\end{theorem}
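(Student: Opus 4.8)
The plan is to show that given a cut-free $\LJpp$-derivation of the Skolemized end-sequent $(\Gamma \Rightarrow \Delta)^S$, one can reconstruct a cut-free $\LJpp$-derivation of the original $\Gamma \Rightarrow \Delta$ by \emph{reversing the Skolemization step by step}. The natural strategy is induction on the number of Skolem functions (equivalently, on the number of strong quantifiers eliminated in forming $(\Gamma \Rightarrow \Delta)^S$). For the inductive step, I would focus on the \emph{first} strong quantifier $(Qy)$ that was removed, say it was replaced by a Skolem term $f(x_1,\ldots,x_n)$ (or a constant $c$ when $n=0$). The key observation, and the reason this works in $\LJpp$ but fails in ordinary $\LJ$, is that the relaxed eigenvariable condition lets us turn a Skolem term back into a strong-quantifier inference without worrying about variable clashes: in $\LJpp$ a strong-quantifier inference may have its characteristic variable occurring in the conclusion, so suitability (acyclicity of $<_\pi$ and very weak regularity) is all we need to restore, rather than genuine eigenvariable-freshness.

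Concretely, I would first locate in the given cut-free $\LJpp$-proof of $(\Gamma\Rightarrow\Delta)^S$ every occurrence of the Skolem function symbol $f$ heading a term. Since the proof is cut-free and uses atomic axioms, by the subformula property the symbol $f$ appears only inside subformulas of the end-sequent, and in particular all the terms built with $f$ are of the shape $f(t_1,\ldots,t_n)$ obtained by instantiating the weak quantifiers $(Q_1 x_1)\cdots(Q_n x_n)$ that dominated $(Qy)$. The plan is to trace the argument terms $t_1,\ldots,t_n$ back to the weak-quantifier inferences that introduced them. I would then replace the atomic Skolem-predecessor occurrences so that the formula $A_{-(Qy)}\{y \leftarrow f(\vec x)\}$ is restored to $Qy\,A(y)$, introducing a genuine strong-quantifier inference $(\forall R)$ or $(\exists L)$ at the point where the outermost weak quantifier governing that occurrence is introduced (choosing a fresh characteristic variable $a$ and substituting $a$ for the Skolem term). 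This is exactly the inverse of the replacement displayed in the proof of Theorem \ref{thm: QFS, LJpp}, read upward.

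The step I expect to be the main obstacle is verifying \emph{suitability} of the newly inserted strong-quantifier inferences, i.e.\ that the resulting $\LJpp$-proof still satisfies conditions (1)--(3) of Definition \ref{Def: suitable}. Substitutability is immediate once we pick the characteristic variable fresh (or even, in $\LJpp$, is automatically discharged because the condition on the conclusion is dropped). The delicate points are the side-variable condition, namely keeping $<_{\pi'}$ acyclic, and very weak regularity; here the dependency structure must respect the original nesting of $f$ inside the weak quantifiers $(Q_1 x_1)\cdots(Q_n x_n)$, so that the new strong inference producing $Qy\,A(y)$ is ordered strictly \emph{above} (in $<_{\pi'}$) the inferences producing the arguments of $f$, matching the order in which Skolemization records the dependency. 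The proof in \cite[Proposition 4.8]{ABaaz} establishes precisely this for $\LKpp$; my task is to check that the single-conclusion restriction of $\LJ$ is preserved throughout the transformation — which it is, since undoing a Skolem term only rewrites formulas and inserts one unary inference, never duplicating succedent formulas — and that the acyclicity/regularity bookkeeping goes through verbatim. Hence the manipulation of the $\LKpp$ argument reduces to confirming that every rewriting step stays within single-conclusion sequents and that the induced side-variable order remains a strict partial order.
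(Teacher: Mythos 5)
Your overall strategy is the paper's: deSkolemize the given cut-free proof by turning Skolem terms into fresh variables indexed by those terms, reinsert the strong-quantifier inferences, and then verify the three suitability conditions (substitutability because the deSkolemized end-sequent is Skolem-free, very weak regularity because equal Skolem terms yield equal variables and the same principal formula, acyclicity of $<_\pi$ by choosing the insertion order coherently), all by adapting the Aguilera--Baaz argument for $\LKpp$ to the single-conclusion setting. However, your placement rule has a concrete flaw: inserting the restored $(\forall R)$ or $(\exists L)$ ``at the point where the outermost weak quantifier governing that occurrence is introduced'' is too low in the proof whenever propositional structure intervenes between the strong quantifier and its dominating weak quantifiers. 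Take $\exists x\,(P(x) \wedge \forall y\, A(x,y))$ with Skolemization $\exists x\,(P(x) \wedge A(x,f(x)))$: in the given proof the residue $A(t,f(t))$ is merged by $(R\wedge)$ \emph{before} the weak inference $(\exists R)$ is ever applied, and once the formula $P(t)\wedge A(t,a_{f(t)})$ has been formed, no sequent rule can quantify the inner subformula. This is why the paper's algorithm inserts quantifiers \emph{topmost as possible}, i.e., while the deSkolemized subformula still occurs as a whole formula of a sequent, rather than at the dominating weak-quantifier inference.

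The second, and central, omission is contraction, both explicit and implicit ($\vee$-left, $\wedge$-right, $\to$-left with shared contexts). Two ancestor copies of a contracted formula may instantiate weak quantifiers differently above the contraction, so their Skolem residues carry distinct terms $f(\vec t\,) \neq f(\vec s\,)$ and hence, after your replacement step, distinct free variables; the contraction survives only if every quantifier that can be inferred is inferred \emph{above} the contraction, so that both copies are rebuilt into the identical target formula before they merge. The paper makes exactly this the pivot of its algorithm (steps 2 and 3), and without it your transformation does not produce a well-formed proof. Two smaller repairs: your aside that substitutability is ``automatically discharged because the condition on the conclusion is dropped'' conflates the dropped \emph{local} eigenvariable condition with the retained \emph{global} condition (1) of Definition \ref{Def: suitable} --- the characteristic variable must not occur in the end-sequent of the whole proof, which holds here only because the deSkolemized end-sequent contains no Skolem terms, not by fiat; and your one-function-at-a-time induction must also handle occurrences of the Skolem symbol nested inside weak-quantifier instantiation terms (weak rules may instantiate with arbitrary terms containing $f$), which is precisely why the paper indexes the fresh variables by entire Skolem terms and replaces them globally in a single pass.
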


\begin{proof}
Let $\pi$ be a cut-free proof with atomic axioms\footnote{Atomic axioms are $\bot\Rightarrow $ and $A\Rightarrow A$, where $A$ is an \emph{atomic} formula.} only, such that 
\[
\LJpp \vdash^\pi (\Gamma \Rightarrow \Delta)^S.
\]
We want to prove that there exists a cut-free proof of $\Gamma \Rightarrow \Delta$ in $\LJpp$. Therefore, we need to deSkolemize $\pi$ and obtain a cut-free proof of the original end-sequent $\Gamma \Rightarrow \Delta$.  \\
\textbf{The deSkolemization algorithm}\footnote{Note that it is a generalization of the second $\varepsilon$-theorem \cite{hilbert}.}:\\
Before we give the algorithm, we should note that for the deSkolemization procedure, it is crucial to have a target end-sequent $\Gamma \Rightarrow \Delta$ (which is chosen under the assumption that the proof of the Skolemization of this target end-sequent is given). 
Note that, by definition, in $(\Gamma \Rightarrow \Delta)^S$ the sets of Skolem functions associated with each formula in $\Gamma \Rightarrow \Delta$ are distinct.
\begin{enumerate}
   \item Replace every Skolem term $t$ 
    with a fresh free variable (indexed with the Skolem term) $a_t$. 
\item We argue on the branches of the proof $\pi$ top down and introduce the strong and weak quantifiers topmost as possible, fitting to the formulas in the target end-sequent. This means that no quantifiers are inferred that
violate the structure of the formulas in the target end-sequent. The topmost
inferences on different branches
may be in any order, but above contractions, all quantifiers that are
possible to infer have to be inferred above contractions. The implicit contractions ($\vee$-left, $\wedge$-right, $\limp$-left) are also
considered as contractions.
\item Propositional inferences beside weakenings, $\vee$-left, $\wedge$-right, and contractions are performed as before. In case of weakenings, $\vee$-left, $\wedge$-right and contractions, 
the subformulas containing quantifiers not inferred are
replaced by the inferences
of the corresponding deSkolemized subformulas according to the target end-sequent.
\item Repeat the procedure until the formulas up to quantifiers are completed. 
\end{enumerate}
The procedure terminates. Note that there is no obstacle to
contractions according to steps 2) and 3). In this way, the propositional proof structure remains the same. 

Now, we need to show that the obtained deSkolemized proof is an $\LJpp$-proof.
Therefore, we need to check that the suitable conditions in the Definition \ref{Def: suitable quantifier} are satisfied:

\item[$\bullet$] \textbf{Substitutability} is fulfilled as after the deSkolemization procedure, the end-sequent will contain no Skolem terms, therefore a new characteristic variable does not occur in the conclusion.
\item[$\bullet$] \textbf{Very weak regularity} is obviously satisfied, as always the same Skolem terms generate the same variables and therefore belong to the same formula. This may not be possible in $\LJ$ (since while introducing strong quantifiers it may fail for eigenvariable condition).
\item[$\bullet$] \textbf{Side variable condition} - the order of the quantifier inferences are determined by $\Gamma\Rightarrow \Delta$. We set it such that the characteristic variable of the later quantifier inference is always bigger than the characteristic variable of any earlier quantifier inferences. The side variable conditions are part of this order and therefore acyclic.
  \end{proof}

\begin{remark}
Note a few key remarks in the above proof:
\item[$\bullet$] We cannot eliminate cuts in an $\LJpp$-proof. To avoid cut elimination, we take an $\LJ$-proof with cuts of the end-sequent $\Gamma \Rightarrow \Delta$. We Skolemize everything (including the end-sequent) except cuts and eliminate cuts. A cut-free $\LJ$-proof with only weak quantifiers in the end-sequent obtained in this way is also cut-free $\LJpp$-proof $\pi$ of $(\Gamma \Rightarrow \Delta)^S$ (see the proof of Theorem \ref{Thm: Skolem for QFS}). 
\item[$\bullet$] As we have already seen in Theorem \ref{unsound}, $\LJpp$ is not sound for $\iqc$ and so it can derive unsound results as in Example \ref{unsound-ex}. Therefore, it is important to guarantee that the calculi can be corrected, and this is shown in the Theorem \ref{thm: QFS, LJpp}. 
 \item[$\bullet$] Restriction to atomic axioms is purely for simplicity reason since deSkolemizing atomic axioms produce axioms again, which is not the case for non-atomic axioms. In $\LJ$ and $\LJpp$ we can always derive compound 
axioms from atomic ones. 
\end{remark}
\noindent We illustrate the deSkolemization algorithm with an example.
\begin{example}
 Consider an instance of $\CD$
\[
\forall x(A(x)\lor B)\Rightarrow (\forall xA(x)\lor B),
\]
where $A$ and $B$ are atomic formulas, $A$ has only the variable $x$, and $B$ has no variables. The following is an $\LJ$-proof of the Skolemization of $\CD$, where $c$ is a fresh variable:
\begin{center}
\AxiomC{$A(c) \Rightarrow A(c)$}
\UnaryInfC{$A(c) \Rightarrow A(c)\lor B$}
\AxiomC{$B\Rightarrow B$}
\UnaryInfC{$B \Rightarrow A(c)\lor B$}
\BinaryInfC{$A(c)\lor B \Rightarrow A(c)\lor B$}
\UnaryInfC{$\forall x (A(x)\lor B) \Rightarrow A(c)\lor B$}
\DisplayProof
\end{center}
This is also a cut-free proof in $\LJpp$.
Now, we deSkolemize the above proof with respect to the end-sequent $\CD$:
\begin{center}
\AxiomC{$A(a_c) \Rightarrow A(a_c)$}
\RightLabel{\scriptsize $(*)$}
\UnaryInfC{$A(a_c) \Rightarrow \forall xA(x)$ }
\UnaryInfC{$A(a_c) \Rightarrow \forall xA(x) \vee B$ }
\AxiomC{$B\Rightarrow B$}
\UnaryInfC{$B \Rightarrow \forall x A(x) \vee B$}

\BinaryInfC{$A(a_c)\lor B \Rightarrow \forall x A(x)\lor B$ }
\RightLabel{\scriptsize $(\dagger)$}
\UnaryInfC{$\forall x(A(x)\lor B) \Rightarrow \forall x A(x)\lor B$ }
\DisplayProof
\end{center}
where $a_c$ is a fresh variable for the Skolem term $c$. Call this proof $\pi$. In the left branch, the topmost sequent where $a_c$ is appeared is designated by $(*)$ and in the right branch by $(\dagger)$. Clearly, $\pi$ is a cut-free $\LJpp$-proof of $\CD$.

Now, as an example of how Theorem \ref{thm: QFS, LJpp} works, we will transform this $\LJpp$-proof to a proof in $\mathbf{QFS}$. As there is only one unsound quantifier inference ($*$), we replace it by
\begin{center}
\AxiomC{$A(a_c) \Rightarrow A(a_c)$}
\AxiomC{$\forall x A(x) \Rightarrow \forall x A(x)$}
\BinaryInfC{$A(a_c), A(a_c) \to \forall x A(x) \Rightarrow \forall x A(x)$}
\DisplayProof
\end{center}
We get:
\[
\LJ \vdash A(a_c) \vee B, A(a_c) \to \forall x A(x) \Rightarrow \forall x A(x) \vee B.
\]
Call this proof $\sigma$. Thus,
\begin{center}
\AxiomC{$\sigma$}
\noLine
\UnaryInfC{$A(a_c) \vee B, A(a_c) \to \forall x A(x) \Rightarrow \forall x A(x) \vee B$}
\RightLabel{\scriptsize $(\forall L)$}
\UnaryInfC{$\forall y (A(y) \vee B), A(a_c) \to \forall x A(x) \Rightarrow \forall x A(x) \vee B$}
\RightLabel{\scriptsize $(\exists L)$}
\UnaryInfC{$\forall y (A(y) \vee B), \exists z(A(z) \to \forall x A(x)) \Rightarrow \forall x A(x) \vee B$}
\DisplayProof
\end{center}
Note that all the eigenvariable conditions are satisfied, therefore, this is a proof in $\LJ$ and hence in $\mathbf{QFS}$. However, as shown in the proof of Theorem \ref{thm: QFS, LJpp}, we have
\[
\mathbf{QFS} \vdash \; \Rightarrow \exists z(A(z) \to \forall x A(x)).
\]
Thus, by cut, we get $\mathbf{QFS} \vdash \CD$.
\end{example}

Another illuminating example is given in Example \ref{Ex: deSkolem}, which because of lack of space we have moved it to Appendix.

Now, we aim to prove that if an intermediate logic contains $\QFS$, then Skolemization is sound and complete for it. The following proposition is an easy observation.

\begin{proposition}\label{Prop: QFL}
Let $L$ be an intermediate logic and $\Gamma \Rightarrow \Delta$ be a sequent where $\Gamma$ and $\Delta$ are multisets of formulas in $L$. If $\bigwedge \Gamma \to \bigvee \Delta \in L$, then $\LJ \vdash \Pi, \Gamma \Rightarrow\Delta$, where $\Pi$ is a multiset consisting of formulas from $L$. 
\end{proposition}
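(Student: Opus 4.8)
The plan is to prove this by induction on a derivation witnessing $\bigwedge \Gamma \to \bigvee \Delta \in L$, but it is cleaner to argue semantically-in-syntax via the deduction properties of $\iqc$. Since $L$ is an intermediate logic, $L \vdash \bigwedge \Gamma \to \bigvee \Delta$. By the definition of a logic $L$ as a set of consequences closed under the axioms and rules of $\iqc$, and since $L$ contains a recursively enumerable set of extra axioms, there is a finite multiset $\Pi$ of instances of $L$-axioms (formulas in $L$) such that $\iqc$ already proves $\bigwedge \Pi \to (\bigwedge \Gamma \to \bigvee \Delta)$. This is the standard fact that provability in an axiomatic extension of $\iqc$ reduces to $\iqc$-provability from finitely many axiom instances, obtained by collecting the axioms used along the (finite) Hilbert-style derivation.

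First I would make the passage from $L$-provability to $\iqc$-provability-modulo-$\Pi$ precise. Working in the Hilbert calculus for $L$, take a derivation of $\bigwedge\Gamma \to \bigvee\Delta$; let $\Pi = \{C_1, \dots, C_k\}$ enumerate the (closed instances of) non-$\iqc$ axioms invoked in it. Since modus ponens and generalization are the only rules, a routine induction shows $\iqc \vdash \bigwedge \Pi \to (\bigwedge \Gamma \to \bigvee \Delta)$; here one uses that $\iqc$ proves the deduction-theorem equivalences and is closed under the relevant structural manipulations. Each $C_i \in L$ by construction, so $\Pi$ is indeed a multiset of formulas from $L$, as required.

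Next I would convert this $\iqc$-theorem into the desired sequent. From $\iqc \vdash \bigwedge \Pi \to (\bigwedge \Gamma \to \bigvee \Delta)$ one obtains, by the soundness and completeness of $\LJ$ for $\iqc$, an $\LJ$-proof of the sequent $\Rightarrow \bigwedge\Pi \to (\bigwedge\Gamma \to \bigvee\Delta)$. Then repeated applications of the invertible rules $(R\to)$, $(L\wedge_1)$, $(L\wedge_2)$ (to unpack $\bigwedge \Pi$ and $\bigwedge \Gamma$ into the antecedent) together with $(L\vee)$-style reasoning, or more directly cuts against the provable sequents $\bigvee\Delta \Rightarrow \Delta$ and $\Pi \Rightarrow \bigwedge\Pi$, yield $\LJ \vdash \Pi, \Gamma \Rightarrow \Delta$. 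Concretely, from $\Rightarrow \bigwedge \Pi \to (\bigwedge \Gamma \to \bigvee \Delta)$ and the easily derivable sequents $\Pi, \Gamma \Rightarrow \bigwedge\Pi \to (\bigwedge\Gamma \to \bigvee\Delta)$ implies (via $(L\to)$ twice, using $\Pi \Rightarrow \bigwedge\Pi$ and $\Gamma \Rightarrow \bigwedge\Gamma$) the sequent $\Pi, \Gamma, \bigvee\Delta \Rightarrow \bigvee\Delta$ reduces, and a final decomposition of $\bigvee\Delta$ on the right via $(R\vee_1), (R\vee_2)$ recovers $\Delta$ as a genuine multiset in the succedent.

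The main obstacle I anticipate is purely bookkeeping rather than conceptual: correctly threading the multiset $\bigvee\Delta$ back into the multiset $\Delta$ in a single-conclusion calculus, since $\LJ$ succedents carry at most one formula, so one must be careful that the intended reading is of $\LJ$ extended implicitly or that $\Delta$ is effectively a single formula (or handled through the $\bigvee$ encoding). I would resolve this by working throughout with the formula $\bigvee \Delta$ on the right and only at the very end invoking the derivability of $\bigvee\Delta \Rightarrow \Delta$ together with a cut, so that the statement's ``$\Pi, \Gamma \Rightarrow \Delta$'' is read via its intended meaning $\bigwedge(\Pi,\Gamma) \to \bigvee\Delta$. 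Since $\Pi$ is finite and all its members lie in $L$, and the whole construction stays inside $\iqc \subseteq \LJ$, the proposition follows.
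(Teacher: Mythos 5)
Your argument is correct in substance, but it takes a long detour where the paper intends a one-liner: the proposition is introduced as ``an easy observation'' and given no proof, because the cheapest witness is simply $\Pi=\{\bigwedge\Gamma\to\bigvee\Delta\}$. By hypothesis this formula lies in $L$, and $\LJ \vdash \bigwedge\Gamma\to\bigvee\Delta, \Gamma \Rightarrow \Delta$ follows by a single application of $(L\to)$ to the two derivable sequents $\Gamma\Rightarrow\bigwedge\Gamma$ and $\bigvee\Delta,\Gamma\Rightarrow\Delta$; the latter is unproblematic because $\LJ$ is single-conclusion, so $\Delta$ is at most one formula---exactly the situation in the paper's only use of the proposition (Theorem \ref{Thm: Skolem for QFS}, where $\Delta$ is the single formula $A^S$). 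No completeness theorem, no Hilbert-style induction, and no deduction theorem is needed. What your longer route buys is a mildly stronger statement: $\Pi$ can be taken among instances of a chosen axiomatization of $L$, which resonates with the paper's definition of the calculus $\mathbf{QFS}$ as $\LJ$-derivability from instances of $\CD$, $\ED$, and $\SW$. But the proposition as stated does not ask for this.

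Three repairable blemishes in your write-up. First, the assumption that $L$ ``contains a recursively enumerable set of extra axioms'' is both unwarranted for an arbitrary intermediate logic (the paper's definition imposes no effectiveness condition) and unnecessary: a single derivation uses only finitely many formulas of $L$, and in the worst case $L$ axiomatizes itself. Second, your deduction-theorem induction over a Hilbert derivation breaks at the generalization rule when a variable occurring free in one of the collected axiom instances is generalized; the standard repair is to place the universal closures $\forall\bar{x}\,C_i$ into $\Pi$ instead, and these are still in $L$ since $L$ is closed under the rules of $\iqc$. Your parenthetical ``(closed instances of)'' gestures at this, but a derivation need not invoke only closed instances, so the closure step should be made explicit. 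Third, the claim that $(R\vee_1),(R\vee_2)$ ``recover $\Delta$ as a genuine multiset in the succedent'' is backwards: those rules introduce disjunctions rather than decompose them, and a multi-formula succedent is not an $\LJ$ sequent at all. Your fallback---reading $\Pi,\Gamma\Rightarrow\Delta$ through $\bigvee\Delta$, with $\Delta$ effectively a singleton---is the correct interpretation and is all that the paper ever uses.
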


\begin{theorem}\label{Thm: Skolem for QFS}
Let $L$ be an intermediate logic. If $\QFS \subseteq L$, then Skolemization is sound and complete for $L$.
\end{theorem}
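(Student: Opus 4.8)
The plan is to prove the two directions of Skolemization soundness and completeness for $L$ separately, where $\QFS \subseteq L$. For the soundness direction ($L \vdash A$ implies $L \vdash A^S$), I would proceed by induction on the number of strong quantifiers eliminated in forming $A^S$. The crucial observation is that each individual Skolemization step replaces a strong quantifier by a Skolem term, and the inverse implication (from the Skolem form back to the original) is exactly an instance of one of the quantifier shift principles $\CD$, $\ED$, or $\SW$ available in $L$. More precisely, eliminating a single strong quantifier $(Qy)$ sitting in the scope of weak quantifiers $(Q_1 x_1)\ldots(Q_n x_n)$ corresponds to applying the relevant shift axiom $n$ times to push the quantifier outward; since these shifts are valid in $\QFS \subseteq L$, and since $L$ is closed under the rules of $\iqc$ (in particular generalization and modus ponens), each step preserves $L$-provability. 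Iterating until no strong quantifiers remain yields $L \vdash A^S$.

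For the completeness direction ($L \vdash A^S$ implies $L \vdash A$), which I expect to be the substantive part, the plan is to route the argument through the proof systems $\LJpp$ and $\mathbf{QFS}$ developed in the preceding subsections. First I would reduce the claim to a statement about the sequent $\Gamma \Rightarrow \Delta$ whose associated formula is $A$. By Proposition~\ref{Prop: QFL} and the definition of $\mathbf{QFS}$, it suffices to show that $L$-provability of the Skolem form lifts to $\mathbf{QFS}$-provability of the original sequent. The key machinery is Theorem~\ref{thm: cut free LJpp}: if the Skolemization $(\Gamma \Rightarrow \Delta)^S$ has a cut-free $\LJpp$-proof with atomic axioms, then $\Gamma \Rightarrow \Delta$ itself is cut-free $\LJpp$-derivable, and then Theorem~\ref{thm: QFS, LJpp} converts this $\LJpp$-proof into a genuine $\mathbf{QFS}$-proof, hence into an $L$-proof since $\QFS \subseteq L$.

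The main obstacle, and the step I would treat most carefully, is manufacturing the required cut-free $\LJpp$-proof of $(\Gamma \Rightarrow \Delta)^S$ from the mere assumption $L \vdash A^S$. The difficulty is that $L$ is an arbitrary intermediate logic above $\QFS$, so a priori $L$-provability gives us no proof-theoretic object to manipulate, and $\LJpp$ does not admit cut elimination (Theorem~\ref{unsound}). The route indicated in the first remark following Theorem~\ref{thm: cut free LJpp} is the one I would follow: since $A^S$ contains only weak quantifiers, $L \vdash A^S$ reduces via Proposition~\ref{Prop: QFL} to an $\LJ$-derivation of $\Pi, (\Gamma \Rightarrow \Delta)^S$ where $\Pi$ collects instances of $\CD$, $\ED$, $\SW$ (themselves having only weak quantifiers after Skolemization); one then Skolemizes everything except the cuts, eliminates cuts in the ordinary $\LJ$ fashion, and observes that the resulting cut-free $\LJ$-proof of the Skolem form is simultaneously a cut-free $\LJpp$-proof with (after the usual reduction) atomic axioms. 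Here I must verify that cut elimination interacts correctly with the Skolem terms and that the eigenvariable conditions are not violated by the presence of Skolem functions.

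Finally I would assemble the pieces: the soundness direction by the induction on shift applications, and the completeness direction by the chain $L \vdash A^S \Rightarrow$ cut-free $\LJpp$-proof of $(\Gamma \Rightarrow \Delta)^S \Rightarrow$ (by Theorem~\ref{thm: cut free LJpp}) cut-free $\LJpp$-proof of $\Gamma \Rightarrow \Delta \Rightarrow$ (by Theorem~\ref{thm: QFS, LJpp}) $\mathbf{QFS}$-proof, hence $\QFS \vdash A$ and therefore $L \vdash A$. The argument is uniform in the choice of structural versus Andrews Skolemization, since by Corollary~\ref{cor: shifts} the deSkolemization of the shift axioms coincides for both variants, which is what lets the single theorem cover both cases of Theorem~\ref{mainth}.
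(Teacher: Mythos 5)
Your skeleton for the deSkolemization direction is the paper's: reduce via Proposition~\ref{Prop: QFL} to a sequent, obtain a cut-free proof of its Skolemization (Skolemize everything but the cuts, then eliminate cuts in $\LJ$), deSkolemize by Theorem~\ref{thm: cut free LJpp}, and repair the resulting $\LJpp$-proof into a $\mathbf{QFS}$-proof by Theorem~\ref{thm: QFS, LJpp}. But there is a genuine error in how you instantiate this plan. Proposition~\ref{Prop: QFL} yields $\LJ \vdash \Pi, \Gamma \Rightarrow \Delta$ where $\Pi$ consists of \emph{arbitrary} formulas of $L$, not of instances of $\CD$, $\ED$, $\SW$: the theorem concerns an arbitrary intermediate logic $L \supseteq \QFS$, which may have axioms far beyond the quantifier shifts (e.g., $L = \QFS + \mathrm{Lin}$, or $L = \CQC$). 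Your restriction of $\Pi$ to shift instances makes the very first reduction fail for such $L$, and it propagates into a false conclusion at the end of your chain, where you write ``hence $\QFS \vdash A$ and therefore $L \vdash A$.'' That intermediate claim is refutable: take $L = \CQC$ and $A = ((P \to Q)\to P)\to P$ with $P,Q$ nullary predicates; then $A^S = A$ and $L \vdash A^S$, yet $\QFS \nvdash A$ since $\mathrm{PL}(\QFS) = \ipc$. The correct bookkeeping, as in the paper, carries side formulas $B_1,\dots,B_n \in L$ through the whole pipeline: one Skolemizes the full sequent $(B_1,\dots,B_n \Rightarrow A)^S$ (the $B_i$ get Skolemized too, as they may contain strong quantifiers in antecedent position), deSkolemizes and repairs to obtain $\mathbf{QFS} \vdash B_1,\dots,B_n \Rightarrow A$, and only then concludes $L \vdash A$ because each $B_i \in L$ and $\QFS \subseteq L$.

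Your soundness direction is also misargued, though the conclusion is trivially true. The claim that ``the inverse implication (from the Skolem form back to the original) is exactly an instance of one of the quantifier shift principles'' is false in general---were each one-step deSkolemization implication a shift instance, the entire proof-theoretic machinery of the paper would be redundant, and deSkolemization would hold far too easily. Soundness needs no shift principles at all: it holds in \emph{every} intermediate logic, because replacing a strong $\forall x B(x,\bar y)$ by $B(f(\bar y),\bar y)$ in a positive position (dually, a strong negative $\exists$) preserves provability by monotone replacement, using only $L \vdash \forall x B(x,\bar y) \to B(t,\bar y)$ and $L \vdash B(t,\bar y) \to \exists x B(x,\bar y)$; this is exactly the paper's one-line argument. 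Your alternative of ``pushing the quantifier outward'' with shifts is also technically hazardous: once a strong quantifier is prenexed past the weak quantifiers that dominate it, instantiating it can only produce a constant, losing the dependence of the Skolem term on $x_1,\dots,x_n$ that the definition of $A^S$ requires.
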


\begin{proof} 
Let $L$ be an intermediate logic such that $\QFS \subseteq L$. We have to show that for any formula $A$, we have
\[
L \vdash A \Longleftrightarrow L \vdash A^S.
\]
\item[$(\Rightarrow)$] Let $L$ be an intermediate logic. Using the following facts
\begin{center}
$L \vdash \forall x B(x, \bar{y}) \to B(t , \bar{y})$ \quad 
$L \vdash B(t , \bar{y}) \to \exists x B(x, \bar{y})$,
\end{center}
for any formula $B$ and term $t$, we can easily see that Skolemization is sound for $L$. Thus, for any formula $A$:
\[
\text{if} \quad L \vdash A \quad \text{then} \quad L \vdash A^S. \tag{$*$}
\] 

\item[$(\Leftarrow)$] We aim to show that if $\QFS \subseteq L$ and $L \vdash A^S$, then $L \vdash A.$
Assume $L \vdash A^S$. Therefore, by Proposition \ref{Prop: QFL}, there are formulas $B_1, \dots, B_n \in L$ such that 
\[
\LJ \vdash B_1, \dots, B_n \Rightarrow A^S.
\]
 As shown in \cite{BaazIemhoff06}, if $\LJ \vdash S$ for a sequent $S$, then $\LJ \vdash S^S$. Thus,
 \[
\LJ \vdash (B_1, \dots, B_n \Rightarrow A^S)^S,
\]
 which is by definition 
 \[
\LJ \vdash (B_1, \dots, B_n \Rightarrow A)^S.
\]
Therefore, by the cut-elimination of $\LJ$, there is a cut-free proof $\pi$ such that
\[
\mathbf{LJ} \vdash^\pi (B_1 , \dots , B_n \Rightarrow A)^S.
\]
As $\pi$ is also an $\LJpp$-proof and cut-free, by Theorem \ref{thm: cut free LJpp} 
\[
\LJpp \vdash B_1 , \dots , B_n \Rightarrow A.
\]
By theorem \ref{thm: QFS, LJpp}, we get
\[
\mathbf{QFS} \vdash B_1 , \dots , B_n \Rightarrow A.
\]
As each $B_i \in L$ and $\QFS \subseteq L$ we conclude that $L \vdash A$.
\end{proof}

\begin{corollary}
   \label{cor: Skolem for QFS}
Let $L$ be an intermediate logic. If $\QFS \subseteq L$, then Andrews Skolemization is sound and complete for $L$. 
\end{corollary}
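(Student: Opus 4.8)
The plan is to obtain the corollary by running the proof of Theorem~\ref{Thm: Skolem for QFS} essentially verbatim, replacing the structural Skolem form $(\cdot)^S$ by the Andrews Skolem form $(\cdot)^\SA$ throughout, and checking that each ingredient has an Andrews analogue. Three points must be verified: (i) soundness of Andrews Skolemization; (ii) preservation of $\LJ$-provability under Andrews Skolemization (the Andrews version of the result of \cite{BaazIemhoff06}); and (iii) the deSkolemization theorem (Theorem~\ref{thm: cut free LJpp}) for Andrews Skolem forms. The only genuinely new work lies in (iii); the remaining steps are routine adaptations.

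For the soundness direction $(\Rightarrow)$ the argument of Theorem~\ref{Thm: Skolem for QFS} carries over unchanged: an Andrews Skolem term is an ordinary term, so the instances $L \vdash \forall x B(x,\bar{y})\to B(t,\bar{y})$ and $L \vdash B(t,\bar{y})\to \exists x B(x,\bar{y})$ used to justify soundness remain available when $t$ is the (lower-arity) Andrews Skolem term. For the completeness direction $(\Leftarrow)$ I would follow the same chain: from $L \vdash A^\SA$ and $\QFS \subseteq L$, Proposition~\ref{Prop: QFL} gives $\LJ \vdash B_1,\dots,B_n \Rightarrow A^\SA$ with each $B_i \in L$; the Andrews analogue of the preservation result of \cite{BaazIemhoff06} (which I would either cite directly or re-derive by the same induction, observing that using the lower-arity Andrews Skolem function in place of the structural one does not disturb the $\LJ$-derivation) yields $\LJ \vdash (B_1,\dots,B_n \Rightarrow A^\SA)^\SA$; as in the structural case this sequent coincides with $(B_1,\dots,B_n \Rightarrow A)^\SA$, using $(A^\SA)^\SA = A^\SA$ and that the antecedent formulas are Skolemized identically, so $\LJ \vdash (B_1,\dots,B_n \Rightarrow A)^\SA$. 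Cut-elimination for $\LJ$ then supplies a cut-free proof, the Andrews version of Theorem~\ref{thm: cut free LJpp} deSkolemizes it to a cut-free $\LJpp$-proof of $B_1,\dots,B_n \Rightarrow A$, Theorem~\ref{thm: QFS, LJpp} gives $\mathbf{QFS} \vdash B_1,\dots,B_n \Rightarrow A$, and since each $B_i \in L \supseteq \QFS$ we conclude $L \vdash A$.

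The main obstacle is establishing the Andrews version of Theorem~\ref{thm: cut free LJpp}, i.e.\ that the deSkolemization algorithm still returns a suitable $\LJpp$-proof when the Skolem terms carry fewer arguments. Here I would reuse the algorithm unchanged: it replaces each Skolem term by a fresh variable $a_t$ and reintroduces the quantifiers according to the \emph{target} end-sequent $\Gamma \Rightarrow \Delta$, which is identical whether one started from $(\cdot)^S$ or $(\cdot)^\SA$. It then remains to re-examine the suitability conditions of Definition~\ref{Def: suitable quantifier}. Substitutability and very weak regularity hold for exactly the reasons given in the structural case: the deSkolemized end-sequent contains no Skolem terms, and equal Skolem terms still generate equal characteristic variables attached to equal principal formulas. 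For the side-variable condition the key observation is that each Andrews Skolem term's argument list is a \emph{subset} of the argument list of the corresponding structural Skolem term; hence the relation $<_\pi$ induced during Andrews deSkolemization is contained in the one arising in the structural case, and since the latter is acyclic, so is the former. Reduced arities can therefore only remove dependencies, never create cycles, so the algorithm terminates with a legitimate $\LJpp$-proof exactly as before.
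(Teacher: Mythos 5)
Your overall route is exactly the paper's: the paper proves this corollary precisely by observing (via Corollary \ref{cor: shifts}) that the deSkolemization argument of Theorem \ref{Thm: Skolem for QFS} runs unchanged on Andrews Skolem forms, the only point singled out for special mention being that Andrews Skolemization preserves very weak regularity. You identify that checkpoint and justify it correctly: since an Andrews Skolem term carries exactly the weakly quantified variables free in the subformula dominated by the strong quantifier, identical Andrews terms force identical principal formulas. Your handling of soundness, of the $\LJ$-preservation step, and of the idempotence $(A^\SA)^\SA = A^\SA$ likewise matches what the paper leaves implicit.

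The one defective step is your justification of the side-variable condition. The claim that the relation $<_\pi$ ``induced during Andrews deSkolemization is contained in the one arising in the structural case'' is not well-formed: the input to Theorem \ref{thm: cut free LJpp} is a cut-free $\LJpp$-proof of $(\Gamma\Rightarrow\Delta)^\SA$, and there is no parallel structural deSkolemization of \emph{this} proof against which a containment could be stated. Worse, the intuition points the wrong way. Because Andrews terms carry fewer arguments, distinct structural Skolem terms differing only in the omitted arguments can collapse to a single Andrews term (e.g., $f(s,t_1)$ and $f(s,t_2)$ both become $g(s)$), so the algorithm \emph{identifies} characteristic variables that the structural procedure would keep distinct. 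Identification is a quotient operation on $<_\pi$, and a quotient of an acyclic relation can in general acquire cycles; so ``reduced arities can only remove dependencies, never create cycles'' is backwards in the one respect that matters. This collapsing is exactly why very weak regularity is the condition the paper flags as needing a remark. The repair is immediate and already sits inside the argument you said you would reuse: in the proof of Theorem \ref{thm: cut free LJpp}, the order of the quantifier inferences is dictated by the target end-sequent $\Gamma\Rightarrow\Delta$, and the characteristic variables are chosen so that later inferences always receive larger variables, making $<_\pi$ acyclic \emph{by construction}, independently of the arity or shape of the Skolem terms. Substituting that (arity-independent) argument for your containment claim makes your proof coincide with the paper's.
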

\begin{proof}
  By corollary \ref{cor: shifts} and the fact that the deSkolemization of Andrews Skolemization leads to the same argument as in the proof of Theorem \ref{Thm: Skolem for QFS}.  Note that Andrews Skolemization also preserves the very weak regularity.
\end{proof}

This concludes the proof of our main Theorem \ref{mainth} that states that for any intermediate logic, admitting structural or Andrews Skolemization is equivalent to containing all quantifier shift principles.
\begin{remark}
    It is worth mentioning that finite-valued first-order G\"{o}del logics admit deSkolemization, which follows from the fact that these logics allow all classical quantifier shift principles (\cite{MatthiasZach}). 
Moreover, the logic $\QFS + \mathrm{Lin}$, and the logic determined by the class of all finite linearly ordered Kripke frames with constant domain studied in \cite{OnoProblems} are concrete examples of intermediate logics to which our Theorem \ref{Thm: Skolem for QFS} applies. For a map of logics containing $\QFS$, see \cite{umezawa1959}, where $\QFS$ is denoted by $\mathrm{LFGP}_2$, particularly Theorem 18.
\end{remark}

\section{Parallel Skolemization } \label{sec: SL}
We established Skolemization for intermediate logics with quantifier shifts. In this section, we show for certain logics outside this category, where standard and Andrew's Skolemization fail (as in most intermediate logics), an alternative form of Skolemization may still be admissible \cite{BaazIemhoff16,BaazLolic_interpolation,CINTULA201954}. See \cite{Rose} for a general view on alternative Skolemization methods. 


\begin{definition}
For a closed first-order formula $A$, \emph{parallel Skolemization} $A^\SP$ is defined similarly to the structural Skolemization \ref{Def: St Skolem}.
The only difference is the following: 
for all strong quantifier occurrences
$A_{-(Q y)}$ is replaced by $\bigvee_{i = 1}^{n} C(f_i (\overline{x}))$ for negative occurrence of $\exists$ of the form $\exists x C(x)$  and by $\bigwedge_{i = 1}^{n}C(f_i (\overline{x}))$ for positive occurrence of $\forall$ of the form $\forall x C(x)$
after omission of $(Q y)$, where 
$f_i$ are new function symbols, $\overline{x}$ are the weakly quantified variables of the scope, and $n$
is the degree \footnote{$n$ iterates according to the branch corresponding to the Kripke model (see, \cite{BaazIemhoff16},\cite{CINTULA201954}). Note also that every branch of a Kripke model of the logic has its own Skolem function.} of the parallel Skolemization. 
\end{definition}
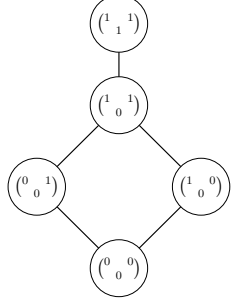
\begin{figure}
\centering
\newcommand\mypic{
\begin{tikzpicture}
\Vertex[L=$ 1 \quad 1 \choose 1$,x=1,y=0] {Y}; 
\Vertex[L=$ 1 \quad 1 \choose 0$,x=1,y=-2]{X}; 
\Vertex[L=$ 0 \quad 1 \choose 0$,x=-1,y=-4]{A};
\Vertex[L=$ 1 \quad 0 \choose 0$,x=3,y=-4]{B};
\Vertex[L=$ 0 \quad 0 \choose 0$,x=1,y=-6]{D} ;

\Edges [](Y,X) 
\Edges[](A,X,B)
\Edges[](A,D,B)
\end{tikzpicture}}
\caption{ \scriptsize{The lattice-based 5-valued logic.}} 
\label{lattice}
\resizebox{3cm}{!}{\mypic} 

\end{figure}

\begin{example}\label{allsk}

Consider the formula
$$\varphi\colon \neg( \forall x \forall y (\exists z P(x,z) \lor Q(y,x))) .$$
The structural Skolemization of $\varphi$ is
$$\neg(\forall x \forall y (P(x,f(x,y)) \lor Q(y,x)) ).$$
 Its Andrews Skolemization is
$$\neg(\forall x \forall y (P(x,g(x)) \lor Q(y,x)) ).$$
Its parallel Skolemization, where $n=2$  is 
\begin{align*}
    \neg(\forall x \forall y (P(x,f_1(x,y)) \lor P(x,f_2(x,y))\lor Q(y,x))).
\end{align*}
\end{example}

\begin{definition}\label{1-entail} 
Let $V$ be a truth value set with one designated truth value $1$, $\Gamma$ a (possibly infinite) set of formulas, and $A$ be a formula. We say $\Gamma$ \emph{entails $A$ w.r.t. $V$}, denoted by $\Gamma \VDash_V A$ whenever
\[
\forall I\ \  \forall B \in \Gamma \quad \text{if} \quad I(B) = 1 \quad \text{then} \quad I(A) = 1,
\]
where $I$ is the interpretation on $V$. If $\Gamma \VDash_V A$ for any valuation $V$, then we say that $\Gamma$ \emph{entails} $A$, denoted by $\Gamma \VDash A$.
\end{definition}
The following example provides a logic
falsifying $\SW$, therefore, no structural or Andrews Skolemization is admissible. However, it admits parallel Skolemization.
\begin{example}\label{SL}
Let us consider a 5-valued lattice corresponding to constant-domain 3-world Kripke frames.
We look at the truth valuations of propositional variables in constant-domain 3-world Kripke frames. Denote the domain by $D$ and $D\neq \emptyset$.
The valuations are 
 $\alpha_{(a,b,c)} \equiv {a \quad b \choose c}$, for any $a,b,c \in \{0,1\}$
with the designated value   $\alpha_{(1,1,1)}\equiv 1$.
We order the truth values according to the following: 

\begin{center}
{$\displaystyle\alpha_{(a,b,c)} \leq \alpha_{(a',b',c')} \quad \text{ iff } \quad a\leq a', b\leq b' \text{ and }  c\leq c'$.}
\end{center}

We obtain the lattice-based 5-valued logic depicted in Fig.\ref{lattice}. We denote it by $\CD_5$.
This logic verifies, under any interpretation, the corresponding true formulas in the Kripke frames and the other way around. The valuations are as follows: 
\begin{align*}   
I(\bot)= \scriptstyle{{ 0 \quad 0 \choose 0}}\equiv 0\\
I(A\lor B)=\text{sup}\{I(A),I(B)\}\\     I(A \wedge B)=\text{inf}\{I(A),I(B)\}\\     I(\exists xA(x))=\text{sup}\{I(A(d)) \mid d\in D\}\\     I(\forall xA(x))=\text{inf}\{I(A(d))\mid d\in D\}\\
I(A\rightarrow B)= 1 \text{ iff }\{I(A)\leq I(B)\}
\end{align*}

It is obvious that in the logic $\CD_5$, the quantifier shift principles $\SW$ and $\ED$ do not hold; see the model in Example \ref{Sle} for $\SW$, where
\begin{align*} 
\nVDash \exists x(P(x)\limp \forall y P(y)).
\end{align*}

As for parallel Skolmization, we replace negative (resp. positive) occurrences of $\exists$ (resp. $\forall$) with disjunction (resp. conjunction) of two Skolem functions 
\begin{align*}
    \exists xA(x,\bar{y})\Rightarrow A(f(\bar{y}),\bar{y})\lor A(g(\bar{y}),\bar{y})\\
    \forall xA(x,\bar{y})\Rightarrow A(f(\bar{y}),\bar{y})\land A(g(\bar{y}),\bar{y}).
\end{align*}
It is easy to see that the parallel Skolemization for branching degree $n=2$ is valid in $\CD_5$ since it simulates $\sup$ and $\inf$ of the lattice; let  
\begin{align*}
   I(\exists x A(\overline{c},x))=\alpha_{(1,1,0)}
\end{align*}
but for all elements $d\in D$ 
\begin{align*}
I( A(\overline{c},d))\neq\alpha_{(1,1,0)}.
\end{align*}
Therefore, there are elements $e, f$ such that 
 \begin{align*}
     I( A(\overline{c},e))=\alpha_{(1,0,0)} \text{ and } I( A(\overline{c},f))=\alpha_{(0,1,0)}.
 \end{align*}
 We assign $e$ to $g_1(\overline{c})$ and $f$ to $g_2(\overline{c})$.
Then, 
\begin{align*}
I(A(\overline{c},g_1(\overline{c})\vee A(\overline{c},g_2(\overline{c}))= \text{sup}\{I(A(\overline{c},g_1(\overline{c})),I(A(\overline{c},g_2(\overline{c}))\} \\ =\alpha_{(1,1,0)}=I(\exists x A(\overline{c},x)).
\end{align*}
We have shown above the case for  ($\sup, {1 \quad1 \choose 1},\exists, \vee$). Similar argument applies to ($\inf, {0 \quad0 \choose 0},\forall, \wedge$). 
Therefore, the Skolemization does not change the outcome of the interpretation.



\end{example}

\begin{proposition}
 We have   $\CD_5\VDash A^\SP$  if{f} $\CD_5\VDash A$.
\end{proposition}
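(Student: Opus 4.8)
The plan is to establish the two directions of the biconditional separately, exploiting the fact that $\CD_5$ is a finitely-valued logic whose semantics are given explicitly by the lattice in Fig.~\ref{lattice}, so that validity reduces to checking that every interpretation $I$ sends $A$ to the designated value $1 = \alpha_{(1,1,1)}$. The soundness direction $(\CD_5 \VDash A \Rightarrow \CD_5 \VDash A^\SP)$ should be the easier half: here I would argue that passing from $A$ to $A^\SP$ can only \emph{decrease} truth values at strong-quantifier positions, so that validity is preserved. Concretely, for a positive occurrence $\forall x\, C(x)$ we have $I(\forall x\, C(x,\overline{y})) = \inf\{I(C(d,\overline{y})) : d \in D\} \leq I(C(f_1(\overline{y}),\overline{y})) \wedge I(C(f_2(\overline{y}),\overline{y}))$ for any choice of interpretations of the Skolem functions $f_1,f_2$, and dually the disjunction $\bigvee_i C(f_i(\overline{x}))$ replacing a negative $\exists$ sits above the supremum. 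I would make precise that at a strong position a smaller value propagates to a smaller (or equal) value of the whole formula, using the monotonicity of $\sup$, $\inf$, $\wedge$, $\vee$ in the lattice and the sign convention for strong/weak quantifiers; hence $I(A) = 1$ forces $I(A^\SP) = 1$ for every interpretation, giving $\CD_5 \VDash A^\SP$.

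For the completeness (deSkolemization) direction $(\CD_5 \VDash A^\SP \Rightarrow \CD_5 \VDash A)$ I would argue by contraposition: suppose $\CD_5 \nVDash A$, so there is an interpretation $I$ with $I(A) \neq 1$. The goal is to extend $I$ to the Skolem function symbols $f_1, f_2$ so that the resulting interpretation $I'$ falsifies $A^\SP$ as well. The key lemma is the witnessing fact already demonstrated in Example~\ref{SL}: because the lattice $\CD_5$ has the property that every value is the supremum of at most two strictly smaller ``atomic'' contributions (and dually for infimum), any value attained by $\inf\{I(C(d,\overline{y})):d\in D\}$ is already realized as $I(C(e,\overline{y})) \wedge I(C(f,\overline{y}))$ for suitable $e,f \in D$, and symmetrically for $\sup$. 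Thus for each tuple $\overline{c}$ of witnesses for the weak quantifiers dominating the strong quantifier, I can choose $e,f \in D$ realizing the relevant $\inf$ (resp.\ $\sup$) and set $f_1(\overline{c}) := e$, $f_2(\overline{c}) := f$. With this choice, $I'(\bigwedge_i C(f_i(\overline{c}))) = I(\forall x\, C(\overline{c},x))$ exactly, and dually for the disjunctive clauses, so replacing each strong quantifier by its parallel Skolem clause preserves the truth value at that node.

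I would then assemble these local preservations into a global statement: by induction on the formula structure, the interpretation $I'$ satisfies $I'(A^\SP) = I(A)$, where on the left each strong subformula $\forall x\, C$ (resp.\ $\exists x\, C$) has been replaced by the matching conjunction (resp.\ disjunction) of Skolem instances. Since $I(A) \neq 1$, we obtain $I'(A^\SP) \neq 1$, hence $\CD_5 \nVDash A^\SP$, which is the contrapositive of the desired implication. The main obstacle I anticipate is making the inductive truth-value-preservation claim fully rigorous in the presence of \emph{nested} quantifiers: the witnesses chosen for an inner strong quantifier must be made to depend functionally and coherently on the weak-quantifier witnesses of the enclosing scopes, and one must verify that the branching degree $n = 2$ genuinely suffices for \emph{every} value of $\inf$ and $\sup$ in this particular five-element lattice (i.e.\ that two branches always witness the extremum). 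Establishing this two-witness property as a clean lemma about the lattice $\CD_5$ is the crux; once it is in hand, the signed induction over the formula and the contrapositive packaging are routine, mirroring the computation already carried out for the representative case in Example~\ref{SL}.
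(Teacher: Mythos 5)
Your proof is correct in substance but takes a genuinely different route from the paper: the paper disposes of this proposition in a single line, citing Lemma 5.1 of \cite{BaazLolic_interpolation}, a general result on parallel Skolemization for lattice-based finitely valued logics corresponding to finite propositional and constant-domain Kripke frames. You instead reconstruct the argument directly for the five-element lattice: soundness by signed monotonicity (every extension $I'$ of an interpretation $I$ to the Skolem symbols satisfies $I'(A^\SP)\geq I(A)$, so validity of $A$ forces validity of $A^\SP$), and deSkolemization by contraposition, choosing the Skolem functions pointwise so that the two instances realize the relevant extremum. The crux you isolate---that branching degree $n=2$ suffices---is exactly the content of the cited lemma specialized to $\CD_5$, and it holds for a transparent reason your sketch nearly states: in this lattice the only join-reducible element is $\alpha_{(1,1,0)}=\alpha_{(1,0,0)}\vee\alpha_{(0,1,0)}$ and the only meet-reducible element is $\alpha_{(0,0,0)}=\alpha_{(1,0,0)}\wedge\alpha_{(0,1,0)}$, so the join (meet) of any nonempty set of attained values equals the join (meet) of at most two attained values. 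What your version buys is self-containedness and an explicit account of where the degree $2$ comes from; what the paper's citation buys is brevity and generality, since the same lemma covers the whole class of such lattice-based logics, not just $\CD_5$.

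Two local points to repair when writing this up. First, a sign slip: you assert that the disjunction $\bigvee_i C(f_i(\overline{x}))$ replacing a negative $\exists$ ``sits above the supremum''; it sits \emph{below} it, $I(C(f_1))\vee I(C(f_2))\leq \sup\{I(C(d)) \mid d\in D\}$. This is harmless for your conclusion---a decrease at a negative position raises or preserves the value of the whole formula, which together with the conjunction sitting above the infimum at positive positions yields $I'(A^\SP)\geq I(A)$, as you need---but the stated inequality is false as written, and for the induction to go through, implication must be read as the Heyting implication of the lattice (the paper only specifies when $I(A\to B)=1$), so that the antitone/monotone behaviour in the two arguments is available. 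Second, the exact-preservation claim $I'(A^\SP)=I(A)$ is more than required: $I'(A^\SP)\neq 1$ suffices, and the nested-quantifier coherence you worry about is handled by the Skolemization recursion itself, since each step processes the first strong quantifier, which lies in the scope of weak quantifiers only, so the pointwise two-witness choice can be made step by step without conflict.
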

\begin{proof}
The proof is an immediate consequence of Lemma 5.1 in \cite{BaazLolic_interpolation} for lattice-based finitely valued logics corresponding to finite propositional and constant-domain Kripke frames.
\end{proof}




\section{Frame-incompleteness}\label{semantic-proof} 

In the previous sections, we have shown that $\QFS$ determines whether a logic has Skolemization or not. In this section, we investigate the logic $\QFS$ further. We characterize the class of frames on which $\mathsf{QFS}$ is valid, but also show that the logic is frame-incomplete.

\begin{definition}
Let $F=(W, \preccurlyeq, D)$ be a frame. Define:

\noindent ($\WF$)
$\forall w \in W$ 
the set $\preccurlyeq [w]$ is well-founded.

\noindent ($\cWF$)
$\forall w \in W$ 
the set $\preccurlyeq [w]$ is conversely well-founded.

\noindent ($\FDS$)
$\forall u,v,w \in W \ \text{if} \  u \neq v \ u \succcurlyeq w \ v \succcurlyeq w \ \text{then} \ D_u=D_v=D_w , \ |D_w|=1$

The latter condition, called \emph{Fork-Domain Singleton}, states that if there are three worlds in a frame that form a fork, then their domains are the same and singleton. 

\end{definition}
\begin{remark}\label{rem: WF, cWF}
Note that a frame $F$ satisfying $\WF$ does not imply that $\preccurlyeq$ is well-founded on $W$. For instance, if $W$ only consists of a descending chain $v_0 \succcurlyeq v_1 \succcurlyeq \dots$, then $F$ satisfies both $\WF$ and $\cWF$. In particular, for any $w \in W$, the set $\preccurlyeq [w]$ is well-founded and conversely well-founded. However, $\preccurlyeq$ is not well-founded on $F$.
\end{remark}


The following definition introduces rich classes of frames for $\QFS$ and its fragments.

\begin{definition}\label{Dfn: class F}
Define $\calF$ as the class of the following Kripke frames $F$ closed under the disjoint union:

$F$ is constant domain with domain $\calD$, satisfying one of the following conditions:
\begin{enumerate}
\item 
$|\calD|=1$,
\item 
$|\calD|>1$, $\calD$ is finite, and $F$ is linear,
\item 
$\calD$ is infinite, $F$ is linear, and satisfies both $\WF$ and $\cWF$. 
\end{enumerate}
Define $\calF_{\ED}$ (resp. $\calF_{\SW}$) as the class of Kripke frames containing $\calF$ and also frames of the form

$\bullet$ constant domain, linear, infinite domains, satisfying $\WF$ (resp. satisfying $\cWF$).
\end{definition}


\subsection{Frame characterization}
In this subsection, we state and prove the following theorem that offers an elegant characterization of the class of frames for $\QFS$ and its fragments. Note that, as is standard in the literature, when the forcing of a predicate is not explicitly specified at a world in a Kripke model, it is understood to not be forced at that world.
\begin{theorem}[Frame characterization]\label{Thm: soundness}
Let $F$ be a frame. 
\begin{itemize}
\item 
$F \vDash \CD \quad$ if and only if $\quad F$ is constant domain.
\item 
$F \vDash \ED \quad$ if and only if $\quad F \in \calF_{\ED}$.
\item 
$F \vDash \SW \quad$ if and only if $\quad F \in \calF_{\SW}$.
\item 
$F \vDash \QFS \ \ $ if and only if $\quad F \in \calF$.
\end{itemize}
\end{theorem}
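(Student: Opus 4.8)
The plan is to prove the four biconditionals by establishing each direction via the semantic clauses for forcing, treating the four items in increasing order of difficulty. The $\CD$ case is the easiest and sets the template: for the $(\Leftarrow)$ direction, I would assume $F$ is constant domain and verify directly from the forcing clauses that $\forall x(A(x)\vee B)\to\forall x A(x)\vee B$ holds at every world of every model on $F$; the key point is that constant domains let me push the disjunction out of the universal quantifier. For the $(\Rightarrow)$ direction (contrapositive), I would assume $F$ is \emph{not} constant domain, so there are worlds $w\preccurlyeq u$ with $D_w\subsetneq D_u$, and I would build a counter-model by choosing a fresh element $d\in D_u\setminus D_w$, forcing $B$ nowhere and $A$ at exactly those elements present already, so that $\forall x(A(x)\vee B)$ holds at $w$ but $\forall x A(x)\vee B$ fails.

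For the $\ED$ and $\SW$ cases I would argue that validity forces membership in the enlarged classes $\calF_{\ED}$ and $\calF_{\SW}$, and conversely. The $(\Leftarrow)$ directions amount to checking that on a linear constant-domain frame with the appropriate well-foundedness condition ($\WF$ for $\ED$, $\cWF$ for $\SW$), the relevant shift holds; here the well-foundedness is exactly what supplies the witness needed to instantiate the existential quantifier appearing in the conclusion of each shift. For the $(\Rightarrow)$ directions I would prove the contrapositive: if a frame fails one of the defining conditions of $\calF_{\ED}$ (resp.\ $\calF_{\SW}$)—namely it is non-constant-domain, or non-linear (forms a genuine fork violating $\FDS$), or linear-infinite but violating the pertinent well-foundedness—then I construct an explicit countermodel refuting $\ED$ (resp.\ $\SW$). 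The fork-refutation is modeled on Example~\ref{Sle}, while the well-foundedness refutations exploit an infinite ascending or descending chain to prevent the existential witness from ever being realized at a single world.

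Finally, the $\QFS$ case follows by combining the previous three: since $\QFS=\iqc+\{\CD,\ED,\SW\}$, validity of $\QFS$ on $F$ is equivalent to simultaneous validity of all three shifts, hence to $F\in\calF_{\ED}\cap\calF_{\SW}$ together with constant domain. I would then check that this intersection, restricted to constant-domain frames, is exactly $\calF$: the linear case with finite domain lands in $\calF$ directly, and the linear infinite case must satisfy \emph{both} $\WF$ and $\cWF$ (since $\calF_{\ED}$ demands $\WF$ and $\calF_{\SW}$ demands $\cWF$), which is precisely condition~(3) of Definition~\ref{Dfn: class F}; the singleton-domain case is subsumed. Closure under disjoint unions on both sides is compatible because forcing and validity are determined componentwise on a disjoint union.

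The main obstacle I anticipate is the $(\Rightarrow)$ direction for $\ED$ and $\SW$ in the infinite-domain linear case: showing that failure of $\WF$ (resp.\ $\cWF$) genuinely refutes the shift requires a careful countermodel in which a predicate is forced along an infinite chain in such a way that the existential witness recedes indefinitely. One must verify persistence of the valuation and check the forcing of the implication at the relevant world, which is delicate precisely because linearity removes the easy fork-based refutation and forces the argument to rely on the order-theoretic structure of the chain. Getting the interaction between the (conversely) well-founded condition and the quantifier-shift witness exactly right—and ruling out spurious witnesses arising from elements appearing higher in the chain—is where the real care is needed.
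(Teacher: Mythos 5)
Your overall architecture coincides with the paper's own proof: per-axiom lemmas, necessity by contraposition with explicit countermodels (non-constant domains, forks violating $\FDS$, infinite chains against $\WF$/$\cWF$), sufficiency by direct semantic verification using minimal/maximal elements, and assembly of the $\QFS$ case via $\calF_{\ED}\cap\calF_{\SW}=\calF$. However, the countermodel you describe for the necessity direction of $\CD$ fails as stated. If $B$ is forced \emph{nowhere}, then at $w$ the premise $\forall x(A(x)\vee B)$ is forced exactly when $\forall x A(x)$ is: for every $v\succcurlyeq w$ and every $d\in D_v$ you need $v\Vdash A(d)\vee B$, hence $v\Vdash A(d)$, so the fresh element $d\in D_u\setminus D_w$ that you deliberately keep out of $A$ already falsifies the \emph{premise}, and the implication $\forall x(A(x)\vee B)\to\forall x A(x)\vee B$ holds vacuously at $w$. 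The paper's construction (Lemma \ref{lem: constant domain}) instead forces $B$ at every world \emph{strictly above} $w$ and forces $A$ on all of $D_w$ persistently: then the premise holds at $w$ (at $w$ itself every current element satisfies $A$; strictly above, $B$ covers everything, including the new element), while the conclusion fails because $w\nVdash B$ and the fresh element blocks $\forall x A(x)$. This asymmetric treatment of $B$ is essential, not a detail.

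Second, your sufficiency mechanism---``well-foundedness is exactly what supplies the witness''---does not cover all of $\calF_{\ED}$ (resp.\ $\calF_{\SW}$). These classes contain $\calF$, and $\calF$ includes linear constant-domain frames with \emph{finite} domain and arbitrary linear order, e.g.\ an infinite descending chain of worlds with $|\calD|$ finite: such a frame is not singleton-domain and need not satisfy $\WF$, so the set $\{v\succcurlyeq w \mid v\Vdash B\}$ may have no minimal element and your argument stalls. The paper handles this with a genuinely separate argument (Lemma \ref{lem: finite domain}): from $w\nVdash\exists x(B\to A(x))$ one extracts, for each of the finitely many domain elements $a_1,\dots,a_n$, a world $v_i\succcurlyeq w$ with $v_i\Vdash B$ and $v_i\nVdash A(a_i)$, and then takes the minimum of the \emph{finite} set $\{v_1,\dots,v_n\}$, which exists by linearity alone; that world refutes $B\to\exists x A(x)$ (dually, maxima for $\SW$). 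Likewise, the singleton-domain members of $\calF$ are not linear in general (forks are allowed), so they require their own---easy but separate---verification as in Lemma \ref{lem: singleton domain}, which your sketch only gestures at in the $\QFS$ bookkeeping. Your chain-based refutations for failure of $\WF$/$\cWF$ and the fork refutations are correct in spirit and match the paper's Lemmas \ref{lem: WF} and \ref{lem: FDS}; with the two repairs above, the proposal becomes the paper's proof.
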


The remainder of this subsection is devoted to the proof of the theorem. To enhance accessibility for the reader, we begin by establishing the lemmas required for the $(\Rightarrow)$ direction of all items in the theorem, followed by the $(\Leftarrow)$ direction. 

\noindent \textbf{Towards the proof of $(\Rightarrow)$ of Theorem \ref{Thm: soundness}:} The following lemma shows that being constant domain is a necessary condition for a frame to be a frame for $\CD$, $\ED$, or $\SW$, thereby implying that the same holds for $\QFS$.

\begin{restatable}{lemma}{LemConstantDomain}\label{lem: constant domain}
Let $F$ be a frame.
\begin{enumerate}
    \item 
    If $F \vDash \CD$, then $F$ is constant domain.
    \item 
     If $F  \vDash \ED$, then $F$ is constant domain.
     \item 
      If $F \vDash \SW$, then $F$ is constant domain.
\end{enumerate}
\end{restatable}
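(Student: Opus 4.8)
The plan is to prove the contrapositive of each implication: assuming $F$ is \emph{not} constant domain, I will construct a model $\mathcal{M}$ on $F$ and a world $w$ that refutes the relevant quantifier shift principle. Since $F$ is not constant domain, by Definition \ref{Def: linear} there exist worlds $w \preccurlyeq w'$ with $D_w \subsetneq D_{w'}$, so I may fix an element $d \in D_{w'} \setminus D_w$. The three cases share this starting configuration; they differ only in how the valuation is chosen so that the specific formula fails at $w$.

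For item 1 ($\CD$), I would use a unary predicate $P$ and a nullary predicate $B$. The idea is to force $A(x) \vee B$ at every world for every element in its domain, while arranging that neither disjunct of $\forall x A(x) \vee B$ is forced at $w$. Concretely, set $B$ to be forced nowhere, force $P(e)$ at every world $u$ for every $e \in D_u$ \emph{except} force $P(d)$ only from $w'$ upwards; then at $w$ the new element $d$ is not yet in the domain, so $\forall x(P(x)\vee B)$ holds at $w$ (every element of $D_w$ forces $P$), yet $\forall x P(x)$ fails at $w'$ because $P(d)$ is not forced there appropriately — I must tune the forcing so that persistence is respected but $\forall x P(x)$ genuinely fails above $w$. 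The delicate point is ensuring $B$-freeness does not let the disjunction hold trivially; since $B$ is never forced, $\forall x P(x) \vee B$ fails at $w$ exactly when $\forall x P(x)$ fails at $w$, which follows once some element of some $D_u$ with $u \succcurlyeq w$ omits $P$.

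For items 2 and 3 ($\ED$ and $\SW$), the constructions are analogous but must refute $(B \to \exists x A(x)) \to \exists x(B \to A(x))$ and $(\forall x A(x) \to B) \to \exists x(A(x)\to B)$ respectively. In each case I would exploit the fresh element $d \in D_{w'}\setminus D_w$ to break the weak existential in the succedent: the antecedent can be made to hold at $w$ while every witness required for the succedent's existential is unavailable at $w$ because the only element that would serve as a witness lives strictly above $w$. This reuses the flavour of the refuting model in Example \ref{Sle}, adapted so that the failure is caused specifically by domain growth rather than by branching.

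The main obstacle I anticipate is verifying \emph{persistence} of the chosen valuations together with the upward-closure condition $R_v \subseteq R_w$ for $v \preccurlyeq w$, while simultaneously guaranteeing the principle fails at $w$. These two requirements pull in opposite directions: persistence forces predicates to spread upward, but refutation needs some element (namely one witnessing the failure) to remain unforced at the relevant worlds above $w$. I would handle this by letting the failure be witnessed precisely at or above $w'$ using the new element $d$, so that the unforced atom never needs to propagate downward into $D_w$, keeping persistence intact. Once the model is consistent, checking $\mathcal{M}, w \nVdash$ the principle is a routine unwinding of the forcing clauses in Definition \ref{Def: Kripke model}.
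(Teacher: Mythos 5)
Your high-level strategy---contraposition, fixing $w \preccurlyeq w'$ with $d \in D_{w'} \setminus D_w$, and building a countermodel on $F$ for each principle---is exactly the paper's, but your concrete valuation for item 1 is fatally wrong. If $B$ is forced nowhere, then in any model the formulas $\forall x (P(x) \vee B)$ and $\forall x P(x)$ are forced at exactly the same worlds, so your instance of $\CD$ is \emph{valid} on every frame: whenever $\forall x(P(x) \vee B)$ holds at $u$, every element of every $D_v$ with $v \succcurlyeq u$ must force $P$ (the disjunct $B$ being unavailable), hence $\forall x P(x)$ already holds at $u$. No tuning of $P$ can repair this while $B$ stays unforced; you are also internally inconsistent about $P(d)$, first forcing it from $w'$ upwards and then needing it to fail there. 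The missing idea---the crux of the paper's construction---is to force the nullary predicate $Q$ at every world \emph{strictly above} $w$ but not at $w$ itself, and to have $P(d)$ fail at $w'$ and above: then $\forall x(P(x) \vee Q)$ holds at $w$ (at $w$ all of $D_w$ forces $P$; strictly above $w$, $Q$ takes over and covers the new element $d$), while $\forall x P(x) \vee Q$ fails at $w$ since $Q$ fails at $w$ and $P(d)$ fails at $w'$.

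The same device is indispensable for items 2 and 3, where your sketch stays too vague to certify. For $\ED$, if $B$ is forced nowhere the succedent $\exists x (B \to P(x))$ is \emph{vacuously true} at $w$ (any $d \in D_w$ witnesses it), so no refutation is possible at all; the paper instead forces $Q$ together with $P(a)$, for the new elements $a \in D_{w'} \setminus D_w$ only, exactly at worlds $\succcurlyeq w'$, so that $Q \to \exists x P(x)$ holds at $w$ while $Q \to P(d)$ fails at $w'$ for every old $d \in D_w$. For $\SW$, your intuition that the only candidate witness for $\exists x(P(x) \to Q)$ is the new element $d \notin D_w$ is correct, but on a general frame the antecedent $\forall x P(x) \to Q$ can fail at $w$ if some branch above $w$ not passing through $w'$ forces $\forall x P(x)$; forcing $Q$ at all worlds strictly above $w$ (the paper simply reuses the item-1 model) makes the antecedent hold there unconditionally, while $P(d) \to Q$ fails at $w$ itself for each $d \in D_w$ because $P(d)$ holds and $Q$ does not. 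So the skeleton of your argument matches the paper, but without the trick of the nullary predicate forced precisely on the worlds strictly above $w$ (respectively, from $w'$ upwards for $\ED$), none of the three refutations goes through.
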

\begin{proof}
    See Appendix.
\end{proof}

As a corollary of Lemma \ref{lem: constant domain} we have:
\begin{corollary}\label{Cor: constant domain}
Let $F$ be a frame. If $F \vDash \QFS$, then $F$ is constant domain.
\end{corollary}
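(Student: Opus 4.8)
The plan is to derive Corollary~\ref{Cor: constant domain} directly from Lemma~\ref{lem: constant domain} with essentially no further work, exploiting the definition of $\QFS$ as $\iqc + \{\CD, \ED, \SW\}$. The key observation is that frame validity is monotone with respect to the axioms imposed: if $F \vDash \QFS$, then in particular $F$ validates every theorem of $\QFS$, and since $\CD \in \QFS$ by construction, we have $F \vDash \CD$. The only point that needs a word of justification is that $F \vDash L$ (validity of a logic in a frame) entails $F \vDash A$ for each axiom $A$ of $L$; this is immediate because $L$ is the set of its consequences and each axiom is trivially among them.

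Concretely, I would argue as follows. Suppose $F \vDash \QFS$. Since $\CD$ is one of the three axiom schemes adjoined to $\iqc$ to form $\QFS$, every instance of $\CD$ belongs to $\QFS$, and hence $F \vDash \CD$. By part~(1) of Lemma~\ref{lem: constant domain}, it follows that $F$ is constant domain. That completes the proof. Note that one could equally invoke part~(2) via $\ED$ or part~(3) via $\SW$; any single one of the three suffices, which is worth remarking since it shows the conclusion is already forced by each proper fragment of $\QFS$ listed in the frame-characterization theorem.

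I do not anticipate any genuine obstacle here, since all the substantive content lives in Lemma~\ref{lem: constant domain}, whose proof is deferred to the Appendix. The only conceptual subtlety—and the reason the corollary is stated separately rather than folded silently into the lemma—is the interplay with the nonstandard notion of ``linear'' and ``constant domain'' frames flagged after Definition~\ref{Def: linear}: because the class $\calF$ is closed under disjoint unions, one must be careful that ``constant domain'' is read in the local (along $\succcurlyeq$) sense of Definition~\ref{Def: linear}, namely $u \succcurlyeq v \Rightarrow D_u = D_v$, rather than demanding a single domain across disconnected components. Provided the lemma is proved in this local sense, the corollary transfers verbatim, so the statement is the expected immediate consequence.
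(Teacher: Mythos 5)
Your proposal is correct and matches the paper exactly: the paper states this corollary as an immediate consequence of Lemma~\ref{lem: constant domain} (any one of its three parts suffices once $F \vDash \QFS$ yields $F \vDash \CD$, $F \vDash \ED$, and $F \vDash \SW$), with no further argument needed. Your remark about reading ``constant domain'' in the local sense of Definition~\ref{Def: linear} is consistent with the paper's convention and does not change the argument.
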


The following lemma states that a frame with a fork and a world of domain size two or more cannot be a frame for $\ED$ or $\SW$. This means that in such case, linearity is a necessary condition for a frame to be a frame of $\QFS$. 

\begin{restatable}{lemma}{LemFDS} \label{lem: FDS}
Let $F$ be a frame. 
\begin{enumerate}
\item
If $F \vDash \ED$, then $F$ satisfies $\FDS$.
\item 
If $F \vDash \SW$, then $F$ satisfies $\FDS$.
\end{enumerate}

\end{restatable}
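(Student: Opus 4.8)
The plan is to prove the contrapositive of each item: assuming $F$ violates $\FDS$, I construct a Kripke model on $F$ that refutes $\ED$ (respectively $\SW$) at some world, hence $F \nvDash \ED$ (resp. $F \nvDash \SW$). By Lemma~\ref{lem: constant domain} I may already assume $F$ is constant domain when $F \vDash \ED$ or $F \vDash \SW$, so a violation of $\FDS$ means there is a genuine fork $w \preccurlyeq u$, $w \preccurlyeq v$ with $u \neq v$ and common domain $\calD$ of size at least two. Fix two distinct elements $a, b \in \calD$. The task is to place a valuation so that the relevant quantifier-shift principle fails precisely at $w$, exploiting the branching to separate what happens above $u$ from what happens above $v$.

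For the $\SW$ case I would reuse the refuting pattern already exhibited in Example~\ref{Sle}: let $P$ be a unary predicate forced at $u$ only on $a$ (and persistently above), and at $v$ only on $b$, with $Q$ a nullary predicate forced nowhere. First I would check $\calM, w \Vdash \forall x P(x) \to Q$; this requires verifying that no world $\succcurlyeq w$ forces $\forall x P(x)$, which holds because neither $u$ nor $v$ forces $P$ on both $a$ and $b$, and $w$ itself forces $P$ on nothing. Then I would check $\calM, w \nVdash \exists x(P(x) \to Q)$: for each $d \in \calD$ I must find a world $\succcurlyeq w$ forcing $P(d)$ but not $Q$; taking $d = a$ the world $u$ works and for $d = b$ the world $v$ works, so no single witness $d$ makes $P(d) \to Q$ forced at $w$. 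This refutes $\SW$. For the $\ED$ case the construction is dual: I would take $Q$ forced nowhere relevant and $P$ arranged so that $\calM, w \Vdash (Q \to \exists x P(x))$ holds vacuously (or via the branching) while $\exists x(Q \to P(x))$ fails at $w$ because no fixed $d$ has $P(d)$ forced everywhere $Q$ is.

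I expect the main obstacle to be the bookkeeping around persistence and the eigenvariable/domain constraints: since $\preccurlyeq$ may extend beyond the three fork worlds, I must define the valuation on all of $\preccurlyeq[w]$, not just $\{w,u,v\}$, and ensure monotonicity of $V(P)$ while keeping the refutation intact. The cleanest route is to force $P$ persistently upward from $u$ and $v$ (on $a$ and $b$ respectively) and leave $Q$ unforced on the entire cone $\preccurlyeq[w]$; I then only need that $u$ and $v$ are incomparable or at least that neither lies above a world already forcing $\forall x P(x)$, which the fork hypothesis together with $|\calD| \geq 2$ guarantees. A second subtlety is that $\FDS$ bundles two conclusions ($D_u = D_v = D_w$ and $|D_w| = 1$); since constant domain already gives the domain equalities, violating $\FDS$ in the presence of a fork reduces exactly to $|\calD| > 1$, which is what the two-element choice $\{a,b\}$ uses. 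Finally, I would remark that the same model handles both items, so the two cases differ only in which implication is placed inside the scope of the quantifier.
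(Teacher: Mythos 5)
Your overall strategy---contraposition via Lemma~\ref{lem: constant domain}, a fork with two distinct domain elements $a,b$, and an explicit countermodel---is exactly the paper's, and your $\SW$ model ($P(a)$ forced persistently above $u$, $P(b)$ above $v$, $Q$ nowhere) coincides with the model in the paper's proof. But your $\ED$ half has a genuine gap, which your closing remark (``the same model handles both items'') makes explicit: a model in which $Q$ is forced nowhere \emph{cannot} refute $\ED$. If $Q$ is unforced on all of $\preccurlyeq[w]$, then for every $d$ the implication $Q \to P(d)$ is forced at $w$ vacuously, so $\calM, w \Vdash \exists x (Q \to P(x))$ and the consequent of $\ED$ holds trivially; your suggestion that $\calM, w \Vdash Q \to \exists x P(x)$ ``holds vacuously'' makes the antecedent true but simultaneously trivializes the conclusion. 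To break $\ED$ you must force $Q$ somewhere, and in fact on \emph{both} branches, paired with different $P$-witnesses. This is what the paper does: $\calM, u' \Vdash P(a), Q$ iff $u' \succcurlyeq u$, and $\calM, u' \Vdash P(b), Q$ iff $u' \succcurlyeq v$. Then every $Q$-world lies above $u$ or $v$ and so forces $\exists x P(x)$, giving $\calM, w \Vdash Q \to \exists x P(x)$; while for $d=a$ the world $v$ forces $Q$ but not $P(a)$ (using incomparability of $u$ and $v$), for $d=b$ symmetrically, and for any other $d \in \calD$ the world $u$ forces $Q$ but not $P(d)$, so $\calM, w \nVdash \exists x (Q \to P(x))$. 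Your parenthetical ``(or via the branching)'' gestures at this, but the one option you spell out fails, and the correct valuation is never written down.

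Two further claims in your write-up are asserted but false as stated, though here you inherit lacunae that the paper's own proof is also silent about. First, in the $\SW$ case you check $\calM, w \nVdash \exists x(P(x) \to Q)$ only for $d \in \{a,b\}$: if $|\calD| > 2$ and $P(d)$ is forced nowhere for some $d \notin \{a,b\}$, then $P(d) \to Q$ is vacuously forced at $w$ and the refutation collapses; one should additionally force $P(d)$ for all such $d$ on both cones (this does not disturb the argument). Second, your claim that incomparability of $u$ and $v$ guarantees no world above $w$ forces $\forall x P(x)$ fails when the fork reconverges: a common upper bound of $u$ and $v$ forces both $P(a)$ and $P(b)$, hence $\forall x P(x)$ when $\calD = \{a,b\}$, destroying $\calM, w \Vdash \forall x P(x) \to Q$; the repair is to force $Q$ exactly at the worlds lying above both $u$ and $v$. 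Note also that, as you correctly sensed, the $\FDS$ violation must be read as producing \emph{incomparable} $u,v$ (otherwise the lemma would contradict Lemma~\ref{lem: finite domain} on linear frames with finite domains), and the countermodels for both items genuinely need that incomparability.
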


\begin{proof}
See Appendix.
\end{proof}
If we combine Corollary \ref{Cor: constant domain} and Lemma \ref{lem: FDS} we get:

\begin{corollary} \label{Cor: linear or singleton}
Let $F$ be a frame. If $F \vDash \ED$ (or $F \vDash \SW$), then either $F$ is linear or $\forall w \in W$ we have $|D_w|=1$.
\end{corollary}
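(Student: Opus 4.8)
\textbf{Proof plan for Corollary \ref{Cor: linear or singleton}.}
The plan is to derive this corollary directly by combining the two results we have already established, namely Corollary \ref{Cor: constant domain} (if $F \vDash \QFS$ then $F$ is constant domain, and more precisely the components of Lemma \ref{lem: constant domain} giving constant domain from each of $\ED$ and $\SW$ individually) and Lemma \ref{lem: FDS} (if $F \vDash \ED$ or $F \vDash \SW$, then $F$ satisfies $\FDS$). So no genuinely new argument is needed; the task is purely to see how $\FDS$ together with constant domain forces the desired dichotomy.

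First I would fix a frame $F = (W, \preccurlyeq, D)$ and assume $F \vDash \ED$ (the case $F \vDash \SW$ is handled identically, since both hypotheses feed into the same two lemmas). By Lemma \ref{lem: constant domain} the frame is constant domain, and by Lemma \ref{lem: FDS} it satisfies $\FDS$. The goal is to show that $F$ is either linear or has all domains singletons.

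Next I would argue by cases on whether $F$ is linear. If $F$ is linear, we are done immediately. If $F$ is not linear, then by the definition of linearity (Definition \ref{Def: linear}) there exist worlds $u,v,w \in W$ with $u \succcurlyeq w$ and $v \succcurlyeq w$ but neither $u \succcurlyeq v$ nor $v \succcurlyeq u$; in particular $u \neq v$, so $u,v,w$ form a fork. Applying $\FDS$ to this fork yields $D_u = D_v = D_w$ with $|D_w| = 1$. Since $F$ is constant domain, every world shares this single common domain $\calD$, whence $|D_x| = 1$ for all $x \in W$. This establishes the dichotomy in the non-linear case and completes the corollary.

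The only point requiring mild care is the bookkeeping that $\FDS$ plus constant domain propagates the singleton conclusion from the fork worlds to all of $W$: the constant-domain hypothesis is exactly what upgrades $|D_w| = 1$ from a statement about one fork to a statement about every domain in the frame. There is no substantive obstacle here, as both ingredient results are assumed; the corollary is a clean logical consequence.
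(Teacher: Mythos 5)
Your proposal is essentially the paper's own proof: the paper derives this corollary in a single line by combining Corollary~\ref{Cor: constant domain} (more precisely, the $\ED$ and $\SW$ cases of Lemma~\ref{lem: constant domain}) with Lemma~\ref{lem: FDS}, and your case split --- linear, or else a failure of linearity yielding $u \succcurlyeq w$, $v \succcurlyeq w$ with $u,v$ incomparable, hence $u \neq v$ by reflexivity, to which $\FDS$ applies --- is exactly the intended unpacking. Your use of the literal hypothesis of $\FDS$ (only $u \neq v$ is required, so incomparability is more than enough) is correct.

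One caveat at your final propagation step, which is also the one place the paper itself glosses: Definition~\ref{Def: linear} defines constant domain only as $D_u = D_v$ whenever $u \succcurlyeq v$, and deliberately so, since the paper's frame classes are closed under disjoint union. Constant domain therefore equalizes domains only within a connected component of $(W,\preccurlyeq)$, and your inference ``every world shares this single common domain $\calD$'' is valid only for connected frames. Indeed, a disjoint union of a singleton-domain fork with a linear finite-domain chain validates $\ED$ (by Lemmas~\ref{lem: singleton domain} and~\ref{lem: finite domain}, since forcing at a world depends only on its component), yet it is neither linear nor all-singleton: all of your steps up to and including the fork go through, but the singleton conclusion does not reach the other component. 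So strictly the dichotomy should be read per connected component (or under a connectedness assumption); this imprecision is inherited from the statement of the corollary rather than introduced by you, but your write-up silently passes over it at exactly the point where the paper does.
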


The following lemma states that if a frame validates $\ED$ (resp. $\SW$) then WF (resp. cWF) is satisfied in it.
\begin{restatable}{lemma}{LemWF} \label{lem: WF}
Let $F$ be a frame such that there exists a world $w \in W$ where $D_w$ is countably infinite.
\begin{enumerate}
\item 
If $F \vDash \ED$, then  $F$ satisfies $\WF$.
\item 
If $F \vDash \SW$, then  $F$ satisfies $\cWF$.
\end{enumerate}
\end{restatable}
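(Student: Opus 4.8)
The plan is to prove each item by a contrapositive-style argument: assuming the relevant well-foundedness condition fails, construct a counter-model in which the quantifier shift axiom is refuted. Since by Corollary \ref{Cor: linear or singleton} and Corollary \ref{Cor: constant domain} we already know that a frame validating $\ED$ or $\SW$ with an infinite domain world must be constant domain and linear, I may freely assume $F$ is a constant-domain linear frame with countably infinite domain $\calD = \{d_0, d_1, d_2, \dots\}$ throughout, and the only thing left to establish is the well-foundedness direction in each case.

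For item 1, suppose $F$ does not satisfy $\WF$; then there is some $w \in W$ such that $\preccurlyeq[w]$ contains an infinite descending chain $w \succcurlyeq v_0 \succcurlyeq v_1 \succcurlyeq \cdots$ (strictly descending). I would define a unary predicate $P$ and a propositional letter $B$ and arrange the forcing so that $B$ is forced exactly high up the chain while each $P(d_k)$ becomes available only at a sufficiently low node, engineering a situation where $B \to \exists x\, P(x)$ holds at $w$ (because wherever $B$ is forced, some witness $d_k$ is already present) yet no single witness works uniformly, so that $\exists x\,(B \to P(x))$ fails at $w$. Concretely, I would force $P(d_k)$ from node $v_{k}$ downward (i.e. at every $u \preccurlyeq v_k$ with $u \in \preccurlyeq[w]$) and force $B$ nowhere, or more carefully tune the valuation so the antecedent $B \to \exists x P(x)$ is vacuously or genuinely forced while each disjunct $B \to P(d_k)$ fails at $w$ because at the deeper node $v_{k+1}$ the witness $d_k$ is not yet forced. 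The persistence conditions and linearity guarantee this is a legitimate $\IQC$-model, and the descending chain is exactly what defeats $\ED$.

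For item 2, the argument is dual: suppose $F$ fails $\cWF$, so some $\preccurlyeq[w]$ contains an infinite strictly ascending chain $w \preccurlyeq v_0 \preccurlyeq v_1 \preccurlyeq \cdots$. I would build a model refuting $\SW = (\forall x\, P(x) \to B) \to \exists x\,(P(x) \to B)$ at $w$ by forcing $P(d_k)$ to fail exactly at $v_k$ and above for a cofinal set of indices, so that $\forall x\, P(x)$ is never forced at any node of the chain (making the antecedent $\forall x P(x) \to B$ forced at $w$ by vacuity along the chain, choosing $B$ false everywhere), while for each fixed $k$ the implication $P(d_k) \to B$ fails at the node where $P(d_k)$ holds but $B$ does not, so $\exists x\,(P(x) \to B)$ fails at $w$. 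The ascending chain ensures that no finite witness suffices, mirroring how the descending chain defeated $\ED$.

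The main obstacle I anticipate is getting the valuations to simultaneously satisfy persistence (monotonicity of $P$ along $\preccurlyeq$), validate the antecedent of the shift axiom at $w$, and refute the consequent, all while respecting that $F$ is merely assumed linear and constant-domain rather than having any convenient root or discreteness. In particular, because $\preccurlyeq[w]$ may contain nodes off the chosen chain, I must verify that the forcing assigned along the chain extends consistently to all of $\preccurlyeq[w]$ without accidentally forcing the consequent somewhere; linearity is the key tool here, since it forces every node of $\preccurlyeq[w]$ to be comparable to the chain, letting me define $P$ purely in terms of the chain position and check the forcing clauses uniformly. I expect the bookkeeping of indexing the domain elements against the chain so that exactly one witness is "missing" at each level to be the delicate part, and I would present it as an explicit valuation followed by a short verification of the two forcing claims.
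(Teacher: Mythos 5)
Your high-level plan is exactly the paper's: argue by contraposition, extract the infinite descending (resp.\ ascending) chain from the failure of $\WF$ (resp.\ $\cWF$), use Lemma \ref{lem: constant domain} and Corollary \ref{Cor: linear or singleton} to assume constant domain and linearity, index the countable domain against chain positions, and refute $\ED$ (resp.\ $\SW$) at $w$. But both of your concrete valuations, as written, violate persistence. In an $\IQC$-model the truth set of an atom must be \emph{upward} closed, yet in item 1 you force $P(d_k)$ ``at every $u \preccurlyeq v_k$'': then $P(d_k)$ holds at $v_{k+1}$ but fails at $v_{k-1} \succcurlyeq v_{k+1}$. Dually, in item 2, ``$P(d_k)$ fails exactly at $v_k$ and above'' makes the truth set downward closed. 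Both need the opposite orientation, which is what the paper uses: $P(a_k)$ holds exactly on the cone $\{u \mid u \succcurlyeq v_k\}$, so that along the descending chain of item 1 the node $v_i$ forces precisely the witnesses $a_j$ with $j \geq i$, and along the ascending chain of item 2 precisely those with $j \leq i$.

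More seriously, in item 1 your proposal ``force $B$ nowhere'' (equivalently, make the antecedent vacuously forced) cannot work and is inconsistent with your own failure condition: if no $u \succcurlyeq w$ forces $B$, then $B \to P(d)$ is forced at $w$ for \emph{every} $d$, so $\calM, w \Vdash \exists x (B \to P(x))$ and $\ED$ is not refuted. For the disjunct $B \to P(a_k)$ to fail at $w$ via the deeper node $v_{k+1}$, that node must force $B$; so $B$ must be forced at all chain nodes (the paper forces $Q$ at every $v_i$ and above $v_0$), and the antecedent then holds \emph{genuinely}, not vacuously, because each $B$-node $v_i$ still carries its own witness $a_i$. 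In item 2 your choice ``$B$ false everywhere'' does match the paper, but your verification worry there is misdirected: the danger from off-chain nodes is not accidentally forcing the consequent but accidentally killing the antecedent. If the ascending chain has an upper bound $u^*$ in $\preccurlyeq[w]$, persistence forces all $P(a_k)$ at $u^*$, so $u^* \Vdash \forall x P(x)$ while $u^* \nVdash B$, and the antecedent $\forall x P(x) \to B$ fails at $w$, leaving $\SW$ vacuously forced. (The paper's own proof glosses over this case too; it is repaired by forcing $B$ exactly at the strict upper bounds of the chain, so that any node forcing $\forall x P(x)$ forces $B$.) As it stands, your item 1 construction refutes nothing and your item 2 construction needs both the persistence fix and this antecedent check, so the proposal has genuine gaps even though its skeleton coincides with the paper's argument.
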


\begin{proof} If $F \vDash \ED$ or $F \vDash \SW$, then by Lemma \ref{lem: constant domain} it is constant domain. By the assumption, the domain is countably infinite. Call it $\calD=\{a_0, a_1, \dots\}$. Thus, by Corollary \ref{Cor: linear or singleton}, $F$ is linear. We prove each case by contraposition.  Let $Q$ be a nullary predicate and $P$ a unary one. 
\item[1)] 
Suppose $F$ does not satisfy $\WF$. Thus, there exists $w \in W$ where $\preccurlyeq [w]$ contains an infinite descending chain $v_0 \succcurlyeq v_1 \succcurlyeq v_2 \succcurlyeq \dots$. Call this chain $C$. Construct the model $\calM$:


\begin{align*}
\calM , u & \Vdash Q \quad \text{iff} \ \  u \succcurlyeq v_0 \ \text{or} \ u \in C \\
\forall d \in \calD \ \ \calM , v_0 & \Vdash P(d) \\ 
\forall d \in \calD \ \ \calM , w & \nVdash P(d)   \\ 
\forall d \in \calD - \{a_0, \dots, a_{i-1}\} \ \ \calM, u & \Vdash P(d) \qquad i \geq 1, \forall u \succcurlyeq v_i
\end{align*}


\noindent depicted as:
\small\[
  \xymatrix{ 
  \vdots  \\
            \hspace{-15pt}   \forall d \in \mathcal{D} \quad   v_0 \forc Q, P(d) \ar[u]   \\
           \hspace{-45pt}   \forall d \in \mathcal{D} - \{a_0\} \quad   v_1 \forc Q, P(d)  \ar[u]    \\
         \hspace{-58pt}   \forall d \in \mathcal{D} - \{a_0, a_1\} \quad   v_2 \forc Q, P(d)  \ar[u]    \\
           \vdots \ar[u] \\
           w \ar[u]
           }
\]

\normalsize For any world $u \succcurlyeq w$ there are three cases:

$\bullet$
$u \in C$. Thus, there exists $i \geq 0$ such that $u=v_i$. Hence, $\calM, v_i \forc Q, P(a_i)$.

$\bullet$
$u \succcurlyeq v_0$. Thus, $\calM, u \forc Q, P(a_0)$.

$\bullet$
Otherwise. Hence, $\calM, u \nVdash Q$.

\noindent Therefore, we get $\calM, w \Vdash Q \to \exists x P(x)$. However, as $\calM, w \nVdash Q \to P(a_i)$ for any $i \geq 0$, we get $\calM, w \nVdash \exists x (Q \to P(x))$. Thus,
\[
\calM, w \nVdash (Q \to \exists x P(x)) \to \exists x (Q \to P(x)).
\]
\item[2)] See Appendix. 
\end{proof}



Putting Lemmas \ref{lem: constant domain}, \ref{lem: WF} and Corollary \ref{Cor: linear or singleton} together, we get the $(\Rightarrow)$ direction of the theorem.


\begin{example}\label{Ex: two frames}
Take the following frames: 
\small \[  \xymatrix{           \vdots &  & &            \hspace{-15pt}  \mathcal{D} \ \ k_1 \\ \hspace{-15pt}  \mathcal{D} \ \ w_3  \ar[u] &  & & \hspace{-15pt}  \mathcal{D} \ \ k_2  \ar[u] \\ \hspace{-15pt}  \mathcal{D} \ \ w_2  \ar[u] & & &   \hspace{-15pt}  \mathcal{D} \ \ k_3  \ar[u] \\ \hspace{-15pt}  \mathcal{D} \ \ w_1 \ar[u] &  & &  \vdots \ar[u]           }\]
\normalsize It is interesting and easy to see that the right one is a frame for $\QFS$. However, the left frame is a frame for $\ED$ and not a frame for $\SW$. Note that if one puts one node below all other nodes in the frame on the right, it no longer is a model of $\QFS$, as it refutes $\ED$.
\end{example}
\noindent \textbf{Towards the proof of $(\Leftarrow)$ of Theorem \ref{Thm: soundness}:}

It is not enough for a frame to be constant domain to validate $\ED$ or $\SW$. However, being constant domain makes a frame sound and complete for the axiom $\CD$, as indicated in Lemma \ref{lem: constant domain} and the following lemma. 
\begin{restatable}{lemma}{LemConsDomCD}\label{lem: cons dom validates CD}
Let $F$ be a constant domain frame. Then $F$ validates $\CD$.
\end{restatable}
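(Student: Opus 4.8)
The plan is to prove that a constant domain frame $F$ validates $\CD = \forall x (A(x) \vee B) \to \forall x A(x) \vee B$ directly from the forcing definition, exploiting the fact that in a constant domain frame every accessible world shares the same domain $\calD$. First I would fix an arbitrary model $\calM$ on $F$, an arbitrary world $w$, and assume $\calM, w \Vdash \forall x (A(x) \vee B)$; the goal is to show $\calM, w \Vdash \forall x A(x) \vee B$. Since the succedent is a disjunction, I would argue by cases on whether $B$ gets forced. The natural dichotomy is: either there is some world $u \succcurlyeq w$ and some $d \in \calD$ with $\calM, u \nVdash A(d)$, or there is not.

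In the first case, I would show $\calM, w \Vdash B$. The key move is to apply the hypothesis at the witnessing world: since $\calM, w \Vdash \forall x(A(x) \vee B)$, persistence gives $\calM, u \Vdash \forall x (A(x) \vee B)$, hence $\calM, u \Vdash A(d) \vee B$ for that particular $d$; as $\calM, u \nVdash A(d)$ by choice of $u$ and $d$, we get $\calM, u \Vdash B$. The crucial step — and this is where constant domain is essential — is that I want to conclude $\calM, w \Vdash B$, not merely at $u$. Here I would sharpen the case split: rather than picking an arbitrary failing world, I would argue that if $\calM, w \nVdash B$ then I can derive $\calM, w \Vdash \forall x A(x)$, which lands in the second disjunct. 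So the clean formulation is a case split on whether $\calM, w \Vdash B$.

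Concretely, I would restructure as follows. If $\calM, w \Vdash B$, then $\calM, w \Vdash \forall x A(x) \vee B$ by the right disjunction clause, and we are done. If $\calM, w \nVdash B$, I claim $\calM, w \Vdash \forall x A(x)$, i.e.\ $\calM, u \Vdash A(d)$ for every $u \succcurlyeq w$ and every $d \in D_u$. Fix such $u$ and $d$. By constant domain $D_u = D_w = \calD$, so $d \in \calD$ and the hypothesis applies: persistence yields $\calM, u \Vdash \forall x (A(x) \vee B)$, so $\calM, u \Vdash A(d) \vee B$. If $\calM, u \Vdash B$ held, then by persistence of $B$ along $\succcurlyeq$ this would not directly contradict $\calM, w \nVdash B$ — so the more careful point is that $\calM, w \nVdash B$ means there exists some $u_0 \succcurlyeq w$ forcing $B$ is \emph{not} guaranteed, and I must ensure $B$ fails everywhere needed. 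I would therefore handle it by noting that if any $\calM, u \Vdash B$ then, retracing, this does not help force $B$ at $w$; instead the correct argument shows that whenever $\calM, u \nVdash A(d)$ we derived $\calM, u \Vdash B$, and the disjunction $A(d) \vee B$ forces at least one disjunct, giving $\calM, u \Vdash A(d)$ precisely on those worlds where $B$ fails.

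The main obstacle I anticipate is the bookkeeping around the interaction between the two universal quantifiers (the outer $\forall x$ in the antecedent ranging over all successor worlds, and the $\forall x A(x)$ in the succedent) together with the non-local, persistence-based semantics of $\vee$ and $B$. The cleanest resolution is to observe that $\forall x A(x) \vee B$ fails at $w$ exactly when both $\calM, w \nVdash B$ and there is some $u \succcurlyeq w$, $d \in D_u$ with $\calM, u \nVdash A(d)$; constant domain forces $d \in \calD$, so the antecedent hypothesis $\calM, u \Vdash A(d) \vee B$ applies and yields $\calM, u \Vdash B$, which by persistence would require reconciling with the failure of $B$ at $w$. I expect the honest proof to proceed by assuming the succedent fails and deriving a contradiction with the antecedent, rather than the forward case split, and that is the step I would write out most carefully.
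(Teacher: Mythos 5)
You correctly diagnose the obstacle in the naive forward case split: when $\calM, w \nVdash B$, a strictly later world $u \succcurlyeq w$ may still force $B$, so reasoning about the disjunction $A(d) \vee B$ \emph{at $u$} cannot deliver $\calM, w \Vdash \forall x A(x)$. But your proposed ``cleanest resolution'' then repeats exactly this mistake: from $\calM, u \nVdash A(d)$ and $\calM, u \Vdash A(d) \vee B$ you derive $\calM, u \Vdash B$ and suggest that this ``by persistence would require reconciling with the failure of $B$ at $w$.'' There is nothing to reconcile: persistence propagates forcing \emph{forward} along $\succcurlyeq$, so $\calM, u \Vdash B$ together with $\calM, w \nVdash B$ for $w \preccurlyeq u$ is perfectly consistent --- as you yourself observed one paragraph earlier. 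Your sketch therefore never closes; the constant domain hypothesis is invoked only to note $d \in \calD$, which by itself proves nothing, and since $\CD$ genuinely fails on non-constant-domain frames (Lemma \ref{lem: constant domain}), any correct argument must use the hypothesis more substantially.

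The missing step --- and it is precisely what the paper's proof does --- is to apply persistence to $A(d)$ \emph{downward in contrapositive form} and then instantiate the antecedent at $w$ itself rather than at $u$. Suppose the succedent fails at $w$: then $\calM, w \nVdash B$ and there exist $u \succcurlyeq w$ and $d \in D_u$ with $\calM, u \nVdash A(d)$. Constant domain gives $d \in D_w$. Since $u \succcurlyeq w$, if $w$ forced $A(d)$ then so would $u$; hence $\calM, w \nVdash A(d)$. Now, because $d \in D_w$ and $w \succcurlyeq w$, the hypothesis $\calM, w \Vdash \forall x (A(x) \vee B)$ yields $\calM, w \Vdash A(d) \vee B$, yet both disjuncts fail at $w$ --- a contradiction. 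Note where constant domain is truly indispensable: it licenses instantiating the quantified antecedent \emph{at $w$} with the element $d$ that witnesses the failure of $\forall x A(x)$ higher up; without it, $d$ need not belong to $D_w$ and the argument (and the lemma) collapses. So you had the right global shape (assume the succedent fails, contradict the antecedent) but applied persistence to the wrong formula and at the wrong world.
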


\begin{proof} Similar to \cite[Lemma 14.4.]{Mints} and \cite{CD}. See Appendix.
\end{proof}


The following three lemmas provide sufficient conditions on the domains of a frame under which $\QFS$ holds in it. In the first lemma the domains are singletons, with no further restrictions on the frame. In the second the domains are finite, but in addition the frame is required to be constant domain and linear. In the third lemma the frames are constant domain and finite, and it is also shown that if finiteness is weakened to $\WF$ ($\cWF$), then the frame may not be a frame of $\QFS$ but it is a frame of $\ED$ ($\SW$).

\begin{restatable}{lemma}{LemSingleton}\label{lem: singleton domain}
Let $F$ be a frame. If the domain of every world in $F$ is a singleton, then $F \vDash \QFS$.
\end{restatable}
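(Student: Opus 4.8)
The plan is to show that in a frame $F$ where every world has a singleton domain, each of the three quantifier shift principles $\CD$, $\ED$, and $\SW$ is valid; since $\QFS = \IQC + \{\CD, \ED, \SW\}$ and every such frame already validates $\IQC$, this suffices. The key observation driving the whole argument is that when $|D_w| = 1$ for every $w$, quantifiers become essentially redundant: if the unique element of $D_w$ is denoted $d_w$, then forcing a quantified formula reduces to forcing its single instance (modulo the persistence built into $\forall$ and $\to$). First I would make this precise by noting that $\CD$ holds because being singleton-domain makes $F$ constant domain (every domain has size one, so $D_u = D_v$ along any $\preccurlyeq$-related pair), and then Lemma \ref{lem: cons dom validates CD} applies directly to give $F \vDash \CD$.

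For $\ED$ and $\SW$ I would argue semantically at an arbitrary world $w$. Fix a model $\mathcal{M}$ on $F$ and a world $w$; write $\calD_w = \{d_w\}$ for the singleton domain. The core reduction is that $\mathcal{M}, u \Vdash \exists x A(x)$ iff $\mathcal{M}, u \Vdash A(d_u)$, and $\mathcal{M}, u \Vdash \forall x A(x)$ iff $\forall v \succcurlyeq u$ we have $\mathcal{M}, v \Vdash A(d_v)$. Because every world in $\preccurlyeq[w]$ has the \emph{same} singleton domain (constant domain), there is genuinely only one element $d$ available everywhere above $w$, so $A(x)$ has, up to this single witness, nothing to range over. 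For $\SW$, to check $\mathcal{M}, w \Vdash (\forall x A(x) \to B) \to \exists x (A(x) \to B)$, I assume $\mathcal{M}, w \Vdash \forall x A(x) \to B$ and aim to produce a witness for $\exists x(A(x)\to B)$; with a single domain element $d$ available at $w$, I would show $\mathcal{M}, w \Vdash A(d) \to B$, using that $\forall x A(x)$ and $A(d)$ coincide in forcing across all $v \succcurlyeq w$ precisely because the domain never grows. The dual argument handles $\ED$.

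The step I expect to require the most care is verifying the equivalence between $\forall x A(x)$ and its single instance $A(d)$ at each world, because the clause for $\forall$ quantifies over all successors $v \succcurlyeq u$ and all elements of $D_v$. The subtlety is that persistence and the constant-singleton-domain assumption must be used together: one must confirm that the unique element $d$ at $w$ is \emph{literally the same} element at every $v \succcurlyeq w$ (so that $A(d)$ is a well-formed formula in $\mathcal{L}(D_v)$ with a uniform witness), which follows from $D_w = D_v$ for all $v \succcurlyeq w$. Once this identification is clean, the validity of $\ED$ and $\SW$ reduces to purely propositional implications that hold intuitionistically, and the proof closes. I would relegate the routine forcing verifications to the Appendix, presenting here only the reduction lemma and the two short derivations it enables.
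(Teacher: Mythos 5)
Your proposal is correct and takes essentially the same route as the paper: $\CD$ via the observation that singleton domains force the frame to be constant domain, and $\ED$, $\SW$ via the per-world reduction of $\forall x A(x)$ and $\exists x A(x)$ to the single instance $A(a)$, where the equivalence for $\forall$ rests exactly on the persistence-plus-constant-domain-along-the-cone point you flag as the delicate step. If anything, your citation of Lemma \ref{lem: cons dom validates CD} for the $\CD$ case is more accurate than the paper's own text, which at that step points to Lemma \ref{lem: constant domain} (the converse direction).
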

\begin{proof}
Let $F=(W, \preceq, D)$, where $|D_w|=1$ for every $w \in W$. Call this element $a$.

\item[$\bullet$]
As $F$ is constant domain, by Lemma \ref{lem: constant domain}, we get $F \vDash \CD$.

\item[$\bullet$]
To prove $F \vDash \SW$, suppose 
\[
F \vDash \forall x A(x) \to B,
\]
for any formulas $A$ and $B$, where $x$ is not free in $B$. This means that for any model $\calM$ based on $F$ and any world $w$, we have $\calM, w \Vdash \forall x A(x) \to B$, which implies 
\[
\forall w' \succcurlyeq w \quad \text{if} \quad \calM, w' \Vdash \forall x A(x) \quad \text{then} \quad 
 \calM, w' \Vdash B.
\]
However, as the domain is $\{a\}$, we get
\[
\forall w' \succcurlyeq w \quad \text{if} \quad \calM, w' \Vdash  A(a) \quad \text{then} \quad 
 \calM, w' \Vdash B,
\]
which means $\calM, w \Vdash A(a) \to B$. Therefore, $\calM, w \Vdash \exists x A(x) \to B$. Thus, $F \vDash \SW$.
\item[$\bullet$]
$F \vDash \ED$ is similar. See Appendix.
\end{proof}

The following lemma indicates that if the domains of all the worlds are finite in a frame, it is a frame for $\QFS$.

\begin{restatable}{lemma}{LemFiniteDom}\label{lem: finite domain}
    Let $F$ be a linear constant domain frame, where the domain of each world is finite. Then, $F \vDash \QFS$. 
\end{restatable}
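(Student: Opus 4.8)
The plan is to prove $F \vDash \QFS$ for a linear, constant domain frame $F=(W,\preccurlyeq,D)$ with every domain finite by verifying each of the three shift principles separately, reusing what we already have. Since $F$ is constant domain, Lemma~\ref{lem: cons dom validates CD} immediately gives $F \vDash \CD$, so the work reduces to $\SW$ and $\ED$. The key observation I would exploit is that finiteness of the domain $\calD$ lets me replace the universal (respectively existential) quantifier by a finite conjunction (respectively disjunction), after which linearity is used to locate a single witnessing element.

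For $\SW$, I would take an arbitrary model $\calM$ on $F$, a world $w$, and assume $\calM, w \Vdash \forall x A(x) \to B$; the goal is $\calM, w \Vdash \exists x (A(x) \to B)$. Enumerate $\calD = \{d_1, \dots, d_k\}$. For each $d_i$, consider the worlds $u \succcurlyeq w$ at which $A(d_i)$ is forced but $B$ is not; call such a $u$ a \emph{bad world for $d_i$}. If some $d_i$ has no bad world then $\calM, w \Vdash A(d_i) \to B$ and we are done, so suppose every $d_i$ has a bad world $u_i$. The point is that at any single world $u' \succcurlyeq w$ where $\calM, u' \Vdash \forall x A(x)$ we must have $\calM, u' \Vdash B$, i.e. $\forall x A(x)$ and $\neg B$ cannot both hold above $w$. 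Using linearity of $\preccurlyeq [w]$, the finitely many bad worlds $u_1, \dots, u_k$ are linearly ordered, so one of them, say $u_m$, is $\preccurlyeq$-maximal among them (here I use that there are only finitely many $d_i$, hence finitely many $u_i$ to compare). I would then argue that $A(d_m) \to B$ is forced at $w$: at any $u \succcurlyeq w$ forcing $A(d_m)$, persistence and the placement of $u_m$ relative to the chain force every $A(d_i)$, hence $\forall x A(x)$, hence $B$. This yields a witness $d_m$, giving $\SW$.

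For $\ED$ the argument is dual: assume $\calM, w \Vdash B \to \exists x A(x)$ and aim for $\calM, w \Vdash \exists x (B \to A(x))$. Again writing $\exists x A(x)$ as the finite disjunction $\bigvee_{i} A(d_i)$, I would track, for each $d_i$, the worlds above $w$ forcing $B$ but refuting $A(d_i)$, and use linearity together with finiteness of $\calD$ to extract a uniform witness $d_m$ such that $B \to A(d_m)$ holds at $w$; the finiteness bound guarantees that the relevant worlds can be compared and a single index selected rather than requiring a limit.

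The main obstacle I anticipate is the quantifier-alternation in the forcing clause: forcing $\forall x A(x)$ at a world $u'$ quantifies over \emph{all} $v \succcurlyeq u'$ and all domain elements, so I cannot simply read off a witness pointwise at $w$. The crux is converting ``for each element there is a bad world'' into ``there is one element good everywhere,'' and this is exactly where linearity and finiteness must be combined: linearity lets me totally order the finitely many bad worlds, and finiteness guarantees a maximum (for $\SW$) or minimum (for $\ED$) exists without any well-foundedness hypothesis. I would take care that persistence of the predicate letters is invoked correctly when moving the chosen witness up and down the chain, since a sloppy step there is the likely source of error.
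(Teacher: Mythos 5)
Your strategy is the paper's: $\CD$ comes from Lemma~\ref{lem: cons dom validates CD}, and for $\SW$ and $\ED$ you enumerate the finite domain, attach to each element a counterexample world above $w$, and use linearity plus finiteness to select an extremal one (maximum for $\SW$, minimum for $\ED$). The paper runs this as a contraposition; your ``bad world'' bookkeeping is the same argument in different clothing, and your instinct about where finiteness and linearity enter (finitely many worlds to compare, so an extremum exists without any well-foundedness assumption) is exactly right.

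However, the endgame of your $\SW$ case is wrong as written. Under your supposition that \emph{every} $d_i$ has a bad world $u_i$ (forcing $A(d_i)$ but refuting $B$), no element can be a witness: in particular $u_m$ itself forces $A(d_m)$ and refutes $B$, so $\calM, w \Vdash A(d_m) \to B$ is false, and your intermediate claim that any $u \succcurlyeq w$ forcing $A(d_m)$ forces every $A(d_i)$ already fails at $u = u_m$'s predecessors --- it is sound only for $u \succcurlyeq u_m$, since for $u$ strictly below $u_m$ nothing makes the other $A(d_i)$ true. What your construction actually delivers is a \emph{contradiction}, not a witness: since $u_i \preccurlyeq u_m$ for all $i$, persistence gives $\calM, u_m \Vdash A(d_i)$ for every $i$, and then constant domain together with upward persistence yields $\calM, u_m \Vdash \forall x A(x)$; as $u_m \succcurlyeq w$, the hypothesis $\calM, w \Vdash \forall x A(x) \to B$ forces $\calM, u_m \Vdash B$, contradicting the badness of $u_m$. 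Hence the supposition is impossible, your first branch applies, and some $d_i$ with \emph{no} bad world is the witness --- which is precisely the paper's proof, where the same contradiction is derived at the maximum of the worlds $v_1,\dots,v_n$. The identical one-line repair fixes your $\ED$ sketch with the minimum: at the minimal bad world $v$, downward refutation (if $v \Vdash A(a_i)$ then persistence would force $A(a_i)$ at the bad world for $a_i$ above $v$) gives $\calM, v \nVdash A(a_i)$ for all $i$, so $v$ forces $B$ but refutes $\exists x A(x)$, contradicting the hypothesis. So all the ingredients you assembled are correct and match the paper; only the conclusion of the ``all elements bad'' branch must be restated as a contradiction rather than as the production of a witness $d_m$.
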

\begin{proof}
Let $F$ be a linear constant domain frame, where the domain is $\calD=\{a_1, a_2, \dots, a_n\}$, for some natural number $n$. By Lemma \ref{lem: cons dom validates CD}, $F \vDash \CD$. Now, suppose $F \nvDash \ED$. Hence, there exists a model $\calM$ and a world $w \in W$ such that 
\[
\calM, w \Vdash Q \to \exists x P(x) \qquad 
\calM, w \nVdash \exists x (Q \to P(x)) \tag{1}
\]
for formulas $P$ and $Q$, where $x$ is not free in $Q$. The right statement implies $\calM, w \nVdash Q \to P(a_i)$, for each $1 \leq i \leq n$. Thus,
\[
\exists v_i \geq w \quad \calM, v_i \Vdash Q \quad \calM, v_i \nVdash P(a_i).
\]
As $F$ is linear, the minimum of the set $\{v_1, v_2, \dots, v_n\}$ exists. W.l.o.g. suppose the minimum is $v_1$. Hence, 
\[
\calM, v_1 \Vdash Q \qquad \forall 1 \leq i \leq n \ \calM, v_1 \nvDash P(a_i),
\]
which implies
\[
\calM, v_1 \nVdash Q \to \exists x P(x),
\]
which is a contradiction with (1). Therefore, $F \vDash \ED$.

The case for $\SW$ is similar with the difference that this time we reason by taking the maximum of the worlds. See Appendix.
\end{proof}

The following lemma investigates the case where we relax the condition of the domain being finite.

\begin{restatable}{lemma}{lemWFvalidatesED} \label{lem: WF validates ED}
Let $F$ be a linear constant domain frame.
\begin{enumerate}
\item 
If $F$ satisfies $\WF$, then $F \vDash \ED$.
\item 
If $F$ satisfies $\cWF$, then $F \vDash \SW$.
\item 
If $F$ satisfies both $\WF$ and $\cWF$, then $F \vDash \QFS$.
\end{enumerate}
\end{restatable}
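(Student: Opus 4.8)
The plan is to prove items (1) and (2) directly by contraposition, mirroring the structure already used in Lemma \ref{lem: finite domain}, with the finiteness of the domain replaced by the well-foundedness hypotheses; item (3) then follows immediately by combining (1), (2), and Lemma \ref{lem: cons dom validates CD}. For item (1), I would suppose toward a contradiction that $F$ satisfies $\WF$ but $F \nVdash \ED$. Then there is a model $\calM$ based on $F$, a world $w$, and formulas $P, Q$ with $x$ not free in $Q$, such that $\calM, w \Vdash Q \to \exists x P(x)$ while $\calM, w \nVdash \exists x (Q \to P(x))$. The second fact gives, for each $d \in \calD$, a world $v_d \succcurlyeq w$ with $\calM, v_d \Vdash Q$ and $\calM, v_d \nVdash P(d)$. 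This is exactly the setup of Lemma \ref{lem: finite domain}, except the index set $\calD$ is now infinite, so I can no longer simply take a minimum of finitely many worlds.

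The key step is to replace the \textbf{minimum} argument with a \textbf{well-foundedness} argument. The collection $\{v_d \mid d \in \calD\} \subseteq \preccurlyeq[w]$ is nonempty, and by $\WF$ the set $\preccurlyeq[w]$ is well-founded, so this collection has a $\preccurlyeq$-minimal element; call it $v^*$, say $v^* = v_{d^*}$. Because $F$ is linear, every $v_d$ is $\preccurlyeq$-comparable to $v^*$, and minimality forces $v^* \preccurlyeq v_d$ for all $d$. By persistence of forcing along $\preccurlyeq$, the fact that $\calM, v_d \nVdash P(d)$ together with $v^* \preccurlyeq v_d$ does \emph{not} directly give $\calM, v^* \nVdash P(d)$ --- persistence runs the wrong way for this. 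So the care point is to argue at $v^*$ itself: since $v^* = v_{d^*}$ we already know $\calM, v^* \Vdash Q$, and I must show $\calM, v^* \nVdash P(d)$ for every $d \in \calD$, which yields $\calM, v^* \nVdash \exists x P(x)$ and hence $\calM, v^* \nVdash Q \to \exists x P(x)$, contradicting persistence of $Q \to \exists x P(x)$ from $w$. To see $\calM, v^* \nVdash P(d)$: if instead $\calM, v^* \Vdash P(d)$ for some $d$, then by persistence $\calM, v_d \Vdash P(d)$ along $v^* \preccurlyeq v_d$, contradicting $\calM, v_d \nVdash P(d)$. This is where well-foundedness does the real work: the minimal world is the one at which \emph{no} $P(d)$ can yet hold, precisely because any witness forced at $v^*$ would propagate upward to the world that was chosen to refute it.

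Item (2) is the dual: assuming $\cWF$ and $F \nVdash \SW$, one extracts from $\calM, w \nVdash \exists x(A(x) \to B)$ a family of worlds witnessing the failure, and replaces the \emph{maximum} used in Lemma \ref{lem: finite domain} by a $\preccurlyeq$-maximal element of a nonempty subset of $\preccurlyeq[w]$, which exists by conversely well-foundedness; linearity again upgrades maximality to being an upper bound, and persistence is used in the analogous direction. Item (3) is then immediate: a frame satisfying both $\WF$ and $\cWF$ validates $\ED$ by (1) and $\SW$ by (2), and validates $\CD$ since it is constant domain (Lemma \ref{lem: cons dom validates CD}); hence it validates $\QFS = \IQC + \{\CD, \ED, \SW\}$. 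The main obstacle I anticipate is bookkeeping the direction of persistence correctly when passing from the chosen extremal world to the members of the family --- getting this backwards is the natural trap, and it is exactly the point where the finite argument's ``take the min/max'' collapses into ``take a well-founded minimal/maximal element'' without any monotone selection being available.
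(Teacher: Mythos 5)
Your proof is correct, but it takes a genuinely different route from the paper: you argue by contraposition, while the paper's proof of item (1) is direct. The paper assumes $\calM, w \Vdash B \to \exists x A(x)$, splits on whether $B$ is forced anywhere above $w$, and in the nontrivial case applies $\WF$ to the set $S=\{v \succcurlyeq w \mid \calM, v \Vdash B\}$: its minimal element $u$ (a minimum, by linearity) forces $\exists x A(x)$, and the witness $d$ found at $u$ is then shown to satisfy $\calM, w \Vdash B \to A(d)$, because $B$ fails at all worlds strictly below $u$ and $A(d)$ persists upward from $u$. You instead assume a counterexample and apply $\WF$ to the family $\{v_d \mid d \in \calD\}$ of refuting worlds, showing its minimum $v^*$ forces $Q$ but no $P(d)$, since any $P(d)$ forced at $v^*$ would persist up to $v_d$; this is sound, and your identification of the persistence direction as the delicate point (contraposing persistence at $v^*$ rather than pushing failures downward) is exactly right. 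Both arguments rest on the same two ingredients --- well-foundedness of a subset of $\preccurlyeq[w]$, plus linearity to upgrade a minimal (maximal) element to a minimum (maximum) --- and your items (2) and (3) match the paper's duality and assembly via Lemma \ref{lem: cons dom validates CD}. Two comparative remarks: your route requires choosing one witness world $v_d$ for each $d \in \calD$, so when $\calD$ is infinite it invokes a choice function, which the paper's direct proof avoids by taking the minimal element of a single canonically defined set; in exchange, your argument is the literal generalization of the min/max proof of Lemma \ref{lem: finite domain}, replacing ``minimum of finitely many worlds'' by ``minimal element of a possibly infinite subset,'' which makes the passage from the finite-domain case to the well-founded case completely transparent.
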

\begin{proof}
Let $F$ be a linear constant domain frame. Denote the domain by $\calD$.

\item[1)] Let $F$ satisfy $\WF$. We want to prove $F \vDash \ED$, i.e.,
\[
F \vDash (B \to \exists x A(x)) \to \exists x (B \to A(x)),
\]
for any formulas $A(x)$ and $B$, where $x$ is not free in $B$. Thus, for any model $\calM$ based on $F$ and any $w \in W$, we will show
\[
\text{if} \quad \calM, w \Vdash B \to \exists x A(x) \quad \text{then} \quad \calM, w \Vdash \exists x (B \to A(x)).
\]
Suppose $\calM, w \Vdash B \to \exists x A(x)$, i.e., for any $v \succcurlyeq w$
\[
\text{if} \quad \calM, v \Vdash B \quad \text{then} \quad \calM, v \Vdash \exists x A(x).
\]
There are two cases:

$\bullet$ $\calM, v \nVdash B$ for all $v \succcurlyeq w$. Take $d \in \mathcal{D}$ arbitrarily. Hence, $\calM, w \Vdash B \to A(d)$, which means $\calM, w \Vdash \exists x (B \to A(x))$.

$\bullet$ $\calM, v \Vdash B$ for some $v \succcurlyeq w$. As $F$ satisfies $\WF$ the set $\preccurlyeq [w]$ is well-founded. Take $S=\{v \succcurlyeq w \mid \calM, v \Vdash B\}$. Clearly $S \subseteq \preccurlyeq [w]$, hence, it has a minimal element $u$. Thus, 
\[
\calM, u \Vdash B \quad \text{hence} \quad \calM, u \Vdash \exists x A(x).
\]
Thus, there exists $d \in \mathcal{D}$ such that $\calM, u \Vdash A(d)$. We will show $\calM, w \Vdash B \to A(d)$. The reason is that for any world $w'$ such that $w \preccurlyeq w' \preccurlyeq u$ we have
\[
\calM, w' \nVdash B \quad \text{hence} \quad \calM, w' \Vdash B \to A(d).
\]
Moreover, for $w' \succcurlyeq u$, as $\calM, u \Vdash B \to A(d)$ we have
\[
\calM, w' \Vdash B \to A(d).
\]
Hence, by definition $\calM, w \Vdash B \to A(d)$. As $F$ is constant domain, $d \in D_w$. Thus, $\calM, w \Vdash \exists x (B \to A(x))$.


\item[2)] Similar to the previous case. See Appendix. 

\item[3)] Clear by the previous parts. In particular, if $F$ is finite, then it satisfies both $\WF$ and $\cWF$. 
\end{proof}

Putting Lemmas \ref{lem: cons dom validates CD}, \ref{lem: singleton domain}, \ref{lem: finite domain}, \ref{lem: WF validates ED} together, we get the $(\Rightarrow)$ direction of the theorem.

\subsection{Kripke frame-incompleteness}
In this subsection, we establish an intriguing result showing that the logic $\QFS$ and some fragments are Kripke frame-incomplete. The theorem highlights a limitation of semantics in deriving deeper results about the logic, suggesting the need to turn to proof-theoretic methods to address questions such as the Skolemization problem.

\begin{definition}\label{Dfn: sound-complete}
The logic $L$ is \emph{sound and complete} w.r.t. the class $\mathcal{C}$ of Kripke frames when for any formula $\varphi$
\[
\phi \in L \quad \text{if and only if} \quad \mathcal{C} \vDash \varphi.
\]
\end{definition}
\begin{definition}\label{Dfn: frame-incomplete}
The logic $L$ is called \emph{frame-incomplete} if there is no class $\mathcal{C}$ of Kripke frames such that $L$ is sound and complete with respect to $\mathcal{C}$. 
\end{definition}

Now, we prove Theorem \ref{thm: incompleteness}, which proves the incompleteness of $\QFS$ and some fragments with respect to Kripke frames (see Theorem \ref{Thm: distinct fragments} for their distinction).

\ThmIncompleteness*

\begin{proof}
We will prove the theorem for $\mathsf{QFS}$. For the sake of contradiction, suppose a class $\mathcal{C}$ of Kripke frames exists such that $\QFS$ is sound and complete w.r.t. $\mathcal{C}$. Take the 
principle $\varphi:=\mathrm{Lin} \vee \mathrm{OEP}$ where
\[
\mathrm{Lin}:=(C \to D) \vee (D \to C) \quad \mathrm{OEP}:= \exists x A(x) \to \forall x A(x), 
\] 
for any formulas $C$, $D$, and $A(x)$. $\mathrm{Lin}$ is short for Linearity and $\mathrm{OEP}$ for One Element Principle. We will prove that 
all instances of $\phi$ hold (on all frames) in $\mathcal{C}$, but there exists an instance of $\phi$ not valid in $\QFS$.

1) First, we prove $\mathcal{C} \vDash \phi$. By assumption, $\QFS$ is sound and complete w.r.t. $\mathcal{C}$. Therefore, for any $F \in \mathcal{C}$, we have:
\[
F \vDash \CD \qquad F \vDash \ED \qquad F \vDash \SW.
\]
By Lemma \ref{lem: constant domain} and Corollary \ref{Cor: linear or singleton}, $F$ is constant domain, with the domain $\calD$, and either $|\calD|=1$ or $|\calD|>1$ and $F$ is linear. If $|\calD|=1$, then clearly $F \vDash \mathrm{OEP}$. So, suppose $F$ is linear. By Definition \ref{Def: linear}, for any worlds $w,w' \in W$ we have $w \preccurlyeq w'$ or $w' \preccurlyeq w$. Suppose $F \not \vDash \mathrm{Lin}$. Hence, there is a model $\calM$ based on $F$ and a world $w$ such that an instance of $\mathrm{Lin}$ is not valid there
\[
\calM, w \nVdash C \to D \qquad \calM, w \nVdash D \to C.
\]
Thus, there are worlds $v_1 \succcurlyeq w$ and $v_2 \succcurlyeq w$ such that 
\[
\calM, v_1 \Vdash C \quad \calM, v_1 \nVdash D \quad \calM, v_2 \Vdash D \quad \calM, v_2 \nVdash C
\]
However, as $F$ is linear, we have either $v_1 \preccurlyeq v_2$ or $v_2 \preccurlyeq v_1$, which is a contradiction. Therefore $\calc \vDash \phi$.

2) There is an instance of $\phi$ not valid in $\QFS$. To show this, we present a model $\calM$ for $\QFS$ that does not validate that instance. Consider $\calM$ with a language containing only two nullary predicates $P, Q$, and a unary predicate $R(x)$:

\small\[
  \xymatrix{
 w_2 \forc R(a), R(b), P & &   w_3 \forc  R(a), R(b), Q \\
  &\hspace{-25pt}  w_1 \forc R(a) \ar[lu] \ar[ru] & 
  } 
\]
\normalsize where $D_{w_1}=D_{w_2}=D_{w_3}=\{a,b\}$.
First, note that as 
\[
\calM, w_2 \Vdash P \quad \calM, w_2 \nVdash Q \quad \calM, w_3 \Vdash Q \quad \calM, w_3 \nVdash P 
\]
we have
\[
\calM, w_1 \nVdash (P \to Q) \vee (Q \to P).
\]
Moreover, as $\calM, w_1 \Vdash R(a)$ and $\calM, w_1 \nVdash R(b)$ we have
\[
\calM, w_1 \nVdash \exists x R(x) \to \forall x R(x).
\]
Hence, $\calM$ is not a model of the following instance of $\phi$:
\[
\calM, w_1 \nVdash ((P \to Q) \vee (Q \to P)) \vee (\exists x R(x) \to \forall x R(x)).
\]
Now, we want to show that $\mathcal{M}$ is a model of $\mathsf{QFS}$. 

\textbf{Claim.} Take the model $\mathcal{M}$ as above. For any formula $\psi(x)$
\[
\text{either} \quad \calM, w_2 \Vdash \forall x \psi(x)
\quad \text{or} \quad  \calM, w_2 \Vdash \forall x \neg \psi(x)
\]
and
\[
\text{either} \quad \calM, w_3 \Vdash \forall x \psi(x)
\quad \text{or} \quad  \calM, w_3 \Vdash \forall x \neg \psi(x).
\] 
In other words, in any of the final worlds $w_2$ or $w_3$, either both $\psi(a)$ and $\psi(b)$ are valid, or both of them are invalid.

Suppose the claim holds. We prove that $\calM, w_1 \forc \QFS$. 

\item[$\bullet$] As $F$ is constant domain, $\calM, w_1 \Vdash \CD$.

\item[$\bullet$] To show $\calM, w_1 \Vdash \ED$, for the sake of contradiction assume
\[
\calM, w_1 \Vdash B \to \exists x A(x) \qquad \calM, w_1 \nVdash \exists x (B \to A(x)) \tag{$*$}
\]
for formulas $A(x)$ and $B$ where $x$ is not free in $B$. Therefore,
\[
\calM, w_1 \nVdash B \to A(a)  \qquad \calM, w_1 \nVdash B \to A(b). 
\]
Thus, $\exists i, j \in \{1,2,3\}$ such that
\[
\calM, w_i \Vdash B  \quad \calM, w_i \nVdash A(a) \quad \calM, w_j \Vdash B  \quad \calM, w_j \nVdash A(b). 
\]
The following hold for $i$ and $j$:
\begin{itemize}
\item[$\blacktriangleright$]
$i \neq j$: because if $i=j$, then $\calM, w_i \Vdash B$, $\calM, w_i \nVdash A(a)$, and $\calM, w_i \nVdash A(b)$. As $w_i \succcurlyeq w_1$ this means $\calM, w_1 \nVdash B \to \exists x A(x)$, a contradiction with ($*$).

\item[$\blacktriangleright$]
$i \neq 1$: because if $i=1$, then $\calM, w_1 \Vdash B$ and $\calM, w_1 \nVdash A(a)$. By ($*$), $\calM, w_1 \Vdash A(b)$. However, as $w_j \succcurlyeq w_1$ this means $\calM, w_j \Vdash A(b)$, a contradiction.

\item[$\blacktriangleright$]
$j \neq 1$: similar as the above case.
\end{itemize}
Therefore, w.l.o.g. we can assume $i=2$ and $j=3$. Thus,
\[
\calM, w_2 \Vdash B  \quad \calM, w_2 \nVdash A(a).
\]
However, as $\calM, w_1 \Vdash B \to \exists x A(x)$ we have $\calM, w_2 \Vdash A(b)$, which is a contradiction with the claim.

\item[$\bullet$] For the case $\calM, w_1 \Vdash \SW$, see Appendix.

These two points together prove that $\mathsf{QFS}$ is frame-incomplete. The proofs for the fragments are similar. For the proof of the claim, see Appendix.
\end{proof}


\section{Fragments of $\QFS$}

In this subsections, we investigate and compare fragments of $\QFS$. Let us start with the propositional logic underlying $\QFS$. Recall the \emph{propositional logic} of a first-order logic $L$, denoted by $\mathrm{PL}(L)$, as the set of all propositional formulas $\phi$ such that for any first-order substitution $\sigma$ we have $L \vdash \sigma(\phi)$. The following theorem states that quantifier shift formulas lack propositional content, so adding them to $\mathsf{IQC}$ does not alter the intuitionistic propositional logic base.

\begin{theorem}
$\mathrm{PL}(\mathsf{QFS}) = \mathsf{IPC}$.
\end{theorem}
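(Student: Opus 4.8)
The plan is to prove the two inclusions separately, with the only substantive work in the direction $\mathrm{PL}(\mathsf{QFS}) \subseteq \mathsf{IPC}$, handled by a singleton-domain model.

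For $\mathsf{IPC} \subseteq \mathrm{PL}(\mathsf{QFS})$ the argument is immediate. Let $\phi$ be a propositional theorem of $\mathsf{IPC}$. For any first-order substitution $\sigma$, the formula $\sigma(\phi)$ is a substitution instance of an intuitionistic propositional theorem, hence provable in $\mathsf{IQC}$ (intuitionistic provability is preserved under substitution of formulas for propositional variables). Since $\mathsf{IQC} \subseteq \mathsf{QFS}$, we obtain $\mathsf{QFS} \vdash \sigma(\phi)$ for every $\sigma$, so $\phi \in \mathrm{PL}(\mathsf{QFS})$.

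For the converse $\mathrm{PL}(\mathsf{QFS}) \subseteq \mathsf{IPC}$, I would argue by contraposition. Suppose $\phi$ is a propositional formula with $\phi \notin \mathsf{IPC}$. By the completeness of $\mathsf{IPC}$ with respect to finite rooted Kripke models, there is a finite intuitionistic propositional Kripke model $K=(W,\preccurlyeq,V)$ with root $r$ such that $K, r \nVdash \phi$. I would turn $K$ into a first-order Kripke model $\mathcal{M}$ over the \emph{same} frame by assigning to every world the singleton domain $D_w = \{a\}$ and interpreting each propositional variable of $\phi$ as a nullary predicate with its valuation inherited from $K$. Taking $\sigma$ to be the identity substitution (so $\sigma(p_i) = p_i$ read as a nullary predicate), the formula $\sigma(\phi)$ is literally $\phi$ viewed as a first-order formula over nullary predicates.

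The key point is then twofold. By Lemma \ref{lem: singleton domain}, the underlying singleton-domain frame validates $\mathsf{QFS}$, so every theorem of $\mathsf{QFS}$ is forced at every world of $\mathcal{M}$. On the other hand, since $\phi$ is quantifier-free and all its atoms are nullary, an easy induction on $\phi$ shows that its forcing in $\mathcal{M}$ reduces to ordinary propositional forcing in $K$; in particular $\mathcal{M}, r \nVdash \sigma(\phi)$. Hence $\mathsf{QFS} \nvdash \sigma(\phi)$, witnessing $\phi \notin \mathrm{PL}(\mathsf{QFS})$. I do not expect a serious obstacle; the only care needed is to verify the coincidence of first-order and propositional forcing on singleton domains, and crucially to route the semantic step through Lemma \ref{lem: singleton domain} for the specific class of singleton-domain frames rather than through any frame characterization of $\mathsf{QFS}$ itself, which is unavailable by the frame-incompleteness result (Theorem \ref{thm: incompleteness}).
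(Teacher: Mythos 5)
Your proposal is correct and is essentially the paper's own argument: the paper's (two-line) proof sketch rests on exactly your two observations, namely that singleton-domain frames of arbitrary shape validate $\mathsf{QFS}$ (Lemma \ref{lem: singleton domain}) and that forgetting the domain recovers arbitrary propositional Kripke frames for $\mathsf{IPC}$. You have merely fleshed out the sketch — the trivial inclusion via $\mathsf{IQC}$-substitution, the lifting of a propositional countermodel to singleton domains, and the coincidence of first-order and propositional forcing — and your remark that only the soundness direction (not any frame characterization) is needed, in view of Theorem \ref{thm: incompleteness}, is exactly the right caution.
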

\begin{proof}[Proof sketch] Note that frames of arbitrary form, with the condition that the domain is singleton, are included in the frames for $\QFS$. If we forget about the domain, we get frames for propositional logic $\ipc$. 
\end{proof}

As expected for formulas in prenex normal form we have:

\begin{restatable}{lemma}{LemPrenex}\label{lem: prenex}
Let $F$ be an intermediate logic such that $\QFS \subseteq L$. For any formula $A$, there exists a formula $B$ in the prenex normal form such that $L \vdash (A \to B) \wedge (B \to A)$. 
\end{restatable}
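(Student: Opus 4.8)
The plan is to prove that every formula is equivalent (in any $L \supseteq \QFS$) to a prenex formula by induction on the structure of $A$, using the quantifier shift principles to push quantifiers outward through the connectives. The key observation is that $\QFS$ contains all classical quantifier shift axioms together with their converses (the converses already being intuitionistically valid), so within $L$ every quantifier can be moved across $\wedge$, $\vee$, and $\to$ in either direction, exactly as in classical logic.

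First I would record the full list of equivalences that are available in $L$. The intuitionistically valid shifts hold automatically: for $x$ not free in $B$, the logic proves $\forall x(A(x)\wedge B)\leftrightarrow(\forall x A(x)\wedge B)$, the two $\exists$-distribution-over-$\vee$/$\wedge$ equivalences, $\exists x(A(x)\to B)\to(\forall x A(x)\to B)$, $(\exists x A(x)\to B)\leftrightarrow\forall x(A(x)\to B)$, and $(B\to\forall x A(x))\leftrightarrow\forall x(B\to A(x))$. The three non-intuitionistic directions supplied by $\CD$, $\ED$, and $\SW$ complete the set: $\CD$ gives $\forall x(A(x)\vee B)\to(\forall x A(x)\vee B)$, $\ED$ gives $(B\to\exists x A(x))\to\exists x(B\to A(x))$, and $\SW$ gives $(\forall x A(x)\to B)\to\exists x(A(x)\to B)$, each with the converse already provable in $\IQC$. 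Thus in $L$ one has a genuine two-sided equivalence for every combination of a quantifier with a connective.

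Then I would run the induction on the complexity of $A$. Atomic formulas are already prenex. For the inductive step, by the induction hypothesis replace each immediate subformula by a provably equivalent prenex formula; since $L$-provable equivalence is a congruence (it is closed under the rules of $\IQC$, so equivalent subformulas may be substituted), it suffices to prenexify a formula whose principal connective is applied to prenex arguments. One repeatedly applies the shift equivalences above to pull the leading quantifier block of each argument past the connective to the front, taking care to rename bound variables first so that no capture occurs and the side condition ``$x$ not free in $B$'' is met. Iterating extracts all quantifiers to the outside, yielding a prenex $B$ with $L\vdash(A\to B)\wedge(B\to A)$.

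The main obstacle, and the step deserving the most care, is the implication case, since the antecedent position reverses polarity: a $\forall$ in the antecedent must become an $\exists$ at the front (this is precisely where $\SW$ is needed), and dually a $\exists$ in the antecedent becomes a $\forall$. I would handle $A_1\to A_2$ with both arguments prenex by first pulling the quantifiers out of $A_2$ (the positive occurrence, routine), and then peeling the quantifiers off $A_1$ one at a time from the outside, each peel turning $\forall$ into $\exists$ via $\SW$ and $\exists$ into $\forall$ via the intuitionistic equivalence $(\exists x A\to B)\leftrightarrow\forall x(A\to B)$, with variable renaming to keep the eigenvariable/freeness conditions satisfied. The bookkeeping of alternating quantifier types and the systematic renaming is the only delicate part; the logical content is entirely carried by the shift equivalences collected in the first step.
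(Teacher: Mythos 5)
Your proposal is correct and follows essentially the same route as the paper: collect the two-sided quantifier-shift equivalences (the intuitionistically valid ones plus $\CD$, $\ED$, $\SW$ with their $\IQC$-provable converses), then induct while pulling quantifier blocks to the front with renaming to preserve the freeness side conditions, with $\SW$ carrying the polarity-reversing antecedent case exactly as in the paper. The only difference is bookkeeping: the paper makes your informal ``iterate until all quantifiers are extracted'' step well-founded via an explicit measure (the nesting depth of quantifiers, with a subinduction on formula complexity), whereas you use plain structural induction plus an inner peeling loop, which is equally sound since each peel strictly decreases the number of quantifiers under the connective.
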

\begin{proof}
    See Appendix.
\end{proof}

The following theorem establishes that $\mathsf{QFS}$, $\mathsf{IQC}+ \{\mathsf{CD} , \mathsf{SW}\},$ and $\mathsf{IQC}+ \{\mathsf{CD} , \mathsf{ED}\}$ are all distinct. 

\begin{restatable}{theorem}{ThmFragments} \label{Thm: distinct fragments}
The following hold:
\begin{enumerate}
\item
$\IQC+\{\mathsf{CD}, \mathsf{SW}\} \nvdash \mathsf{ED}$. 
\item
$\IQC + \{\mathsf{CD}, \mathsf{ED}\} \nvdash \mathsf{SW}$. 
\item \label{SW CD}
$\mathsf{IQC}+\SW \vdash \CD$
 \item \label{ED CD}
 $\mathsf{IQC}+\ED \nvdash \CD$
\end{enumerate}
\end{restatable}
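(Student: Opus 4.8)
My plan is to prove the one inclusion (item \ref{SW CD}) by an explicit intuitionistic derivation and the three separations (items 1, 2, and \ref{ED CD}) semantically, using frames wherever they suffice. For item \ref{SW CD}, namely $\IQC + \SW \vdash \CD$, I would argue as follows. Since $x$ is not free in $\forall x A(x) \vee B$, I may instantiate $\SW$ with the side formula $\forall x A(x)\vee B$ in place of $B$, obtaining $(\forall x A(x)\to(\forall x A(x)\vee B))\to \exists x\,(A(x)\to(\forall x A(x)\vee B))$. Its antecedent is an intuitionistic theorem, so $\IQC+\SW \vdash \exists x\,(A(x)\to(\forall x A(x)\vee B))$. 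Assuming $\forall x(A(x)\vee B)$ and eliminating this existential with a witness $x_0$, the instance $A(x_0)\vee B$ of the hypothesis splits into two cases: if $A(x_0)$, the witnessing implication yields $\forall x A(x)\vee B$; if $B$, then $\forall x A(x)\vee B$ follows by $\vee$-introduction. As the conclusion no longer mentions $x_0$, discharging the hypothesis gives $\CD$, and the whole derivation is intuitionistic.

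For items 1 and 2 I would read off suitable frames from the frame characterization (Theorem \ref{Thm: soundness}) using Lemmas \ref{lem: WF} and \ref{lem: WF validates ED}. For item 1 I need a frame validating $\CD$ and $\SW$ but refuting $\ED$: take a linear, constant-domain frame with countably infinite domain consisting of a least world $w_0$ sitting below an infinite descending chain $u_0\succ u_1\succ u_2\succ\cdots$ (order type $1+\omega^{*}$), exactly the frame obtained in Example \ref{Ex: two frames} by placing a node below the right-hand frame. Every $\preccurlyeq[w]$ has a maximal element, so $\cWF$ holds and $F\vDash\SW$ by Lemma \ref{lem: WF validates ED}, while $F\vDash\CD$ since $F$ is constant domain; but $\preccurlyeq[w_0]$ carries the infinite descending chain, so $\WF$ fails and $F\nvDash\ED$ by the contrapositive of Lemma \ref{lem: WF}. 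Symmetrically, for item 2 I take the left frame of Example \ref{Ex: two frames}, the ascending $\omega$-chain $v_0\prec v_1\prec\cdots$ with constant countably infinite domain: it satisfies $\WF$ but not $\cWF$, hence $F\vDash\CD$ and $F\vDash\ED$ but $F\nvDash\SW$.

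Item \ref{ED CD}, $\IQC+\ED\nvdash\CD$, is the main obstacle, and it cannot be settled by frames at all: by Lemma \ref{lem: constant domain} every frame validating $\ED$ is constant domain and therefore validates $\CD$ by Lemma \ref{lem: cons dom validates CD}, so no Kripke frame separates the two principles. This is precisely the frame-incompleteness of $\IQC+\ED$ established by Nakamura \cite{nakamura1983}, with $\CD$ itself serving as the unprovable-yet-frame-valid witness. I would therefore descend from frame-validity to the weaker notion of truth in a single fixed Kripke model, and exhibit one non-constant-domain model $\calM$ in which some instance of $\CD$ is refuted (which forces $D_u\subsetneq D_{u'}$ for some comparable pair $u\preccurlyeq u'$) while every instance of the $\ED$ schema is nonetheless forced at every world.

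The decisive and hardest step is this last verification. At a world $w$ the existential $\exists x(B\to A(x))$ can only be witnessed by elements already present in $D_w$, so $\ED$ fails at $w$ exactly when some $\exists x A(x)$ is forced above $w$ only through a ``new'' element while no locally available witness works; a finite non-constant-domain model always admits such an $A$ (for instance $A(x)=\neg P(x)$ at a terminal world with a fresh domain element), so finiteness is hopeless and one is driven to an infinite model in which every element's forcing profile is already realised by elements occurring low in the order, keeping all existentials witnessable locally. I expect controlling all choices of $A$ and $B$ simultaneously to be the crux, and would carry it out either by a bookkeeping construction over an infinite growing-domain chain, or by importing the model underlying Nakamura's frame-incompleteness of $\IQC+\ED$, whose refuted formula can be taken to be $\CD$.
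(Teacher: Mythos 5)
Items 1--3 of your proposal are correct and coincide with the paper's own proof. For item 1 the paper uses exactly your frame --- the right-hand frame of Example \ref{Ex: two frames} with one node added at the bottom (linear, constant countably infinite domain, satisfying $\cWF$ but not $\WF$), so that $\CD$ and $\SW$ hold by Lemmas \ref{lem: cons dom validates CD} and \ref{lem: WF validates ED} while $\ED$ is refuted exactly as in Lemma \ref{lem: WF}; for item 2 it uses the ascending $\omega$-chain with constant infinite domain, as you do. Your derivation of item \ref{SW CD} is the paper's syntactic proof in natural-deduction dress: instantiate $\SW$ with the side formula $\forall x A(x)\vee B$ (the side condition on $x$ is satisfied, as you note), obtain $\exists x\,(A(x)\to \forall x A(x)\vee B)$ from the intuitionistically provable antecedent, and combine it with $\forall x(A(x)\vee B),\ \exists x(A(x)\to\forall x A(x)\vee B)\Rightarrow \forall x A(x)\vee B$, which is your witness-and-case-split step. (The paper also records an alternative semantic contrapositive argument for this item.)

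Item \ref{ED CD}, however, contains a genuine gap. You correctly establish that no class of frames can separate $\ED$ from $\CD$ (via Lemmas \ref{lem: constant domain} and \ref{lem: cons dom validates CD}), and you correctly predict the shape of the required object: a single non-constant-domain Kripke model refuting an instance of $\CD$ while forcing every instance of the $\ED$ schema at every world. But you stop at announcing that you ``would carry it out'' by a bookkeeping construction or by importing Nakamura's model. That verification is the entire mathematical content of item 4, and it is the hard part: the paper's proof (credited to an unpublished note of \v{S}vejdar) takes the chain $w_0\preccurlyeq w_1\preccurlyeq\cdots$ with domains $D_{w_i}=\{0,\dots,i\}$, a top world $\delta$ with domain $\mathbb{N}$, and two unary predicates $P,Q$ forced so that $\CD$ fails at $w_0$, and then validates \emph{all} instances of $\ED$ by two inductively proved claims --- homogeneity of $\delta$, and invariance of forcing under ``similar setups'' $[w_n,e_1]$ and $[w_m,e_2]$ (valuations agreeing on which variables name the newest domain element) --- which together show that whenever $B\to\exists x A(x)$ is forced, the element $0\in D_{w_0}$ already witnesses $\exists x(B\to A(x))$. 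Nothing in your proposal substitutes for this uniform-witness argument, and your fallback assertion that the model behind Nakamura's frame-incompleteness proof for $\IQC+\ED$ can be taken to refute $\CD$ itself is stated without verification; the paper does not derive item 4 from Nakamura but exhibits the explicit model. As it stands, item 4 of your proposal is a plan rather than a proof.
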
  

\begin{proof}
For 1 and 2, use similar models as in the proof of Lemma \ref{lem: WF}. The proof is similar to the proof of Lemma \ref{lem: WF validates ED}. For more details, see Appendix.
\end{proof}

\section{Conclusion and Future Work}
In this article, the intermediate logics that admit standard
Skolemization have been characterized:
those that contain all quantifier shift principles. This does not
extend, however, to prenex fragments: The prenex fragments of 
intuitionistic logic, G\"odel logics (the logics of countable linear Kripke
frames with constant domains), etc. admit standard Skolemization (\cite{Mints66,Mints72}). It should
be emphasized that contrary to classical logic, deSkolemization is not
possible in general if identity axioms for Skolem functions are used
\cite{Mints2000}. This question remains open for
the intermediate logic with quantifier shift principles. Additional fragments and
variants could be considered similar to the the standard Skolemization
of the existential fragment of intuitionistic logic with the existence predicate \cite{baaz2006skolemization}.

For $\QFS$ in itself the development
of suitable semantics is desirable.
By Example \ref{unsound-ex},  $\QFS$ lacks the
existence property, but this is unclear for the disjunction property as
the underlying propositional logic is intuitionistic logic; therefore, the
disjunction property holds for $A\vee B$, where $A$ and $B$ are propositional. An additional
proof theoretic property would be the existence of Herbrand expansions
implied by the addition of quantifier shift principles.

The arguments of this article extend to many other classes of logics:
Firstly, the direction, deSkolemization $\Rightarrow$ quantifier shifts, is very
general and can be argued for almost any logic. Here the determination
of suitable minimal conditions would be useful. Secondly, the argument
in the other direction depends on the existence of an underlying
analytic calculus with a corresponding $++$ variant, e.g., for normal first-order modal logics where the sequent calculus for $K$ could be used. Here,
the condition is the existence of the shift of quantifiers through
modalities.

As Example \ref{SL} shows, it can be worthwhile to depart from standard
Skolemization, cf. the definable Skolem functions in Real Closed Fields.
Of course, there are limits, as the error in Hilbert-Bernays \cite{hilbert} on
Disparate Functions demonstrates.

 From a more foundational perspective, the arguments of this article can
be used to demonstrate the conservativity of critical epsilon formulas
$A(t)\rightarrow A(\varepsilon_xA(x))$ beyond the first epsilon theorem, which holds in very few cases, cf. Example \ref{unsound-ex}. The epsilon terms 
can
be considered as compositions of Skolem functions, so the layer-wise
deSkolimization leads to the conservativity of epsilon terms
when the deSkolemization is possible.

To put together, the aim of this article is to initiate a program for the
comprehensive analysis of Skolemization properties in first-order logics.

\bibliographystyle{plain}
\bibliography{QFS}

\newpage
\section{Appendix}
This appendix primarily contains (parts of) proofs omitted from the main text due to space constraints. For the reader's convenience, we restate the corresponding lemmas and theorems. While most proofs are straightforward for a keen reader, some involve technical challenges that rely on specific tricks (e.g., item 4 in Theorem \ref{Thm: distinct fragments}). We believe their omission does not detract from the overall quality or completeness of the paper.


\noindent \textbf{Justification for Definition \ref{Def: suitable}:}

\noindent Let us justify the choice of a suitable quantifier. Violation of each condition in Definition \ref{Def: suitable quantifier} leads to an undesirable proof:

\begin{enumerate}
\item
(substitutability) \ \ \ \ \qquad
\AxiomC{$A(a) \Rightarrow A(a)$}
\UnaryInfC{$A(a) \Rightarrow \forall x A(x)$}
\DisplayProof
\item
(side-variable condition)
\begin{center}
\AxiomC{$A(a,b) \Rightarrow A(a,b)$}
\RightLabel{$*_1$}
\UnaryInfC{$\exists y A(a,y) \Rightarrow A(a, b)$}
\RightLabel{$*_2$}
\UnaryInfC{$\exists y A(a,y) \Rightarrow \forall x A(x, b)$}
\UnaryInfC{$\forall x \exists y A(x,y) \Rightarrow \forall x A(x, b)$}
\UnaryInfC{$\forall x \exists y A(x,y) \Rightarrow \exists y \forall x A(x, y)$}
\DisplayProof
\end{center}
Note that in $*_1$ and $*_2$ we have $b <_\pi a$ and $a <_\pi b$, respectively. 
\item
(very weak regularity)
\begin{center}
\AxiomC{$A(a) \Rightarrow A(a)$}
\UnaryInfC{$\exists x A(x) \Rightarrow A(a)$}
\AxiomC{$A(a) \Rightarrow A(a)$}
\UnaryInfC{$A(a) \Rightarrow \forall x A(x)$}
\BinaryInfC{$\exists x A(x) \Rightarrow \forall x A(x)$}
\DisplayProof
\end{center}
\end{enumerate}

\ThmQFSLJpp*
\begin{proof}
We prove each direction.

$(\Leftarrow)$ Let $\mathbf{QFS} \vdash S$. If $\mathbf{LJ} \vdash S$ then $\mathbf{LJ}^{++} \vdash S$. Moreover, the axioms $\mathsf{CD}$, $\mathsf{ED}$, and $\mathsf{SW}$ are provable in $\mathbf{LJ}^{++}$:
\begin{center}
\begin{tabular}{c c}
\hspace{-20pt}
$\mathbf{LJ}^{++} \vdash \SW$
&
\hspace{20pt}
\AxiomC{$A(a) \Rightarrow A(a)$}
\RightLabel{$*$}
\UnaryInfC{$A(a) \Rightarrow \forall x A(x)$}
\AxiomC{$B \Rightarrow B$}
\BinaryInfC{$A(a), \forall x A(x) \to B \Rightarrow B$}
\UnaryInfC{$\forall x A(x) \to B \Rightarrow A(a) \to B$}
\UnaryInfC{$\forall x A(x) \to B \Rightarrow \exists x (A(x) \to B)$}
\DisplayProof
\end{tabular}
\end{center}

\begin{center}
\begin{tabular}{c c}
\hspace{-20pt}
$\mathbf{LJ}^{++} \vdash \ED$
&
\hspace{20pt}
\AxiomC{$A(a) \Rightarrow A(a)$}
\RightLabel{$*$}
\UnaryInfC{$\exists x A(x) \Rightarrow A(a)$}
\AxiomC{$B \Rightarrow B$}
\BinaryInfC{$B, B \to \exists x A(x) \Rightarrow A(a)$}
\UnaryInfC{$B \to \exists x A(x) \Rightarrow B \to A(a)$}
\UnaryInfC{$B \to \exists x A(x) \Rightarrow \exists x (B \to A(x))$}
\DisplayProof
\end{tabular}
\end{center}

\begin{center}
\begin{tabular}{c c}
\hspace{-10pt}
$\mathbf{LJ}^{++} \vdash \CD$
&
\small
\AxiomC{$A(a) \Rightarrow A(a)$}
\RightLabel{$*$}
\UnaryInfC{$ A(a) \Rightarrow \forall x A(x)$}
\UnaryInfC{$A(a) \Rightarrow \forall x A(x) \vee B$}
\AxiomC{$B \Rightarrow \forall x A(x) \vee B$}
\BinaryInfC{$A(a) \vee B \Rightarrow \forall x A(x) \vee B$}
\UnaryInfC{$\forall x (A(x) \vee B) \Rightarrow \forall x A(x) \vee B$}
\DisplayProof
\end{tabular}
\end{center}
Note that the rules marked by $(*)$ make the proofs unsound for $\mathbf{LJ}$.

$(\Rightarrow)$ Let $\mathbf{LJ}^{++} \vdash^{\pi} S$, where $S=(\Sigma \Rightarrow \Lambda)$. Change $\pi$ to $\pi'$ by replacing each unsound universal and existential quantifier inference as follows:
\small \begin{center}
\begin{tabular}{c}
\AxiomC{$\Gamma \Rightarrow A(a)$}
\UnaryInfC{$\Gamma \Rightarrow \forall x A(x)$}
\DisplayProof
$\rightsquigarrow$
\AxiomC{$\Gamma \Rightarrow A(a)$}
\AxiomC{$\forall x A(x) \Rightarrow \forall x A(x)$}
\BinaryInfC{$\Gamma, A(a) \to \forall x A(x) \Rightarrow \forall x A(x)$}
\DisplayProof
\end{tabular}
\end{center}

\small \begin{center}
\begin{tabular}{c}
\AxiomC{$\Gamma, B(b) \Rightarrow \Delta$}
\UnaryInfC{$\Gamma, \exists x B(x) \Rightarrow \Delta$}
\DisplayProof
$\rightsquigarrow$
\AxiomC{$\exists x B(x) \Rightarrow \exists x B(x)$}
\AxiomC{$B(b), \Gamma \Rightarrow \Delta$}
\BinaryInfC{$\Gamma, \exists x B(x), \exists x B(x) \to B(b) \Rightarrow \Delta$}
\DisplayProof
\end{tabular}
\end{center}
\normalsize Denote $A_i(a_i) \rightarrow \forall x A_i(x)$ by $\alpha_i(a_i)$ and $\exists x B_j(x) \rightarrow B_j(b_j)$ by $\beta_j(b_j)$.
Therefore, $\pi'$ is a proof in $\mathbf{LJ}$ such that
\[
\mathbf{LJ} \vdash^{\pi'} \{\alpha_i\}_{i \in I}, \{\beta_j\}_{j \in J}, \Sigma \Rightarrow \Lambda, \tag{1}
\]
for (possibly empty) sets $I$ and $J$. Now, we claim that we can use the rule $(\exists L)$ as many times as needed to get 
\[
\mathbf{LJ} \vdash \{\exists y_i \alpha_i (y_i)\}_{i \in I}, \{\exists z_j \beta_j(z_j))\}_{j \in J}, \Sigma \Rightarrow \Lambda. \tag{2}
\]
Suppose the claim holds. To get a proof of $S$ in $\mathbf{QFS}$, we remove the additional formulas in $(2)$. 
For each $i \in I$ and $j \in J$, we have 
\small\[
\mathbf{QFS} \vdash \ \Rightarrow (\forall y_i A_i(y_i) \to \forall x A_i (x)) \to \exists y_i (A_i(y_i)\to \forall x A_i (x)),
\]
\normalsize since the succedent of the sequent is an instance of the axiom $\SW$ and hence provable in $\mathbf{QFS}$. Similarly,
\small \[
\mathbf{QFS} \vdash \ \Rightarrow (\exists x B_j(x) \to \exists z_j B_j(z_j)) \to \exists z_j(\exists x B_j(x) \rightarrow B_j(z_j)),
\]
\normalsize as the succedent is an instance of the axiom $\ED$. Therefore, 
\[
\mathbf{QFS} \vdash \ \Rightarrow \exists y_i (A_i(y_i)\to \forall x A_i (x)),
\]
\[\mathbf{QFS} \vdash \ \Rightarrow \exists z_j(\exists x B_j(x) \rightarrow B_j(z_j)).
\]
By the cut rule on the above sequents and the sequent $(2)$, we get $\mathbf{QFS} \vdash S$.

Now, we prove the claim. To make sure that the existential quantifier inferences can be carried out in $\mathbf{LJ}$, (i.e., deriving $(2)$ from $(1)$), we use the fact that the initial proof $\pi$ only contained \textit{suitable} quantifier inferences. We will explain how each condition is used. 

By the substitutability, we know that as each $a_i$ and $b_j$ is a characteristic variable, they do not appear in the conclusion of $\pi$; namely, they do not appear in $\Sigma$ and $\Lambda$. 

By very weak regularity, if $a$ is the characteristic variable of two strong-quantifier inferences in $\pi$, then they must have the same principal formulas. Therefore, we cannot have both
\[
A(a) \to \forall x A(x) \qquad \text{and} \qquad \exists x B(x) \to B(a)
\]
or both
\[
A(a) \to \forall x A(x) \qquad \text{and} \qquad C(a) \to \forall x C(x)
\]
present in $(2)$.

Finally, by the side-preserving condition, we cannot have
\[
A(a,b) \to \forall x A(x, b) \qquad \text{and} \qquad \exists x B(a, x) \to B(a,b)
\]
both present in $(2)$, as this would require $a <_\pi b$ and $b <_\pi a$ which is in contradiction with $<_\pi$ being acyclic. Thus, the claim holds.
\end{proof}

\begin{example}\label{Ex: deSkolem}
Consider the Skolemization of proof of a target end-sequent $\forall x\exists yA(x,y)\Rightarrow \forall x\exists yA(x,y)$ 
\begin{center}
\AxiomC{$A(c,f(c)) \Rightarrow  A(c,f(c))$}
\UnaryInfC{$A(c,f(c))  \Rightarrow  \exists y A(c,y) $}
\UnaryInfC{$\forall x A(x,f(x)) \Rightarrow \exists y A(c,y)$}
\DisplayProof
\end{center}
Now we deSkolemize the above proof. $a_c$ and $a_{f(c)}$ are new variables for Skolem terms $c$ and $f(c)$.
\begin{center}
\AxiomC{$A(a_c,a_{f(c)}) \Rightarrow  A(a_c,a_{f(c)})$}
\UnaryInfC{$A(a_c,a_{f(c)})  \Rightarrow  \exists y A(a_c,y) $}
\UnaryInfC{$A(a_c,a_{f(c)}) \Rightarrow \forall x\exists y A(x,y)$}
\UnaryInfC{$  \exists y A(a_c,y) \Rightarrow \forall x\exists y A(x,y)$}
\UnaryInfC{$  \forall x\exists y A(x,y) \Rightarrow \forall x\exists y A(x,y)$}
\DisplayProof
\end{center}
The order of the variables is $a_c<a_{f(c)}$ which is also the side variable condition and acyclic.\\
The following quantifier inferences are added in the above proof
\begin{align*}
    A(a_c,a_{f(c)}\Rightarrow \exists y A(a_c,y) \\
    \exists y A(a_c,y) \Rightarrow \forall x\exists y A(x,y)
\end{align*}
\end{example}

\LemConstantDomain*
\begin{proof}
We prove each case by contraposition, i.e., in each case, we assume $F$ is not constant domain. Then, we construct a model and find a world and an instance of the axiom that is not valid in that world. If $F$ is not constant domain, then
\[
\exists w \neq w' \ w \preccurlyeq w' \ \exists a \in D_{w'} - D_w.
\]
Let $Q$ be a nullary predicate and $P$ a unary predicate.

\item[1)] Similar to \cite[Lemma 14.4.]{Mints}. Construct the model $\calM$ based on $F$ as follows: 
\begin{align*}
\calM , u & \Vdash Q & \text{iff} \ \  u \succcurlyeq w, u \neq w \\
\forall d \in D_w \ \ \calM , u & \Vdash P(d) & \text{iff} \ \ u \succcurlyeq w \\ 
\forall a \in D_{w'}-D_w \ \ \calM , u & \nVdash P(a) & \text{iff} \ \ u \succcurlyeq w'
\end{align*}
depicted as: 
\[
  \xymatrix{
    \hspace{-25pt} \forall a \in D_w' - D_w \ w' \nvDash P(a) \ w' \forc Q  
 \\
   \vdots \ar[u]   
  \\
  \hspace{-25pt} \forall d \in D_w \ w \forc P(d) \ar[u] 
  } 
\]
Therefore, as 
\[
\forall d \in D_w \ \calM, w \Vdash P(d) \quad \text{and} \quad \forall u \succcurlyeq w \ u \neq w \ \calM, u \Vdash Q,
\]
we get $\calM, w \forc \forall x (P(x) \vee Q)$. However, as
\[
\calM, w \nVdash Q \quad \text{and} \quad D_{w'}-D_w \neq \emptyset.
\]
We have $\calM, w \nVdash \forall x P(x) \vee Q$.
Hence,
\[
\calM, w \nVdash \forall x (P(x) \vee Q)  \to \forall x P(x) \vee Q.
\]

\item[2)] Construct a model $\calM$ based on $F$ as follows: 
\[
\forall a \in D_{w'}-D_w \ \ \calM , u \Vdash P(a), Q \ \ \text{iff} \ \  u \succcurlyeq w',
\]
depicted as:
\[
  \xymatrix{
  &  \hspace{-45pt} \forall a \in D_w' - D_w \ \ w' \forc P(a), Q  & \\
 &  \vdots \ar[u] &  & 
 \\
  & w \ar[u] &  &  
  } 
\]
Therefore, in every world above $w$ if $Q$ is valid, then $P(a)$ is valid for all $a \in D_{w'}-D_w$. However, such $a$ are not in the domain of $w$. 
Hence, we have 
\[
\calM, w \Vdash Q \to \exists x P(x) \qquad \calM, w \nVdash \exists x (Q \to P(x)).
\]

\item[3)] The same model as in (1) works. As $D_{w'} - D_w \neq \emptyset$
\[
\calM, w \nVdash \forall x P(x).
\]
Thus, $\calM, w \forc \forall x P(x) \to Q$. However, $\calM, w \nVdash Q$ and the elements in $D_{w'}-D_w$ obviously do not belong to $D_w$. Hence, 
\[
\calM, w \Vdash \forall x P(x) \to Q \qquad \calM, w \nVdash \exists x (P(x) \to Q).
\]
\end{proof}

\LemFDS*
\begin{proof}
If $F \vDash \ED$ or $F \vDash \SW$, then by Lemma \ref{lem: constant domain} it is constant domain. Call this domain $\calD$. We prove each case by contraposition. Suppose $F$ does not satisfy $\FDS$. Therefore, 
\[
\exists v_1,v_2,w \in W \quad v_1 \neq v_2 \ v_1 \succcurlyeq w \ v_2 \succcurlyeq w \quad \text{and} \quad |\calD|>1.
\]
As $|\calD|>1$, there are distinct elements $a,b \in \calD$. Let $Q$ be a nullary and $P$ a unary predicate. 

\item[1)] We want to prove $F \not \vDash \ED$. Construct the model $\calM$ as:
\begin{align*}
\calM , u \Vdash P(a), Q & & \text{iff} & &  u \succcurlyeq v_1, \\
\calM , u \Vdash P(b), Q & & \text{iff} & &  u \succcurlyeq v_2,
\end{align*}
depicted as
\[
  \xymatrix{
  v_1 \vDash P(a), Q & &   v_2 \vDash P(b),Q \\
  &  
   w  \ar[lu] \ar[ru] & 
  } 
\]
Then, we get
\[
\calM, w \Vdash Q \to \exists x P(x) \qquad \calM, w \nVdash \exists x (Q \to P(x)).
\]

\item[2)] To prove $F \not \vDash \SW$, construct the model $\calM$ as:
\begin{align*}
\calM , u \Vdash P(a) & & \text{iff} & & u \succcurlyeq v_1, \\
\calM , u \Vdash P(b) & &  \text{iff} & & u \succcurlyeq v_2,
\end{align*}
depicted as:
\[
  \xymatrix{
  v_1 \vDash P(a) & &  
  v_2 \vDash P(b)  \\
  & w  \ar[lu] \ar[ru] & 
  } 
\]
Then,
\[
\calM, w \Vdash \forall x P(x) \to Q \qquad \calM, w \nVdash \exists x (P(x) \to Q).
\]
\end{proof}

\LemWF*
\begin{proof} If $F \vDash \ED$ or $F \vDash \SW$, then by Lemma \ref{lem: constant domain} it is constant domain. By the assumption, the domain is countably infinite. Call it $\calD=\{a_0, a_1, \dots\}$. Thus, by Corollary \ref{Cor: linear or singleton}, $F$ is linear. We prove each case by contraposition.  Let $Q$ be a nullary predicate and $P$ a unary one.
\item[1)] 
Suppose $F$ does not satisfy $\WF$. Thus, there exists $w \in W$ where $\preccurlyeq [w]$ contains an infinite descending chain $v_0 \succcurlyeq v_1 \succcurlyeq v_2 \succcurlyeq \dots$. Call this chain $C$. Construct the model $\calM$:
\begin{align*}
\calM , u & \Vdash Q \quad \text{iff} \ \  u \succcurlyeq v_0 \ \text{or} \ u \in C \\
\forall d \in \calD \ \ \calM , v_0 & \Vdash P(d) \\ 
\forall d \in \calD \ \ \calM , w & \nVdash P(d)   \\ 
\forall d \in \calD - \{a_0, \dots, a_{i-1}\} \ \ \calM, u & \Vdash P(d) \qquad i \geq 1, \forall u \succcurlyeq v_i
\end{align*}

depicted as:
\[
  \xymatrix{ 
  \vdots  \\
            \hspace{-15pt}   \forall d \in \mathcal{D} \quad   v_0 \forc Q, P(d) \ar[u]   \\
           \hspace{-45pt}   \forall d \in \mathcal{D} - \{a_0\} \quad   v_1 \forc Q, P(d)  \ar[u]    \\
         \hspace{-58pt}   \forall d \in \mathcal{D} - \{a_0, a_1\} \quad   v_2 \forc Q, P(d)  \ar[u]    \\
           \vdots \ar[u] \\
           w \ar[u]
           }
\]
For any world $u \succcurlyeq w$ there are three cases:

$\bullet$
$u \in C$. Thus, there exists $i \geq 0$ such that $u=v_i$. Hence, $\calM, v_i \forc Q, P(a_i)$.

$\bullet$
$u \succcurlyeq v_0$. Thus, $\calM, u \forc Q, P(a_0)$.

$\bullet$
Otherwise. Hence, $\calM, u \nVdash Q$.

\noindent Therefore, we get $\calM, w \Vdash Q \to \exists x P(x)$. However, as $\calM, w \nVdash Q \to P(a_i)$ for any $i \geq 0$, we get $\calM, w \nVdash \exists x (Q \to P(x))$. Thus,
\[
\calM, w \nVdash (Q \to \exists x P(x)) \to \exists x (Q \to P(x)).
\]

\item[2)] Similar to the previous case. Suppose $F$ does not satisfy $\cWF$. Thus, there exists $w \in W$ where $\preccurlyeq [w]$ contains an infinite ascending chain $v_0 \preccurlyeq v_1 \preccurlyeq v_2 \preccurlyeq \dots$. Call this chain $C$. Construct the model $\calM$:
\begin{align*}
\calM , u & \nVdash Q & \forall u \in W \\
\forall d \in \{a_0, \dots, a_{i}\} \ \ \calM, v_i & \Vdash P(d) & i \geq 0, v_i \in C
\end{align*}
depicted as:
\[
  \xymatrix{ 
             \vdots \\
           \hspace{45pt}  v_1 \forc P(a_0), P(a_1) \ar[u] \\ 
      \hspace{15pt}   v_0 \forc P(a_0) \ar[u] \\
           \vdots \ar[u] \\
         w  \ar[u]
           }
\]
Therefore, $\calM, w \nVdash \forall x P(x)$. Thus,
\[
\calM, w \Vdash \forall x P(x) \to Q.
\]
However, as $\calM, v_i \forc P(a_i)$  for every $a_i \in \calD$, we get $\calM, w \nVdash \exists x (P(x) \to Q)$. Hence,
\[
\calM, w \nVdash (\forall x P(x) \to Q) \to \exists x (P(x) \to Q).
\]
\end{proof}

\LemConsDomCD*
\begin{proof} Similar to \cite[Lemma 14.4.]{Mints}. See Appendix. Let $F$ be constant domain. If $F \nvDash \CD$, then there is a model $\calM$ and world $w$ such that 
\[
\calM, w \Vdash \forall x (P(x) \vee Q) \qquad \calM, w \nVdash \forall x P(x) \vee Q,
\]
for a unary predicate $P$ and a nullary one $Q$, where $x$ is not free in $Q$.
The right statement means that 
\[
\calM, w \nVdash \forall x P(x)  \qquad \calM, w \nVdash  Q.
\]
Together with the left statement this implies that there exists a world $v \succcurlyeq w$ and an element $d \in D_v$ such that $\calM, v \nVdash P(d)$. However, since the frame is constant domain, $d \in D_w$, which means that $\calM, w \nVdash P(d)$. This is in contradiction with the fact that $\calM, w \Vdash \forall x (P(x) \vee Q)$ and $w \succcurlyeq w$ and $\calM, w \nVdash Q$.
\end{proof}

\LemSingleton*
\begin{proof}
Let $F=(W, \preceq, D)$, where $|D_w|=1$ for every $w \in W$. Call this element $a$.

\item[$\bullet$]
As $F$ is constant domain, by Lemma \ref{lem: constant domain}, we get $F \vDash \CD$.

\item[$\bullet$]
To prove $F \vDash \SW$, suppose 
\[
F \vDash \forall x A(x) \to B,
\]
for any formulas $A$ and $B$, where $x$ is not free in $B$. This means that for any model $\calM$ based on $F$ and any world $w$, we have $\calM, w \Vdash \forall x A(x) \to B$, which implies 
\[
\forall w' \succcurlyeq w \quad \text{if} \quad \calM, w' \Vdash \forall x A(x) \quad \text{then} \quad 
 \calM, w' \Vdash B.
\]
However, as the domain is $\{a\}$, we get
\[
\forall w' \succcurlyeq w \quad \text{if} \quad \calM, w' \Vdash  A(a) \quad \text{then} \quad 
 \calM, w' \Vdash B,
\]
which means $\calM, w \Vdash A(a) \to B$. Therefore, $\calM, w \Vdash \exists x A(x) \to B$. Thus, $F \vDash \SW$.
\item[$\bullet$]
To prove $F \vDash \ED$, suppose $F \vDash B \to \exists x A(x)$, for any formulas $A$ and $B$, where $x$ is not free in $B$. This means for any model $\calM$ based on $F$ and any world $w$, we have $\calM, w \Vdash B \to \exists x A(x)$. As the domain of each world is $\{a\}$, we get $\calM, w \Vdash B \to A(a)$. Thus, $\calM, w \Vdash \exists x (B \to A(x))$ and hence $F \vDash \ED$.
\end{proof}

\LemFiniteDom*
\begin{proof}
Let $F$ be a linear constant domain frame, where the domain is $\calD=\{a_1, a_2, \dots, a_n\}$, for some natural number $n$. By Lemma \ref{lem: cons dom validates CD}, $F \vDash \CD$. Now, suppose $F \nvDash \ED$. Hence, there exists a model $\calM$ and a world $w \in W$ such that 
\[
\calM, w \Vdash Q \to \exists x P(x) \qquad 
\calM, w \nVdash \exists x (Q \to P(x)) \tag{1}
\]
for formulas $P$ and $Q$, where $x$ is not free in $Q$. The right statement implies $\calM, w \nVdash Q \to P(a_i)$, for each $1 \leq i \leq n$. Thus,
\[
\exists v_i \geq w \quad \calM, v_i \Vdash Q \quad \calM, v_i \nVdash P(a_i).
\]
As $F$ is linear, the minimum of the set $\{v_1, v_2, \dots, v_n\}$ exists. W.l.o.g. suppose the minimum is $v_1$. Hence, 
\[
\calM, v_1 \Vdash Q \qquad \forall 1 \leq i \leq n \ \calM, v_1 \nvDash P(a_i),
\]
which implies
\[
\calM, v_1 \nVdash Q \to \exists x P(x),
\]
which is a contradiction with (1). Therefore, $F \vDash \ED$.

The case for $\SW$ is similar with the difference that this time we reason by taking the maximum of the worlds. Suppose $F \nvDash\SW$. Thus, there is a model $\calM$ and world $w$
\[
\calM , w \Vdash \forall x P(x) \to Q \qquad \calM, w \nVdash \exists x (P(x) \to Q) \tag{2}
\]
for formulas $P$ and $Q$, where $x$ is not free in $Q$. The right statement implies $\calM, w \nVdash P(a_i) \to Q$, for each $1 \leq i \leq n$. Hence,
\[
\exists v_i \geq w \quad \calM, v_i \Vdash P(a_i) \quad \calM, v_i \nVdash Q.
\]
W.l.o.g. suppose $v_n$ is the maximum of the set $\{v_1, \dots, v_n\}$. Thus,
\[
 \forall 1 \leq i \leq n \quad \calM, v_n \Vdash P(a_i) \qquad \calM, v_n \nVdash Q,
\]
which is a contradiction with (2). Therefore, $F \vDash \SW$.
\end{proof}

\lemWFvalidatesED*

\begin{proof}
Let $F$ be a linear constant domain frame. Denote the domain by $\calD$.

1) Let $F$ satisfy $\WF$. We want to prove $F \vDash \ED$, i.e.,
\[
F \vDash (B \to \exists x A(x)) \to \exists x (B \to A(x)),
\]
for any formulas $A(x)$ and $B$, where $x$ is not free in $B$. Thus, for any model $\calM$ based on $F$ and any $w \in W$, we will show
\[
\text{if} \quad \calM, w \Vdash B \to \exists x A(x) \quad \text{then} \quad \calM, w \Vdash \exists x (B \to A(x)).
\]
Suppose $\calM, w \Vdash B \to \exists x A(x)$, i.e., for any $v \succcurlyeq w$
\[
\text{if} \quad \calM, v \Vdash B \quad \text{then} \quad \calM, v \Vdash \exists x A(x).
\]
There are two cases:

$\bullet$ $\calM, v \nVdash B$ for all $v \succcurlyeq w$. Take $d \in \mathcal{D}$ arbitrarily. Hence, $\calM, w \Vdash B \to A(d)$, which means $\calM, w \Vdash \exists x (B \to A(x))$.

$\bullet$ $\calM, v \Vdash B$ for some $v \succcurlyeq w$. As $F$ satisfies $\WF$ the set $\preccurlyeq [w]$ is well-founded. Take $S=\{v \succcurlyeq w \mid \calM, v \Vdash B\}$. Clearly $S \subseteq \preccurlyeq [w]$, hence it has a minimal element $u$. Thus, 
\[
\calM, u \Vdash B \quad \text{hence} \quad \calM, u \Vdash \exists x A(x).
\]
Thus, there exists $d \in \mathcal{D}$ such that $\calM, u \Vdash A(d)$. We will show $\calM, w \Vdash B \to A(d)$. The reason is that for any world $w'$ such that $w \preccurlyeq w' \preccurlyeq u$ we have
\[
\calM, w' \nVdash B \quad \text{hence} \quad \calM, w' \Vdash B \to A(d).
\]
Moreover, for $w' \succcurlyeq u$, as $\calM, u \Vdash B \to A(d)$ we have
\[
\calM, w' \Vdash B \to A(d).
\]
Hence, by definition $\calM, w \Vdash B \to A(d)$. As $F$ is constant domain, $d \in D_w$. Thus, $\calM, w \Vdash \exists x (B \to A(x))$.


2) Similar to the previous case. Let $F$ satisfy $\cWF$. We want to prove $F \vDash \SW$, i.e.,
\[
F \vDash (\forall x A(x) \to B) \to \exists x (A(x) \to B),
\]
for formulas $A(x)$ and $B$, where $x$ is not free in $B$. Thus, for any model $\calM$ based on $F$ and any $w \in W$, we will show
\[
\text{if} \quad \calM, w \Vdash \forall x A(x) \to B \quad \text{then} \quad \calM, w \Vdash \exists x (A(x) \to B).
\]
Suppose $\calM, w \Vdash \forall x A(x) \to B$, i.e., for any $v \succcurlyeq w$
\[
\text{if} \quad \calM, v \Vdash \forall x A(x) \quad \text{then} \quad \calM, v \Vdash B.
\]
There are two cases:

$\bullet$ $\calM, v \nVdash \forall x A(x)$ for all $v \succcurlyeq w$. As $F$ satisfies $\cWF$, the set $\preccurlyeq [w]$ is conversely well-founded and has a maximal element $u$. As $\calM, u \nVdash \forall x A(x)$, there exists $d \in \calD$ such that $\calM, u \nVdash A(d)$. As $F$ is constant domain, $d \in D_w$ and for any $w'$ such that $w \preccurlyeq w' \preccurlyeq u$ we have $\calM, w' \nVdash A(d)$. Therefore, $\calM, w \Vdash A(d) \to B$. Hence, $\calM, w \Vdash \exists x ( A(x) \to B)$.

$\bullet$ $\calM, v \Vdash \forall x A(x)$ for some $v \succcurlyeq w$. As $F$ satisfies $\cWF$ the set $S=\{v \succcurlyeq w \mid \calM, v \nVdash \forall x A(x)\}$ is conversely well-founded. As $S \subseteq \preccurlyeq [w]$, it has a maximal element $u$. Thus, 
\[
\calM, u \nVdash \forall x A(x).
\]
Hence, there is $d \in \mathcal{D}$ such that $\calM, u \nVdash A(d)$. Moreover, 
$\forall w' \succcurlyeq u$, where $w \neq u$,
we have $\calM, w' \Vdash \forall x A(x)$. Hence, $\calM, w' \Vdash B$. 
Now, we show that $\calM, w \Vdash A(d) \to B$. Thus we have to show that for any $w'\succcurlyeq w$: 
\[
 \calM, w' \Vdash A(d) \text{ implies }\calM, w' \Vdash B.
\]
For $w' \preccurlyeq u$, this clearly holds because $w'\not\Vdash A(d)$. For $w' \succ u$, as $\calM, w' \Vdash \forall x A(x)$ and $\calM, w \Vdash \forall x A(x) \to B$, it follows that 
$\calM, w' \Vdash B$. 
Thus we have shown that $\calM, w \Vdash A(d) \to B$. As $F$ is constant domain, $d \in D_w$. Thus, $\calM, w \Vdash \exists x (A(x) \to B)$.

3) Clear by the previous parts. In particular, if $F$ is finite, then it satisfies both $\WF$ and $\cWF$. 
\end{proof}

\ThmIncompleteness*
\begin{proof}
We will prove the theorem for $\mathsf{QFS}$. For the sake of contradiction, suppose a class $\mathcal{C}$ of Kripke frames exists such that $\QFS$ is sound and complete w.r.t. $\mathcal{C}$. Take the 
principle $\varphi:=\mathrm{Lin} \vee \mathrm{OEP}$ where
\[
\mathrm{Lin}:=(C \to D) \vee (D \to C) \quad \mathrm{OEP}:= \exists x A(x) \to \forall x A(x), 
\] 
for any formulas $C$, $D$, and $A(x)$. $\mathrm{Lin}$ is short for Linearity and $\mathrm{OEP}$ for One Element Principle. We will prove that 
all instances of $\phi$ hold (on all frames) in $\mathcal{C}$, but there exists an instance of $\phi$ not valid in $\QFS$.

1) First, we prove $\mathcal{C} \vDash \phi$. By assumption, $\QFS$ is sound and complete w.r.t. $\mathcal{C}$. Therefore, for any $F \in \mathcal{C}$, we have:
\[
F \vDash \CD \qquad F \vDash \ED \qquad F \vDash \SW.
\]
By Lemma \ref{lem: constant domain} and Corollary \ref{Cor: linear or singleton}, $F$ is constant domain, with the domain $\calD$, and either $|\calD|=1$ or $|\calD|>1$ and $F$ is linear. If $|\calD|=1$, then clearly $F \vDash \mathrm{OEP}$. So, suppose $F$ is linear. By Definition \ref{Def: linear}, for any worlds $w,w' \in W$ we have $w \preccurlyeq w'$ or $w' \preccurlyeq w$. Suppose $F \not \vDash \mathrm{Lin}$. Hence, there is a model $\calM$ based on $F$ and a world $w$ such that an instance of $\mathrm{Lin}$ is not valid there
\[
\calM, w \nVdash C \to D \qquad \calM, w \nVdash D \to C.
\]
Thus, there are worlds $v_1 \succcurlyeq w$ and $v_2 \succcurlyeq w$ such that 
\[
\calM, v_1 \Vdash C \quad \calM, v_1 \nVdash D \quad \calM, v_2 \Vdash D \quad \calM, v_2 \nVdash C
\]
However, as $F$ is linear, we have either $v_1 \preccurlyeq v_2$ or $v_2 \preccurlyeq v_1$, which is a contradiction. Therefore, $\calc \vDash \phi$.

2) There is an instance of $\phi$ not valid in $\QFS$. To show this, we present a model $\calM$ for $\QFS$ that does not validate that instance. Consider $\calM$ with a language containing only two nullary predicates $P, Q$, and a unary predicate $R(x)$:
\[
  \xymatrix{
 w_2 \forc R(a), R(b), P & &   w_3 \forc  R(a), R(b), Q \\
  &\hspace{-25pt}  w_1 \forc R(a) \ar[lu] \ar[ru] & 
  } 
\]
where $D_{w_1}=D_{w_2}=D_{w_3}=\{a,b\}$.
First, note that as 
\[
\calM, w_2 \Vdash P \quad \calM, w_2 \nVdash Q \quad \calM, w_3 \Vdash Q \quad \calM, w_3 \nVdash P 
\]
we have
\[
\calM, w_1 \nVdash (P \to Q) \vee (Q \to P).
\]
Moreover, as $\calM, w_1 \Vdash R(a)$ and $\calM, w_1 \nVdash R(b)$ we have
\[
\calM, w_1 \nVdash \exists x R(x) \to \forall x R(x).
\]
Hence, $\calM$ is not a model of the following instance of $\phi$:
\[
\calM, w_1 \nVdash ((P \to Q) \vee (Q \to P)) \vee (\exists x R(x) \to \forall x R(x)).
\]
Now, we want to show that $\mathcal{M}$ is a model of $\mathsf{QFS}$. 

\textbf{Claim.} Take the model $\mathcal{M}$ as above. For any formula $\psi(x)$
\[
\text{either} \quad \calM, w_2 \Vdash \forall x \psi(x)
\quad \text{or} \quad  \calM, w_2 \Vdash \forall x \neg \psi(x)
\]
and
\[
\text{either} \quad \calM, w_3 \Vdash \forall x \psi(x)
\quad \text{or} \quad  \calM, w_3 \Vdash \forall x \neg \psi(x).
\] 
In other words, in any of the final worlds $w_2$ or $w_3$, either both $\psi(a)$ and $\psi(b)$ are valid, or both of them are invalid.

Suppose the claim holds. We prove that $\calM, w_1 \forc \QFS$, because clearly $\calM, w_2 \forc \QFS$ and $\calM, w_3 \forc \QFS$.

\item[$\bullet$] As $F$ is constant domain, $\calM, w_1 \Vdash \CD$.

\item[$\bullet$] To show $\calM, w_1 \Vdash \ED$, for the sake of contradiction assume
\[
\calM, w_1 \Vdash B \to \exists x A(x) \qquad \calM, w_1 \nVdash \exists x (B \to A(x)) \tag{$*$}
\]
for formulas $A(x)$ and $B$ where $x$ is not free in $B$. Therefore,
\[
\calM, w_1 \nVdash B \to A(a)  \qquad \calM, w_1 \nVdash B \to A(b). 
\]
Thus, $\exists i, j \in \{1,2,3\}$ such that
\[
\calM, w_i \Vdash B  \quad \calM, w_i \nVdash A(a) \quad \calM, w_j \Vdash B  \quad \calM, w_j \nVdash A(b). 
\]
The following hold for $i$ and $j$:
\begin{itemize}
\item[$\blacktriangleright$]
$i \neq j$: because if $i=j$, then $\calM, w_i \Vdash B$, $\calM, w_i \nVdash A(a)$, and $\calM, w_i \nVdash A(b)$. As $w_i \succcurlyeq w_1$ this means $\calM, w_1 \nVdash B \to \exists x A(x)$, a contradiction with ($*$).

\item[$\blacktriangleright$]
$i \neq 1$: because if $i=1$, then $\calM, w_1 \Vdash B$ and $\calM, w_1 \nVdash A(a)$. By ($*$), $\calM, w_1 \Vdash A(b)$. However, as $w_j \succcurlyeq w_1$ this means $\calM, w_j \Vdash A(b)$, a contradiction.

\item[$\blacktriangleright$]
$j \neq 1$: similar as the above case.
\end{itemize}
Therefore, w.l.o.g. we can assume $i=2$ and $j=3$. Thus,
\[
\calM, w_2 \Vdash B  \quad \calM, w_2 \nVdash A(a).
\]
However, as $\calM, w_1 \Vdash B \to \exists x A(x)$ we have $\calM, w_2 \Vdash A(b)$, which is a contradiction with the claim.

\item[$\bullet$]  To show $\calM, w_1 \Vdash \SW$, for the sake of contradiction assume
\[
\calM, w_1 \Vdash \forall x A(x) \to B \qquad \calM, w_1 \nVdash \exists x (A(x) \to B) \tag{$\dagger$}
\]
for formulas $A(x)$ and $B$ where $x$ is not free in $B$. Therefore,
\[
\calM, w_1 \nVdash A(a) \to B  \qquad \calM, w_1 \nVdash A(b) \to B. 
\]
Thus, $\exists i, j \in \{1,2,3\}$ such that
\[
\calM, w_i \Vdash A(a)  \quad \calM, w_i \nVdash B \quad \calM, w_j \Vdash A(b)  \quad \calM, w_j \nVdash B. 
\]
The following hold for $i$ and $j$:
\begin{itemize}
\item[$\blacktriangleright$]
$i \neq j$: because if $i=j$, then $\calM, w_i \Vdash \forall x A(x)$. Hence, by $(\dagger)$ we get $\calM, w_i \Vdash B$, a contradiction.

\item[$\blacktriangleright$]
$i \neq 1$: because if $i=1$, then $\calM, w_j \Vdash \forall x A(x)$. Hence, by $(\dagger)$ we get $\calM, w_j \Vdash B$, a contradiction.

\item[$\blacktriangleright$]
$j \neq 1$: similar as the above case.
\end{itemize}
Therefore, w.l.o.g. we can assume $i=2$ and $j=3$. Thus,
\[
\calM, w_2 \Vdash A(a)  \quad \calM, w_2 \nVdash B.
\]
If $\calM, w_2 \Vdash A(b)$, it means that $\calM, w_2 \Vdash \forall x A(x)$ and hence by $(\dagger)$ we have $\calM, w_2 \Vdash B$, which is not possible. Therefore, $\calM, w_2 \nVdash A(b)$, which is a contradiction with the claim.

These two points together prove that $\mathsf{QFS}$ is frame-incomplete. The proofs for the fragments are similar.

\textbf{Proof of the claim:} By an easy induction on the structure of $\psi(x)$. Suppose $\psi(x)$ is an atomic formula. If $x$ is not free in $\psi(x)$, then it is either $P$ or $Q$ and the claim holds trivially. If $\psi(x)=R(x)$, then by definition
\[
\calM, w_2 \Vdash \forall x R(x) \qquad \calM, w_3 \Vdash \forall x R(x).
\]
Now, assume the claim holds for formulas $\alpha(x)$ and $\beta(x)$. We investigate the case of $w_2$. The case of $w_3$ is similar. Note that $w_2$ is a final world and $D_{w_2}=\{a,b\}$. There are four cases:
\[
\calM, w_2 \Vdash \forall x \alpha(x) \qquad \calM, w_2 \Vdash \forall x \beta(x) \tag{1}
\]
\[
\calM, w_2 \Vdash \forall x \alpha(x) \qquad \calM, w_2 \Vdash \forall x \neg \beta(x) \tag{2}
\]
\[
\calM, w_2 \Vdash \forall x \neg \alpha(x) \qquad \calM, w_2 \Vdash \forall x \beta(x) \tag{3}
\]
\[
\calM, w_2 \Vdash \forall x \neg \alpha(x) \qquad \calM, w_2 \Vdash \forall x \neg \beta(x) \tag{4}
\]
$\bullet$ The case $\alpha(x) \wedge \beta(x)$. It is easy to see that
\begin{align*}
\calM, w_2 \Vdash \forall x (\alpha(x) \wedge \beta(x)) \tag{in case of (1)}\\
\calM, w_2 \Vdash \forall x \neg (\alpha(x) \wedge \beta(x)) \tag{in case of (2), (3), (4)}
\end{align*}

$\bullet$ The case $\alpha(x) \vee \beta(x)$. It is easy to see that
\begin{align*}
\calM, w_2 \Vdash \forall x (\alpha(x) \vee \beta(x)) \tag{in case of (1), (2), (3)}\\
\calM, w_2 \Vdash \forall x \neg (\alpha(x) \vee \beta(x)) \tag{in case of (4)}
\end{align*}

$\bullet$ The case $\alpha(x) \to \beta(x)$. It is easy to see that
\begin{align*}
\calM, w_2 \Vdash \forall x (\alpha(x) \to \beta(x)) \tag{in case of (1), (3), (4)}\\
\calM, w_2 \Vdash \forall x \neg (\alpha(x) \to \beta(x)) \tag{in case of (2)}
\end{align*}

$\bullet$ 
For the case of the quantifiers, consider a variable $y$ that may or may not appear in $\alpha(x)$ and may or may not be equal to $x$. We have to show that for $Q \in \{\E,\A\}$: 
\begin{equation}
 \label{qfcase}
 \mathcal{M}, w_2 \Vdash \A x Q y \alpha(x) \text{ or } \mathcal{M}, w_2 \Vdash \A x \neg Q y \alpha(x). \tag{$\dagger$}
\end{equation}
In the case that $\mathcal{M}, w_2 \Vdash \A x \alpha(x)$, it follows by definition that $\mathcal{M}, w_2 \Vdash \A y\A x \alpha(x)$, which implies the left part of \eqref{qfcase}, both in the case that $Q=\A$ and that $Q=\E$. In the case that $\mathcal{M}, w_2 \Vdash \A x \neg\alpha(x)$, it follows that $\mathcal{M}, w_2 \Vdash \A y\A x \neg\alpha(x)$, which implies the right part of \eqref{qfcase}, both in the case that $Q=\A$ and that $Q=\E$.
\end{proof}

\LemPrenex*

\begin{proof}
The {\it nesting depth} of a quantifier is defined inductively as follows. The nesting depth of a formula without quantifiers is zero. If the nesting depth of formulas $C$ and $D$ is $m$ and $n$, respectively, then for $\circ \in \{\en,\of, \imp\}$ a quantifier occurrence in $C$ has nesting depth $n+1$ in $C \circ D$ and nesting depth $n$ in $\E y C(y)$ and $\A y C$. Thus, one counts how ``deep'' a quantifier occurs in terms of connectives, but not in terms of quantifiers. 
For a given formula $A$, let the {\it nesting depth of quantifiers} of $A$, denoted $ndq(A)$, be the sum of the nesting depths of the quantifier occurrences in $A$. 

We prove the lemma with induction to $ndq(A)$ and a subinduction to the complexity of $A$.  In case $ndq(A)=0$, then $A$ does not contain nested quantifies and thus is in prenex normal form. Consider the case that $ndq(A)>0$. Thus $A$ cannot be an atomic formula and therefore has to be of the form $C \circ D$, for some $\circ \in \{\en,\of, \imp\}$, or $Q x C(x)$ for some $Q \in \{\A,\E\}$. 

We start with the first case. There has to be at least one quantifier in $C$ or $D$. Note that $C$ is less complex than $A$ and that $ndq(C)\leq ndq(A)$, and thus, by the induction hypothesis there is a $C'$ that is equivalent to $C$ and in prenex normal form. Since $ndq(C')=0$, $ndq(C'\en D)\leq ndq(C \en D)$. By renaming variables we can assume that $C'=\A xE(x)$ or $C'=\E xE(x)$ and $x$ is not free in $D$. Let $\circ \in \{\en,\of, \imp\}$. 

In case  $C'=\A xE(x)$, for $\circ \in \{\en, \of\}$ we have $L \vdash A \ifff \A x(E(x) \circ D)$, where we use $\CD$ in the case that $\circ=\of$. Note that $ndq(\A x(E(x) \circ D)) < ndq(C'\en D)$ and we can apply the induction hypothesis. In case $\circ =\imp$, we have $L \vdash A \ifff \E x(E(x) \imp D)$ by using $\SW$, and since $ndq(\E x(E(x) \imp D)) < ndq(C'\imp D)$, we can apply the induction hypothesis.

In case  $C'=\E xE(x)$, for $\circ \in \{\en, \of\}$ we have $L \vdash A \ifff \E x(E(x) \circ D)$, where we do not have to use principles from $\QFS$, but just from $\IQC$. In case $\circ =\imp$, we have $L \vdash A \ifff \E x(E(x) \imp D)$ by using $\ED$, and since $ndq(\E x(E(x) \imp D))< ndq(C'\imp D)$, we can apply the induction hypothesis.

We turn to the case that $A = Q x C(x)$ for some $Q \in \{\A,\E\}$. Note that $ndq(C) =  ndq(A)$, but $C$ is less complex then $A$. Therefore, by the subinuction hypothesis we can assume that $D$ is in prenex normal form. But then so is $A$. 
\end{proof}

\ThmFragments*

\begin{proof}
Use similar models as in the proof of Lemma \ref{lem: WF}. The proof is similar to the proof of Lemma \ref{lem: WF validates ED}.

\item[1)]  We have to provide a model $\calM$ such that \[
\calM \vDash \IQC+\{\mathsf{CD}, \mathsf{SW}\} \qquad \text{and} \qquad \calM \not \vDash \ED.
\]    
Take model $\calM =(W , \preccurlyeq, D)$ where $W$ consists of an infinite descending chain $v_0 \succcurlyeq v_1 \succcurlyeq v_2 \succcurlyeq \dots$, called $C$, and a world $w$ such that $w \preccurlyeq v_i$ for each $v_i \in C$. Let $\calM$ be a constant domain model, where the domain is countably infinite, $\calD=\{a_0, a_1, \dots\}$. Then, construct $\calM$ as:
\begin{align*}
\calM , v_i & \Vdash Q & i \geq 1, v_i \in C \\
\forall d \in \calD \ \ \calM , v_0 & \Vdash P(d) \\ 
\forall d \in \calD \ \ \calM , w & \nVdash P(d)   \\ 
\forall d \in \calD - \{a_0, \dots, a_{i-1}\} \ \ \calM, v_i & \Vdash P(d) & i \geq 1, v_i \in C
\end{align*}
depicted as:
\[
  \xymatrix{ 
            \hspace{-15pt}   \forall d \in \mathcal{D} \quad   v_0 \forc Q, P(d)  \\
           \hspace{-45pt}   \forall d \in \mathcal{D} - \{a_0\} \quad   v_1 \forc Q, P(d)  \ar[u]    \\
         \hspace{-58pt}   \forall d \in \mathcal{D} - \{a_0, a_1\} \quad   v_2 \forc Q, P(d)  \ar[u]    \\
           \vdots \ar[u] \\
           w \ar[u]
           }
\]
Similar to the proof of the first item in Lemma \ref{lem: WF}, the following instance of $\ED$ is not valid in $\calM$:
\[
\calM, w \nVdash (Q \to \exists x P(x)) \to \exists x (Q \to P(x)).
\]
As $\calM$ is constant domain, it is a model of $\CD$. Moreover, similar to the proof of the second item in Lemma \ref{lem: WF validates ED}, $\calM$ is a model of $\SW$.

\item[2)]  We have to provide a model $\calM$ such that \[
\calM \vDash \IQC+\{\mathsf{CD}, \mathsf{ED}\} \qquad \text{and} \qquad \calM \not \vDash \SW.
\]    
Take model $\calM =(W , \preccurlyeq, D)$ where $W$ only consists of an infinite descending chain $v_0 \succcurlyeq v_1 \succcurlyeq v_2 \succcurlyeq \dots$. Let $\calM$ be a constant domain model, where the domain is countably infinite, $\calD=\{a_0, a_1, \dots\}$. Then, construct $\calM$ as:
\begin{align*}
\calM , u & \nVdash Q & \forall u \in W \\
\forall d \in \{a_0, \dots, a_{i}\} \ \ \calM, v_i & \Vdash P(d) & \forall i \geq 0
\end{align*}
depicted as:
\[
  \xymatrix{ 
             \vdots \\
           \hspace{45pt}  v_1 \forc P(a_0), P(a_1) \ar[u] \\ 
      \hspace{15pt}   v_0 \forc P(a_0) \ar[u]
           }
\]
Therefore, $\calM, w \nVdash \forall x P(x)$. Thus,
\[
\calM, w \Vdash \forall x P(x) \to Q.
\]
However, as $\calM, v_i \forc P(a_i)$  for every $a_i \in \calD$, we get $\calM, w \nVdash \exists x (P(x) \to Q)$. Hence,
\[
\calM, w \nVdash (\forall x P(x) \to Q) \to \exists x (P(x) \to Q).
\]
On the other hand, as the underlying frame is constant domain, linear and satisfies $\WF$, then $\calM$ is a model for $\ED$.

\item[3)] We have to show that for any model $\mathcal{M}$ 
\[
\text{if} \quad \mathcal{M} \vDash \mathsf{IQC}+\SW \qquad \text{then} \qquad \mathcal{M} \vDash \CD.
\] 
We proceed with the proof by contraposition, i.e., for any model $\mathcal{M}$ of $\IQC$ 
\[
\text{if} \quad \mathcal{M} \not \vDash \CD \qquad \text{then} \qquad \mathcal{M} \not \vDash \SW,
\] 
namely we find an instance of the axiom $\SW$ such that it is not valid in $\mathcal{M}$. Since $\mathcal{M} \not \vDash \CD$, there is a world $u$ and an instance of the axiom $\CD$ such as
\[
B:= \forall x (R(x)\vee P) \to \forall x R(x) \vee P
\]
such that 
\[
\calM, u \nVdash B,
\]
where $P$ is a nullary predicate and $R(x)$ a unary one. As $\calM, u \nVdash B$, there exists a world $w \succcurlyeq u$ such that
\[
\calM, w \Vdash \forall x (R(x)\vee P)  \qquad \calM, w \nVdash \forall x R(x) \vee P.
\]
The right statement implies
\[
\calM, w \nVdash P \qquad \calM, w \nVdash \forall x R(x).
\]
Therefore, we have 
\begin{align*}
\calM, w & \Vdash \forall x (R(x)\vee P) \tag{i} \\
\calM, w & \nVdash P \tag{ii} \\
 \exists w' \succcurlyeq w \; \exists a \in D_{w'} \; \calM, w' & \nVdash R(a).\tag{iii}
\end{align*}
Thus,
\begin{align*}
\calM, w & \nVdash B \tag{by the choice of $w$}\\
\forall d \in D_w \ \calM, w & \Vdash R(d) \tag{by (i),(ii)} \\
a & \notin D_{w} \tag{by (iii) and the previous line}\\
\calM, w' & \Vdash P \tag{by (i) and $w' \succcurlyeq w$}
\end{align*}
This part of the model $\mathcal{M}$ is depicted as:
\[
  \xymatrix{
           \hspace{-10pt}  \ \ w' \Vdash P \\
        \vdots \ar[u]  \\
           \hspace{-25pt}   \forall d \in D_w \ w \Vdash R(d) \ar[u]
           }
\]
Now, take the following instance of the axiom $\SW$:
\[
A: = \big(\forall x R(x) \to B \big) \to \exists x (R(x) \to B).
\]
Note that the variable $x$ is not free in $B$. Now, we want to show that $\mathcal{M} \not \vDash A$. More precisely, we will prove
\[
\calM, w \nVdash \big(\forall x R(x) \to B \big) \to \exists x (R(x) \to B)
\]
by showing that
\begin{align*}
\calM, w & \Vdash  \forall x R(x) \to B \tag{*} \\
\calM, w & \nVdash  \exists x (R(x) \to B) \tag{\dag}
\end{align*}

To prove (*), for any $v \succcurlyeq w$ if we have $\calM, v \Vdash \forall x R(x)$, then $\calM, v \Vdash \forall x R(x) \vee P$. This means that for any $v \succcurlyeq w$ the succedent of $B$ is valid in $v$ which means $\calM, v \Vdash B$. Hence, $\calM, w  \Vdash  \forall x R(x) \to B$.

To prove $(\dag)$, suppose otherwise, i.e., $\calM, w \Vdash  \exists x (R(x) \to B)$. By Definition \ref{Def: Kripke model}, 
\[
\exists d \in D_w \ \forall v \succcurlyeq w \quad \text{if} \quad \calM, v \Vdash R(d) \quad \text{then} \quad \calM, v \Vdash B. 
\]
However, $w \succcurlyeq w$ and $\forall d \in D_w \; \calM, w \Vdash R(d)$ and $\calM, w \nVdash B$. 
Hence, $(\dag)$ holds.

\item[4)] We provide a model\footnote{The credit of the proof goes to an unpublished note by Vítězslav Švejdar.} $\calM$ such that 
\[
\calM \vDash \iqc +\ED \qquad \text{but} \qquad \calM \nvDash \CD.
\]
Let the language only consist of two unary predicates $P$ and $Q$.
Take model $\calM=(W , \preccurlyeq, D)$, where $W$ consists of an infinite ascending chain $w_0 \preccurlyeq w_1 \preccurlyeq \dots$, called $C$, and a final world $\delta$ such that $\delta \succcurlyeq w_i$ for each $w_i \in C$. Denote the domain of each $w_i$ by $D_i=\{0, 1, \dots, i\}$ for any $i \geq 0$ and $D_\delta=\mathbb{N}$, where $\mathbb{N}$ is the set of natural numbers.
\begin{align*}
\calM , w_0 & \Vdash P(0) & \\
\calM , w_0 & \nVdash Q(0) & \\
\forall d \in D_{i-1} \quad \calM , w_i & \Vdash P(d) & i \geq 1, w_i \in C \\
\forall d \in D_{i} \quad \calM , w_i & \Vdash Q(d) & i \geq 1, w_i \in C \\
\forall d \in D_{\delta} \quad \calM , \delta & \Vdash P(d), Q(d) & 
\end{align*}
depicted as:
\[
  \xymatrix{ 
 \hspace{40pt} D_\delta= \mathbb{N} \ \delta \forc P(d), Q(d) \ \forall d \in \mathbb{N} \\
            \vdots \ar[u] \\
           \hspace{35pt}  D_2=\{0,1,2\} \ w_2 \forc P(0), P(1), Q(0), Q(1), Q(2) \ar[u] \\ 
         \hspace{7pt}  D_1=\{0,1\} \ w_1 \forc P(0), Q(0), Q(1) \ar[u] \\ 
     \hspace{-30pt} D_0=\{0\} \   w_0 \forc P(0) \ar[u]
           }
\]
$\calM$ is not a model for $\CD$, because
\[
\calM, w_1 \forc \forall y Q(y) \qquad \calM, w_0 \forc \forall x (\forall y Q(y) \vee P(x)),
\]
and 
\[
\calM, w_0 \nvDash \forall y Q(y) \qquad \calM, w_0 \nvDash \forall x P(x). 
\]
Thus,
\[
\calM, w_0 \nvDash \forall x (\forall y Q(y) \vee P(x)) \to (\forall y Q(y) \vee \forall x P(x)). 
\]
Now, we want to show that $\calM$ is a model for $\ED$, i.e., $\forall w \in W$
\[
\calM, w \forc B \to \exists x A(x) \quad \text{then} \quad \calM, w \forc \exists x (B \to A(x)) \tag{$*$}
\]
for formulas $A$ and $B$, where $x$ is not free in $B$. The following claim is easy to prove:

\textbf{Claim 1. (homogeneity of $\delta$)}  For any formula $\varphi$, if $\calM, \delta \forc \varphi[e]$ for some valuation $e$, then $\delta \forc \varphi[e]$ for all valuations $e$. 

Now, define a \emph{setup} as a pair $\left[w_n, e\right]$, where $n \neq 0$ and $e$ is a valuation of variables in $D_{w_n}$. We say two setups $\left[w_n, e_1\right]$ and $\left[w_m, e_2\right]$ are \emph{similar} when $e_1(y)=n$ if{f} $e_2(y)=m$, for each variable $y$. 

\textbf{Claim 2.} For any formula $\phi$, if $\left[w_n, e_1\right]$ and $\left[w_m, e_2\right]$ are similar setups, then 
\[
w_n  \forc \varphi\left[e_1\right] \qquad \text{if{f}} \qquad w_m \forc \varphi\left[e_2\right].
\]
Now, to prove that $\calM$ is a model of $\ED$, assume $\calM, w_0 \forc B \to \exists x A(x)$, where $x$ is not free in $B$. Then, if $B$ is valid nowhere, clearly $\calM, w_0 \forc B \to A(0)$, hence $\calM, w_0 \forc \exists x (B \to A(x))$. Therefore, assume there exists a world such that $B$ is valid in it. Take $v=$ min $\{w \in W \mid \calM, w \forc B\}$. There are three cases:

\item[$\bullet$] $v=w_0$. Then, clearly $\calM, v \forc A(0)$.

\item[$\bullet$] $v = \delta$. Then, $\calM, \delta \forc \exists x A(x)$. By claim 1, $\calM, \delta \forc A(0)$.

\item[$\bullet$] $v=w_i$ for an $i \geq 1$. Thus,
\[
\exists d \in \{0, 1, \dots, i\} \qquad \text{such that} \qquad \calM, w_i \forc A(d).
\]
If $d \neq i$, then by Claim 2, $\calM, w_i \forc A(0)$. If $d=i$, then $\calM, w_{i+1} \forc A(d)$. Note that $d \neq i+1$. Thus, by Claim 2, $\calM, w_i \forc A(0)$.

In each case, as $v$ is the smallest world such that $B$ is valid in it, and $\calM, v \forc A(0)$, we get $\calM, w_0 \forc B \to A(0)$, i.e.,
\[
\calM, w_0 \forc \exists x (B \to A(x)).
\]
Similarly we can prove $(*)$ holds for any $w \in W$, i.e., $\ED$ is valid in $\calM$.

\textbf{Proof of Claim 2.} We use induction on the complexity of $\varphi$. The only challenging case is when $\varphi$ is $\psi \rightarrow \chi$. If $w_n \forc (\psi \rightarrow \chi)\left[e_1\right]$, then take a world $\beta \succcurlyeq w_m$ such that $\beta \forc \psi\left[e_2\right]$. The cases $\beta=w_m$ and $\beta=\delta$ are straightforward. If $\beta \neq w_m$ and $\beta \neq \delta$, it is sufficient to notice that the setups $\left[w_{n+1}, e_1\right]$ and $\left[\beta, e_2\right]$ are similar. Assume that $\varphi$ is $\forall x \psi$ and $w_n \forc (\forall x \psi)\left[e_1\right]$. Take a node $\beta \succcurlyeq w_m$ and an element $b \in D_{\beta}$. If $\beta=\delta$, then $\beta \forc \psi\left[e_2(x / b)\right]$ by persistency and the homogeneity of $\delta$. If $\beta \neq \delta$, then depending on whether $\beta=w_m$ or $\beta \neq w_m$, it is possible to choose an $a$ such that $\left[\beta, e_2(x / b)\right]$ is similar to $\left[w_n, e_1(x / a)\right]$ or to $\left[w_{n+1}, e_1(x / a)\right]$.

\vspace{10pt}

\noindent \textbf{A syntactic proof for 3)}

We could prove $\IQC + \SW \vdash \CD$ syntactically as follows. Using
\[
\LJ \vdash A(t), A(t) \to \forall x A(x) \vee B \Rightarrow \forall x A(x) \vee B
\]
and 
\[
\LJ \vdash B, A(t) \to \forall x A(x) \vee B \Rightarrow \forall x A(x) \vee B
\]
we get
\[
\LJ \vdash A(t) \vee B, A(t) \to \forall x A(x) \vee B \Rightarrow \forall x A(x) \vee B
\]
and we have in $\LJ$:
\begin{center} 
\AxiomC{$A(t) \vee B, A(t) \to \forall x A(x) \vee B \Rightarrow \forall x A(x) \vee B$}
 \UnaryInfC{$\forall x (A(x) \vee B), A(t) \to \forall x A(x) \vee B \Rightarrow \forall x A(x) \vee B$}
 \UnaryInfC{$\forall x (A(x) \vee B), \exists x (A(x) \to \forall x A(x) \vee B) \Rightarrow \forall x A(x) \vee B$}
 \DisplayProof
\end{center}
\normalsize Moreover, using the fact that the sequents
\[
\Rightarrow (\forall x A(x) \to \forall x A(x) \vee B) \to \exists x (A(x) \to \forall x A(x) \vee B)\]
and
\[
\Rightarrow \forall x A(x) \to \forall x A(x) \vee B
\]
are provable in $\mathbf{LJ}+\SW$, we get 
\[
\mathbf{LJ}+\SW \vdash  \Rightarrow \exists x (A(x) \to \forall x A(x) \vee B)
\]
\normalsize where the first sequent is an instance of the axiom $\SW$.
Therefore, using the cut rule on the above sequent and the conclusion of the proof tree, we get the axiom $\CD$
\[
\forall x (A(x) \vee B) \Rightarrow \forall x A(x) \vee B.
\]
\end{proof}

\end{document}